\RequirePackage{fix-cm}
\RequirePackage{amsmath}
\documentclass[smallcondensed]{svjour3} 
\smartqed  
\usepackage{graphicx}
\usepackage{latexsym,enumerate,amssymb,amsbsy}
\usepackage{graphics,color}
\usepackage{verbatim}
\usepackage{url}
\usepackage{dblfloatfix}
\hyphenation{op-tical net-works semi-conduc-tor}
\usepackage{mathptmx}
\usepackage{tikz}
\usepackage{xfrac}
\usepackage{amsfonts}
\usepackage[noadjust]{cite}
\smartqed
\usepackage{algorithm}
\usepackage{algpseudocode}
\usepackage{enumitem}
\newcommand{\edit}[1]{{\color{red}({\bf Note}: #1)}}

\newcommand{\Mod}[1]{\ (\textup{mod}\ #1)}
\newcommand{\F}{\mathbb F}
\newcommand{\N}{\mathbb N}
\newcommand{\Z}{\mathbb Z}
\def\v{{\vec{v}}}
\def\S{{\vec{S}}}
\def\u{{\vec{u}}}
\def\s{{\vec{s}}}
\def\0{{\vec{0}}}
\def\1{{\vec{1}}}
\def\a{{\vec{a}}}
\def\b{{\vec{b}}}
\def\c{{\vec{c}}}
\def\p{{\vec{p}}}
\def\q{{\vec{q}}}
\def\vX{{\vec{X}}}
\def\O{{\mathcal{O}}}
\def\M{{\mathcal{M}}}
\def\I{{\mathcal{I}}}
\def\P{{\mathcal{P}}}
\def\A{{\mathcal{A}}}
\DeclareMathOperator{\lcm}{lcm}
\hyphenation{op-tical net-works semi-conduc-tor}
\journalname{}
\begin{document}
\title{Construction of de Bruijn Sequences from Product of Two Irreducible Polynomials}
\author{Zuling Chang \and Martianus Frederic Ezerman \and\\San Ling \and Huaxiong Wang}
\authorrunning{Chang \and Ezerman \and Ling \and Wang}
\institute{Z. Chang \at School of Mathematics and Statistics, Zhengzhou University, Zhengzhou 450001, China\\
\email{zuling\textunderscore chang@zzu.edu.cn}
\and
M. F. Ezerman \and S. Ling \and H. Wang \at Division of Mathematical Sciences, School of Physical and Mathematical Sciences,\\
Nanyang Technological University, 21 Nanyang Link, Singapore 639798\\
\email{\{fredezerman,lingsan,HXWang\}@ntu.edu.sg}
}
\date{Received: date / Accepted: date}
\maketitle
\begin{abstract}
We study a class of Linear Feedback Shift Registers (LFSRs) with characteristic polynomial $f(x)=p(x)q(x)$ where
$p(x)$ and $q(x)$ are distinct irreducible polynomials in $\F_2[x]$. Important properties of the LFSRs, such as the cycle structure and the adjacency graph, are derived. A method to determine a state belonging to each cycle and a generic algorithm to find all conjugate pairs shared by any pair of cycles are given. The process explicitly determines the edges and their labels in the adjacency graph. The results are then combined with the cycle joining method to efficiently construct a new class of de Bruijn sequences. An estimate of the number of resulting sequences is given. In some cases, using cyclotomic numbers, we can determine the number exactly.
\keywords{Binary periodic sequence \and de Bruijn sequence \and cycle structure \and adjacency graph \and cyclotomic number}
\subclass{11B50 \and 94A55 \and 94A60}
\end{abstract}

\section{Introduction}\label{sec:intro}
A binary {\it de Bruijn sequence} of order $n$ is a binary sequence with period $N=2^n$
in which each $n$-tuple occurs exactly once in one period of the sequence. There are $2^{2^{n-1}-n}$
such sequences~\cite{Bruijn46}.

De Bruijn sequences have been studied for a long time using diverse mathematical tools and often show up in multiple
disguises~\cite{Ral82}. They have many applications in communication systems, coding theory, and cryptography due to their attractive characteristics, such as having long period and large linear complexity, and being balanced~\cite{Chan82,Golomb81}. Fredricksen's survey~\cite{Fred82} discusses their various properties and constructions.

A well-known construction approach called the {\it cycle joining (CJ) method} (see {\it e.g.},~\cite{Fred82} and~\cite{Golomb81}) joins all cycles produced by a given Feedback Shift Register (FSR) into a single cycle. Since the cycle structure of a Linear FSR (LFSR) has been well-studied, it is natural to construct de Bruijn sequences by applying the cycle joining method to LFSRs. Some LFSRs with simple cycle structure, such as the maximal length LFSRs, pure cycling registers, and pure summing registers, have been used to generate de Bruijn sequences using the said method in~\cite{Fred75,Fred82,Etzion84}.

Hauge and Helleseth established a connection between the cycles generated by LFSRs and irreducible cyclic codes in~\cite{HH96}. The number of de Bruijn sequences obtained from these LFSRs is related to cyclotomic numbers. The cycle structure and the adjacency graph of LFSRs with simple reducible polynomials are then studied and several classes of de Bruijn sequences are constructed from these LFSRs.

Recent studies have considered cases where the characteristic polynomials are products of some simple or primitive polynomials. In~\cite{Li14-1}, a class of de Bruijn sequences was derived from LFSRs with characteristic polynomials $(1+x)^3 p(x)$ with $p(x)$ a primitive polynomial of degree $n>2$. In~\cite{Li14-2} the focus was on characteristic polynomials $(1+x^3) p(x)$. The characteristic polynomials studied in~\cite{Li16} are products of primitive polynomials whose degrees are pairwise coprime. Hence, the sequences forming the cycle structure have coprime periods. Although this set up leads to a structure that can be nicely studied and described, in most cases the number of de Bruijn sequences that the construction yields is small when compared with the construction that we are proposing in this paper. An example in Section~\ref{sec:genpoly} will highlight this fact.

In this paper, we construct new de Bruijn sequences based on LFSRs with characteristic polynomials
$f(x)=p(x)q(x)$, where $p(x)$ and $q(x)$ are distinct irreducible polynomials.
We study the corresponding cycle structure and construct the adjacency graph. We propose
a method to find a set of representatives of the states, one belonging to each cycle, and design an algorithm to find
all conjugate pairs shared by any two cycles. Deploying the cycle joining method, we construct
the de Bruijn sequences and estimate their number. In some instances, the estimates are made exact.

This work contributes to the large literature on de Bruijn sequences on several fronts. We generalize the choices of characteristic polynomials to products of irreducible polynomials, instead of those of primitive polynomials. The structure of the resulting LFSRs is thoroughly studied. Our step-by-step construction of de Bruijn sequences from the LFSRs remains efficient to perform while handling more complex cycle structure,
yielding a large number of de Bruijn sequences. The resulting class contains many known ones as special cases. In particular, the class derived from product of two primitive polynomials is a subclass of our construction.
Finally, most of the methods developed in this paper generalize naturally to LFSRs with product of $s>2$ pairwise distinct irreducible polynomials as characteristic polynomials.

The paper is organized as follows. After this introduction come preliminary notions and known results in Section~\ref{sec:prelims}. Section~\ref{sec:cycle} presents the cycle structure. The main results are presented in Section~\ref{sec:main} in two parts. The first part determines the adjacency graph. The second part provides an algorithm to find all conjugate pairs between any two cycles and gives
a rough estimate of the number of constructed de Bruijn sequences. A detailed example in Section~\ref{sec:ex} showcases how the theoretical results fit together nicely in practice. Section~\ref{sec:spcases} examines three special cases where the characteristic polynomial
has certain simplifying properties. Section~\ref{sec:genpoly} briefly treats a more general case where the characteristic polynomial is the product of more than two irreducible polynomials. The last section contains a brief conclusion and some future directions.

\section{Preliminaries}\label{sec:prelims}
We use~\cite[Chapter~4]{GG05} as a main reference for this section.

An {\it $n$-stage shift register} is a circuit consisting of $n$ consecutive storage units, each containing a bit, regulated by a clock. As the clock pulses, the bit in each storage unit is shifted to the next stage in line. A shift register becomes a binary code generator when one adds a feedback loop which outputs a new bit $s_n$ based on the $n$ bits $\s_0= (s_0,\ldots,s_{n-1})$ called an {\it initial state} of the register. The corresponding {\it feedback function} $f(x_0,\ldots,x_{n-1})$ is the Boolean function that outputs $s_n$ on input $\s_0$.

A feedback shift register (FSR) outputs a binary sequence $\s=s_0,s_1,\ldots,s_n,\ldots$ satisfying the recursive relation $s_{n+\ell} = f(s_{\ell},s_{\ell+1},\ldots,s_{\ell+n-1})$ for $\ell = 0,1,2,\ldots$. For $N \in \N$, if $s_{i+N}=s_i$ for all $i \geq 0$, then $\s$ is {\it $N$-periodic}
or {\it with period $N$} and one writes $\s= (s_0,s_1,s_2,\ldots,s_{N-1})$.
We call $\s_i= (s_i,s_{i+1},\ldots,s_{i+n-1})$ {\it the $i$-th state} of $\s$ and states $\s_{i-1}$ and $\s_{i+1}$ the {\it predecessor} and {\it successor} of $\s_i$, respectively.

Given two sequences $\u= u_0,u_1,\ldots$ and $\v =v_0,v_1,\ldots$, the sum $\u + \v$ and the scalar multiple $c \u$ are
$\u+\v = u_0+v_0,u_1+v_1,\ldots$ and $c\u =c u_0, c u_1,\ldots$. A period of the sum is the least common multiple ($\lcm$) of
the periods of the given sequences.

For an FSR, distinct initial states generate distinct sequences. We collect all these sequences to form a set $\Omega(f)$ of cardinality $2^n$. All sequences in $\Omega(f)$ are periodic if and only if the feedback function $f$ is {\it nonsingular}, i.e., $f$ can be written as
\[
f(x_0,x_1,\ldots,x_{n-1})=x_0+g(x_1,\ldots,x_{n-1}),
\]
where $g(x_1,\ldots,x_{n-1})$ is some Boolean function with domain $\F_2^{n-1}$~\cite[page~116]{Golomb81}. In this paper, the feedback functions are all nonsingular. An FSR is called {\it linear} or an LFSR if its feedback function is linear, and {\it nonlinear} or an NLFSR otherwise.

The {\it characteristic polynomial} of an $n$-stage LFSR with feedback function
\[
f(x_0,x_1,\ldots,x_{n-1})= \sum_{i=0}^{n-1} c_i x_i
\]
is the polynomial $f(x)=x^n+\sum_{i=0}^{n-1}c_ix^i \in \F_2[x]$.
A sequence $\s$ may have many characteristic polynomials.
We call the monic characteristic polynomial with the lowest degree the {\it minimal polynomial} of $\s$.
It represents the LFSR of shortest length that generates $\s$.
More properties of the minimal polynomial can be found in~\cite[Sections 4.2 and 4.3]{GG05}.
For an LFSR with characteristic polynomial $f(x)$, the set $\Omega(f)$ is also denoted by $\Omega(f(x))$.
\begin{example}
A $3$-stage NLFSR with initial state $(110)$ and feedback function $f(x_0,x_1,x_2)=x_0+x_1 x_2 + x_2 + 1$ outputs $(1100~0101)$, a de Bruijn sequence with period $8$.
\end{example}

For a sequence $\s$, the {\it (left) shift operator} $L$ is given by
\[
L\s=L(s_0,s_1,\ldots,s_{N-1})=(s_1,s_2,\ldots,s_{N-1},s_0)
\]
with the convention that $L^0\s=\s$. The set $[\s]:=\{\s,L\s,L^2\s,\ldots,L^{N-1}\s \}$ is a {\it shift equivalent class} or a {\it cycle} in $\Omega(f)$. The set of sequences in $\Omega(f)$ can be partitioned into cycles.

If $\Omega(f(x))$ consists of exactly $r$ cycles $[\s_1],[\s_2],\ldots, [\s_r]$ for some $r \in \N$, then the {\it cycle structure} of $\Omega(f(x))$ is
\[
\Omega(f(x))=[\s_1] \cup [\s_2] \cup \ldots \cup [\s_r].
\]
When $r=1$, the corresponding FSR is of {\it maximal length} and its output sequences are de Bruijn sequences of order $n$. A nonzero output sequence of a maximal length $n$-stage LFSR is said to be an {\it $m$-sequence of order $n$} or a {\it maximal length sequence} (MLS).

A state $\v=(v_0,v_{1},\ldots,v_{n-1})$ and its {\it conjugate} $\widehat{\v}=(v_0 + 1,v_{1},\ldots,v_{n-1})$ form a {\it conjugate pair}. Cycles $C_1$ and $C_2$ are {\it adjacent} if they are disjoint and there exists $\v$ in $C_1$ whose conjugate $\widehat{\v}$ is in $C_2$.

Adjacent cycles $C_1$ and $C_2$ with the same feedback function $g(x_0,x_1,\ldots,x_{n-1})$ can be joined into a single cycle by interchanging the successors of $\v$ and $\widehat{\v}$. The corresponding feedback function of the resulting cycle is
\[
h(x_0,x_1,\ldots,x_{n-1})=g(x_0,x_1,\ldots,x_{n-1})+\prod_{i=1}^{n-1}(x_i+v_i+1).
\]

The basic idea in the cycle joining method is to provide the feedback functions of the new de Bruijn sequences by finding the corresponding
conjugate pairs. Determining the conjugate pairs between cycles is, therefore, a crucial step in constructing de Bruijn sequences.
\begin{definition}~\cite{Hauge96}
For an FSR with feedback function $f$, its {\it adjacency graph} $G$ is an undirected multigraph whose vertices correspond to the cycles in $\Omega(f)$. There exists an edge between two vertices if and only if they share a conjugate pair. The number of shared conjugate pairs labels the edge.
\end{definition}

When the edges connecting two vertices are considered pairwise distinct, there is a one-to-one correspondence between the spanning trees of the adjacency graph $G$ and the de Bruijn sequences constructed by the CJ method. The details can be found in~\cite{HH96} and~\cite{Hauge96}. The following result, a variant of the BEST (de {\bf B}ruijn, {\bf E}hrenfest, {\bf S}mith, and {\bf T}utte) Theorem adapted from~\cite[Section~7]{AEB87}, provides the counting formula.
\begin{theorem}(BEST)\label{BEST} Let $G$ be the adjacency graph of an FSR with vertex set $\{v_1,v_2,\ldots,v_{\ell}\}$. Let $G'$ be the graph obtained by removing all loops in $G$. Let $\M=(m_{i,j})$ be the $\ell \times \ell$ matrix derived from $G'$ in which $m_{i,i}$ is the sum of the labels on the edges incident to $v_i$ and $m_{i,j}$ is the negative of the label of edge $(v_i,v_j)$ for $i \neq j$. Then the number of the spanning trees of $G$ is the cofactor of any entry of $\M$.
\end{theorem}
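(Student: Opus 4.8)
The plan is to recognize the matrix $\M$ as the reduced weighted Laplacian of the multigraph $G'$ and to deduce the statement from Kirchhoff's Matrix--Tree theorem via the Cauchy--Binet formula. I would begin by justifying the two preprocessing conventions built into the statement. Deleting loops is harmless because a spanning tree never contains a loop, so $G$ and $G'$ have exactly the same spanning trees. Reading each edge label as a multiplicity is consistent with counting parallel edges as pairwise distinct: an edge of label $k$ between $v_i$ and $v_j$ is regarded as $k$ genuinely different edges, and a spanning tree using that connection may select any one of them.

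Next I would introduce the oriented incidence matrix $B$. Assign each (parallel) edge an arbitrary orientation, and let $B$ be the $\ell \times m$ matrix, where $m$ is the total number of edges counted with multiplicity, whose column for an edge directed from $v_a$ to $v_b$ has $+1$ in row $a$, $-1$ in row $b$, and $0$ elsewhere. A direct computation shows $\M = B B^{\top}$: the diagonal entry $m_{i,i}$ becomes the weighted degree of $v_i$ and $m_{i,j}$ becomes the negative of the number of edges joining $v_i$ and $v_j$, matching the description in the statement. Since every column of $B$ sums to zero, each row and column of $\M$ sums to zero; hence $\det \M = 0$, and by the standard adjugate argument ($\M \cdot \mathrm{adj}(\M) = \det(\M)\, I = 0$ forces the columns of $\mathrm{adj}(\M)$ into the one-dimensional kernel $\langle \1 \rangle$ when $G'$ is connected, and $\mathrm{adj}(\M)=0$ when $G'$ is disconnected, where there are no spanning trees anyway) all cofactors of $\M$ coincide. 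This legitimizes the phrase \emph{the cofactor of any entry}.

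Fix an index $i$ and write $B_i$ for $B$ with row $i$ deleted. The chosen cofactor equals the principal minor $\det(B_i B_i^{\top})$, and Cauchy--Binet expands it as
\[
\det(B_i B_i^{\top}) = \sum_{\substack{S \subseteq E \\ |S| = \ell - 1}} \det\bigl(B_i[S]\bigr)^{2},
\]
the sum running over all $(\ell-1)$-subsets of the edge multiset $E$ of $G'$, with $B_i[S]$ the corresponding column submatrix. The crux, and the step I expect to be the main obstacle, is the combinatorial lemma that $\det\bigl(B_i[S]\bigr) \in \{-1,0,+1\}$, with value $\pm 1$ precisely when $S$ forms a spanning tree of $G'$. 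I would prove this by induction on $\ell$ using a leaf: a spanning tree always has a degree-one vertex, whose unique incident edge gives a single $\pm 1$ entry in its row of $B_i[S]$, so expanding the determinant along that row reduces to the contracted tree on fewer vertices; conversely, if $S$ contains a cycle, the corresponding columns of $B_i$ are linearly dependent and the determinant vanishes. Granting the lemma, each spanning tree contributes exactly $(\pm 1)^{2} = 1$ while every other term is zero, so the cofactor equals the number of spanning trees of $G'$, hence of $G$, counted with the intended edge multiplicities. This completes the argument.
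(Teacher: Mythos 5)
Your proposal is correct, but it cannot be compared step-by-step with the paper's own argument for a simple reason: the paper gives no proof of this theorem at all. It states the result as a variant of the BEST theorem ``adapted from'' van Aardenne-Ehrenfest and de Bruijn and uses it as a black box. What you have written is a complete, self-contained proof of the statement, which is really Kirchhoff's Matrix--Tree theorem for multigraphs: identify $\mathcal{M}$ as the reduced Laplacian via $\mathcal{M}=BB^{\top}$ with $B$ the oriented incidence matrix, justify the phrase ``the cofactor of any entry'' by the adjugate-kernel argument (with the disconnected case correctly disposed of, since then all cofactors vanish and there are no spanning trees), expand a principal cofactor by Cauchy--Binet, and finish with the lemma that an $(\ell-1)$-column minor of $B_i$ equals $\pm 1$ exactly when the chosen columns form a spanning tree and $0$ otherwise. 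The loop-deletion and parallel-edge conventions are also handled exactly as the paper intends. The only point worth tightening is in your leaf induction: you must expand along the row of a leaf \emph{other than} the deleted vertex $v_i$, which is always possible because a tree on at least two vertices has at least two leaves. By way of comparison, the cited source derives such counting results through Eulerian circuits in directed graphs and oriented spanning trees (the original BEST machinery), which is also what underlies the correspondence between spanning trees and cycle-joined de Bruijn sequences invoked just before the theorem; your route through the incidence matrix and Cauchy--Binet is more elementary and purely linear-algebraic, at the cost of not exhibiting that Eulerian-circuit connection.
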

The {\it cofactor} of the entry $m_{i,j}$ in $\M$ is $(-1)^{i+j}$ times the determinant of the matrix obtained by deleting the $i$-th row and $j$-th column of $\M$. Relevant concepts and results on finite fields, such as the definitions and properties of minimal, irreducible, and primitive polynomials, can be found in \cite{LN97}.

With the preparatory notions in place, we proceed to determining the cycle structure.

\section{The Cycle Structure of $\Omega(f(x))$}\label{sec:cycle}
We start by recalling some useful properties and results.

Let $g(x)\in \F_2[x]$ be an irreducible polynomial of degree $n$ having $\beta \in\F_{2^n}$ as a root. Then there exists
a primitive element $\alpha\in\F_{2^n}$ such that $\beta=\alpha^t$ for some $t \in \N$, and $e=\frac{2^n-1}{t}$ is
the order of $\beta$. Using the {\it Zech logarithmic representation} (see. {\it e.g.}, \cite[page~39]{GG05}), we write
\[
1+ \alpha^{\ell} = \alpha^{\tau_{n}(\ell)}
\]
where $\tau_n (\ell)$ is the Zech logarithm relative to $\alpha$ that induces
a permutation on $\{1,2,\ldots,2^n-2\}$. For completeness, $\tau_n (\ell):=\infty$
for $\ell \equiv 0 \Mod{2^n-1}$ and $\alpha^{\infty}:=0$.

The {\it cyclotomic classes} $C_i\subseteq\F_{2^n}$ for $0 \leq i <t$ are
\begin{equation}\label{eq:cyclas}
C_i=\{\alpha^{i+ s \cdot t}~|~0\leq s <e\}=\{\alpha^i\beta^s~|~0\leq s<e\}=\alpha^i C_0.
\end{equation}
The {\it cyclotomic numbers} $(i,j)_{t}$, for $0\leq i,j <t$ are given by
\begin{equation}\label{eq:cycnum}
(i,j)_{t} =\left|\{(\xi,\xi+1)~|~\xi\in C_i, \xi+1\in C_j\}\right|=\left|\{\xi~|~\xi\in C_i, \xi+1\in C_j\}\right|.
\end{equation}
Requiring $\xi\in C_i$ and $\xi+1\in C_j$ is equivalent to requiring that there exist $s$ and $s'$ with $0 \leq s,s' <e$ such that
\[
1+\alpha^{i + s \cdot t} = \alpha^{j + s' \cdot t} \iff \tau_n(i + s \cdot t) = j + s' \cdot t
\iff \tau_n(i + s \cdot t) \equiv j \Mod{t}.
\]
Thus, an equivalent expression to~(\ref{eq:cycnum}) is
\begin{equation}\label{eq:cycnum2}
(i,j)_{t} = \left| \{s~|~\tau_n(i + s \cdot t) \equiv j \Mod{t}\}\right|.
\end{equation}
\begin{remark}\label{rem:cn}
In general, it is hard to determine the cyclotomic numbers for all parameter sets. They are known for small parameters or under certain conditions. Some useful facts can be found in~\cite{Storer67} and~\cite[Section~1.4]{Ding14}. The cyclotomic numbers used in this paper are all known.
\end{remark}

Using $\{1,\beta,\ldots,\beta^{n-1}\}$ as a basis for $\F_{2^n}$ as an $\F_2$-vector space,
for $0\leq j < 2^n-1$, one can uniquely express $\alpha^j$ as
\[
\alpha^j=\sum_{i=0}^{n-1}a_{j,i}\beta^i \text{ with } a_{j,i}\in \F_2.
\]

Define the mapping $\varphi:\F_{2^n}\rightarrow \F_2^{n}$ by
\[
\varphi(0)=\0,\ \ \varphi(\alpha^{j})=(a_{j,0},a_{j+t,0},\ldots,a_{j+(n-1)t,0}),
\]
where the subscripts are reduced modulo $2^n-1$.
Let
\begin{equation}\label{eq:corres}
\u_i = (a_{i,0},a_{i+t,0},\ldots,a_{i+(e-1)t,0}).
\end{equation}
It is shown in~\cite[Theorem 3]{HH96} that the class $C_i$ corresponds to the cycle $[\u_i]$ under the mapping $\varphi$. In other words, $\u_i$ and the sequence of states $((\u_i)_0,(\u_i)_1,\ldots,(\u_i)_{e-1})$ of $\u_i$ where, for $0\leq j<e$,
\[
(\u_i)_j=(a_{i+jt,0},a_{i+(j+1)t,0},\ldots,a_{i+(j+n-1)t,0})=\varphi(\alpha^i\beta^j),
\]
are equivalent. Hence, $\u_i \longleftrightarrow C_i$.

The theory of~LFSRs in~\cite[Chapter~4]{GG05} tells us that
\begin{equation}\label{equ:g}
\Omega(g(x))=[\0]\cup[\u_0]\cup[\u_1]\cup\ldots\cup[\u_{t-1}].
\end{equation}

If $g(x)$ is a primitive polynomial, then $e=2^n-1$ and there exists only one cyclotomic class.
Hence, $\Omega(g(x))=[\0]\cup[\u]$, where
$\u$ is the $m$-sequence with period $2^n-1$.
The sequence $\u$ has the following {\it shift-and-add property}.
\begin{lemma}\cite[Theorem 5.3]{GG05}\label{lem:saa}
Let $\u$ be an $m$-sequence with period $2^n-1$. Then, for $0 < i < 2^n-1$, there exists
$0 < j < 2^n-1$ such that $\u + L^{i} \u = L^{j} \u $ with $j=\tau_n(i)$.
\end{lemma}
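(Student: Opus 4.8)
The plan is to realize each entry of $\u$ as a fixed $\F_2$-linear coordinate of a power of the primitive element, and then to convert the addition of two shifts of $\u$ into a single multiplication inside $\F_{2^n}$ by means of the Zech logarithm. Since $g(x)$ is primitive we have $t=1$ and may take $\beta=\alpha$, so the correspondence in~\eqref{eq:corres} identifies $\u=\u_0$ with the sequence whose $k$-th entry is $u_k=a_{k,0}$, the coefficient of $\beta^0=1$ in the expansion $\alpha^k=\sum_{i=0}^{n-1}a_{k,i}\beta^i$. The first and crucial step is the structural observation that the map $\lambda\colon\F_{2^n}\to\F_2$ sending $\sum_{i=0}^{n-1}c_i\beta^i$ to its constant coordinate $c_0$ is $\F_2$-linear; consequently $u_k=\lambda(\alpha^k)$ for every $k$, and the left shift $L$ corresponds to multiplication by $\alpha$.

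Granting this representation, the main computation is short. For $0<i<2^n-1$ and any index $k$, linearity of $\lambda$ gives
\[
(\u+L^i\u)_k=u_k+u_{k+i}=\lambda(\alpha^k)+\lambda(\alpha^{k+i})=\lambda\!\left(\alpha^k(1+\alpha^i)\right).
\]
Because $0<i<2^n-1$ and $\alpha$ has order $2^n-1$, we have $\alpha^i\neq 1$, hence $1+\alpha^i\neq 0$ and the Zech logarithm $\tau_n(i)$ is finite with $1+\alpha^i=\alpha^{\tau_n(i)}$. Substituting and applying $u_k=\lambda(\alpha^k)$ once more yields
\[
(\u+L^i\u)_k=\lambda\!\left(\alpha^{k+\tau_n(i)}\right)=u_{k+\tau_n(i)}=(L^{\tau_n(i)}\u)_k,
\]
so that $\u+L^i\u=L^j\u$ with $j=\tau_n(i)$. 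Since $\tau_n$ permutes $\{1,\ldots,2^n-2\}$, we automatically obtain $0<j<2^n-1$, which also confirms that $\u+L^i\u$ is a true nonzero shift of $\u$ rather than the all-zero sequence.

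As a consistency check I would also note the purely structural argument: by linearity of the recurrence, $\u+L^i\u$ again lies in $\Omega(g(x))=[\0]\cup[\u]$ from~\eqref{equ:g}, and it cannot be $\0$ since $L^i\u\neq\u$ for $0<i<2^n-1$; this already forces $\u+L^i\u=L^j\u$ for some $j$, but it does not pin down the value of $j$. The only genuine content of the lemma is therefore the identification $j=\tau_n(i)$, and the one nonformal ingredient is the $\F_2$-linearity of the coordinate functional $\lambda$ together with the identification of $L$ with multiplication by $\alpha$; once these are in place the rest is bookkeeping. I expect establishing this linear representation (or, in the present paper, simply extracting it from the setup preceding~\eqref{equ:g}) to be the only step requiring care.
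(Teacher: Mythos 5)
Your proof is correct, and---since the paper gives no proof of Lemma~\ref{lem:saa}, quoting it directly from \cite[Theorem 5.3]{GG05}---it supplies an argument where the paper has none. Moreover, your route is essentially the one the paper does take when it proves the generalization to non-primitive irreducible polynomials, Lemma~\ref{lem:irr-saa}: there the correspondence $\u_i \longleftrightarrow \varphi(\alpha^i C_0)$ plays exactly the role of your coordinate functional $\lambda$, converting the sum of two shifted sequences into multiplication by $1+\alpha^{j-i}\beta^a$ inside $\F_{2^n}$, after which the Zech logarithm (there, the cyclotomic number) identifies the resulting shift; your argument is the $t=1$, $\beta=\alpha$ specialization of this. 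One small point worth making explicit: the lemma is stated for an arbitrary $m$-sequence $\u$ of period $2^n-1$, whereas your computation fixes the particular representative with $u_k=a_{k,0}=\lambda(\alpha^k)$. This is harmless because every $m$-sequence with the same primitive minimal polynomial is a shift $L^s\u_0$ of that representative, and applying $L^s$ to the identity $\u_0+L^i\u_0=L^{\tau_n(i)}\u_0$ gives $(L^s\u_0)+L^i(L^s\u_0)=L^{\tau_n(i)}(L^s\u_0)$, i.e., the same $j=\tau_n(i)$ works for every shift; a sentence to this effect would close the argument completely.
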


When $g(x)$ is not primitive, the situation is more involved.
\begin{lemma}\label{lem:irr-saa}
Let $g(x) \in \F_2[x]$ be an irreducible polynomial of degree $n$ and order $e$ (making $ t=\sfrac{(2^n-1)}{e}$) with $\Omega(g(x))$ as presented in (\ref{equ:g}).
Then, for each triple $(i,j,k)$ with $0 \leq i, j, k < t$, we have
\begin{equation}\label{equ:irr-saa}
(j-i,k-i)_t=\left|\{a~|\u_i+L^a\u_j=L^b\u_k;0\leq a,b <e\}\right|.
\end{equation}
\end{lemma}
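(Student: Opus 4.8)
The plan is to exploit the correspondence $\u_i \longleftrightarrow C_i$ furnished by the map $\varphi$ and to translate the relation $\u_i + L^a \u_j = L^b \u_k$ into a membership statement about cyclotomic classes. The first fact I would record is that $\varphi$ is $\F_2$-linear: writing $T(\gamma)$ for the coefficient of $1=\beta^0$ when $\gamma \in \F_{2^n}$ is expanded in the basis $\{1,\beta,\ldots,\beta^{n-1}\}$, one has $\varphi(\gamma)=(T(\gamma),T(\beta\gamma),\ldots,T(\beta^{n-1}\gamma))$, because $a_{j+mt,0}=T(\beta^m\alpha^j)$ since $\alpha^{j+mt}=\beta^m\alpha^j$. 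Both $T$ and multiplication by $\beta^m$ are $\F_2$-linear, hence so is $\varphi$. I would also isolate the intertwining rule that multiplication by $\beta$ inside $\varphi$ realizes the shift $L$: because $(\u_i)_m=\varphi(\alpha^i\beta^m)$, any sequence with states $\varphi(\delta\beta^m)$ has $L^r$ acting simply by replacing the base point $\delta$ with $\delta\beta^r$.

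With these two facts in hand, the next step computes the $m$-th state of $\u_i + L^a \u_j$ directly. Using $(L^a\u_j)_m=(\u_j)_{m+a}=\varphi(\alpha^j\beta^{m+a})$ together with linearity,
\[
(\u_i + L^a \u_j)_m=\varphi\bigl(\alpha^i\beta^m+\alpha^j\beta^{m+a}\bigr)=\varphi\bigl(\beta^m\,\gamma_a\bigr), \qquad \gamma_a:=\alpha^i+\alpha^{j+at}.
\]
If $\gamma_a\ne 0$, write $\gamma_a=\alpha^{c}=\alpha^{c'}\beta^{r}$ with $0\le c'<t$ and $0\le r<e$; then $\beta^m\gamma_a=\alpha^{c'}\beta^{m+r}$, and the intertwining rule gives $\u_i+L^a\u_j=L^{r}\u_{c'}$. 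Thus every such sum is a shift of exactly one $\u_{c'}$, and it equals a shift $L^b\u_k$ of $\u_k$ precisely when $c'=k$, i.e. when $\gamma_a\in C_k$. (If $\gamma_a=\0$, the sum is the zero sequence and contributes nothing; I will check below that this degenerate case is exactly the one the cyclotomic count already discards.)

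It therefore remains to count the $a$ with $0\le a<e$ for which $\gamma_a\in C_k$. Factoring $\gamma_a=\alpha^i\bigl(1+\alpha^{j-i}\beta^{a}\bigr)$ and using $\alpha^i y\in C_k\iff y\in C_{k-i}$ (indices mod $t$), the condition becomes $1+\xi\in C_{k-i}$ with $\xi:=\alpha^{j-i}\beta^{a}$. As $a$ runs over $0,\ldots,e-1$, the element $\beta^a$ runs bijectively over $C_0$, so $\xi$ runs bijectively over $C_{j-i}$. Hence the count equals $|\{\xi\in C_{j-i}:\xi+1\in C_{k-i}\}|=(j-i,k-i)_t$ by the definition in~(\ref{eq:cycnum}), which is the assertion~(\ref{equ:irr-saa}). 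The only value needing care is $\xi=1$, possible only when $j\equiv i \Mod{t}$: there $\gamma_a=\0$ and $\xi+1=0$ lies in no cyclotomic class, so this $a$ is excluded from both sides simultaneously, confirming that the degenerate case is harmless.

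The main obstacle is bookkeeping the two independent moduli: class indices live in $\Z/t\Z$ while shifts within a single cycle live in $\Z/e\Z$, and the exponent $c$ of $\gamma_a$ must be split as $c=c'+rt$ consistently. Once the linearity of $\varphi$ and the $\beta\leftrightarrow L$ intertwining are pinned down, the remainder reduces to the bijection $a\mapsto\xi=\alpha^{j-i}\beta^a$ onto $C_{j-i}$ and the harmless treatment of $\xi=1$, and I expect no deeper difficulty.
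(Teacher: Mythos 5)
Your proposal is correct and follows essentially the same route as the paper's proof: both use the correspondence $\u_i \longleftrightarrow \varphi(\alpha^i C_0)$, the linearity of $\varphi$ to turn $\u_i + L^a\u_j$ into $\varphi\bigl((1+\alpha^{j-i}\beta^a)\alpha^i C_0\bigr)$, and then count the $a$ for which $1+\alpha^{j-i}\beta^a \in C_{k-i}$ via the definition of $(j-i,k-i)_t$. You simply make explicit what the paper leaves implicit — the $\F_2$-linearity of $\varphi$, the $\beta\leftrightarrow L$ intertwining, and the degenerate case $i=j$, $a=0$ — which is a welcome tightening but not a different argument.
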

\begin{proof}
Using the correspondence
\[
\u_i \longleftrightarrow (\varphi(\alpha^i\beta^0),\varphi(\alpha^i\beta^1),\ldots,\varphi(\alpha^i\beta^{e-1}))=\varphi(\alpha^iC_0),
\]
we have
\[
\u_i + L^a \u_j \longleftrightarrow \varphi(\alpha^iC_0)+\varphi(\alpha^j\beta^aC_0)
=\varphi((\alpha^i+\alpha^j\beta^a)C_0)=\varphi((1+\alpha^{j-i}\beta^a)\alpha^iC_0).
\]
Observe that as $a$ runs through $\{0,1,\ldots,e-1\}$ there are $(j-i,k-i)_t$ such $a$, each of which satisfies $(1+\alpha^{j-i}\beta^a)=\alpha^{k-i}\beta^b$ for some $b$. In each occasion,
\[
\varphi((1+\alpha^{j-i}\beta^a)\alpha^iC_0)=\varphi(\alpha^{k-i}\beta^b\alpha^iC_0)=\varphi(\alpha^k\beta^bC_0).
\]
Note that the corresponding sequences are shifts of $\u_k$ and the proof is now complete.\qed
\end{proof}

\begin{lemma}\cite[Lemma 4.2]{GG05}\label{lemma1}
Let $g(x),h(x)\in \F_2[x]$ be two nonzero polynomials. Denote by $\Omega(g(x))+\Omega(h(x))$
the set of sequences $\{{\bf g}+{\bf h}~|~{\bf g}\in\Omega(g(x)),{\bf h}\in\Omega(h(x))\}$. Then
\begin{enumerate}
\item $\Omega(g(x)) \subseteq \Omega(h(x))$ if and only if $g(x) \mid h(x)$.
\item $\Omega(g(x))+\Omega(h(x))=\Omega(\lcm(g(x),h(x)))$.
\item $\Omega(g(x)) \cap \Omega(h(x)) = \Omega(\gcd(g(x),h(x)))$.
\end{enumerate}
\end{lemma}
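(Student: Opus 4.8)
Since this is Lemma~4.2 of \cite{GG05}, the plan is to reconstruct its short proof from the operator/module description of $\Omega(g(x))$ rather than to invoke it verbatim. The starting point I would use is the identification of $\Omega(g(x))$ with the kernel of the operator $g(L)$ acting on the space of binary sequences: writing $g(x)=x^n+\sum_{i=0}^{n-1}c_ix^i$, a sequence $\s$ lies in $\Omega(g(x))$ exactly when $g(L)\s=\0$, i.e.\ precisely when its recurrence holds. Equivalently, $\s\in\Omega(g(x))$ if and only if the minimal polynomial $m_{\s}(x)$ of $\s$ divides $g(x)$. Both facts, together with $\dim_{\F_2}\Omega(g(x))=\deg g(x)$, are the basic structural properties recorded in \cite[Sections 4.2 and 4.3]{GG05}, and I would take them as given.

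With this dictionary, Part~1 splits cleanly. If $g(x)\mid h(x)$, write $h(x)=g(x)\,m(x)$; then $g(L)\s=\0$ forces $h(L)\s=m(L)\,g(L)\s=\0$, so $\Omega(g(x))\subseteq\Omega(h(x))$. For the converse I would use the one genuinely nontrivial input, namely that $\Omega(g(x))$ always contains a sequence $\s$ whose minimal polynomial is exactly $g(x)$; this is where the real content sits. Granting it, $\s\in\Omega(g(x))\subseteq\Omega(h(x))$ yields $g(x)=m_{\s}(x)\mid h(x)$, finishing Part~1.

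Part~3 is then immediate from the minimal-polynomial criterion: $\s\in\Omega(g(x))\cap\Omega(h(x))$ means $m_{\s}(x)\mid g(x)$ and $m_{\s}(x)\mid h(x)$, hence $m_{\s}(x)\mid\gcd(g(x),h(x))$, and conversely, giving $\Omega(g(x))\cap\Omega(h(x))=\Omega(\gcd(g(x),h(x)))$. For Part~2 the inclusion $\Omega(g(x))+\Omega(h(x))\subseteq\Omega(\lcm(g(x),h(x)))$ follows from Part~1, since both summands lie inside the right-hand space, which is closed under addition. I would then close the gap by a dimension count over $\F_2$:
\[
\dim\bigl(\Omega(g(x))+\Omega(h(x))\bigr)=\deg g(x)+\deg h(x)-\dim\bigl(\Omega(g(x))\cap\Omega(h(x))\bigr)=\deg\lcm(g(x),h(x)),
\]
using Part~3 and $\deg\gcd+\deg\lcm=\deg g+\deg h$; since the left side already sits inside $\Omega(\lcm(g(x),h(x)))$ and matches its dimension, the two spaces coincide.

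The one step I expect to be the real obstacle is the existence of a sequence in $\Omega(g(x))$ with minimal polynomial equal to $g(x)$, used in Part~1. I would establish it through the generating-function representation: a periodic sequence corresponds to a rational function $a(x)/g^{*}(x)$ with $g^{*}$ the reciprocal of $g$ and $\deg a<\deg g$, and its minimal polynomial is the reciprocal of $g^{*}(x)/\gcd(a(x),g^{*}(x))$; choosing $a(x)$ coprime to $g^{*}(x)$ produces minimal polynomial $g(x)$. Equivalently, one shows $\Omega(g(x))$ is a cyclic $\F_2[x]$-module isomorphic to $\F_2[x]/(g(x))$, whose submodule lattice is exactly the divisor lattice of $g(x)$; this single structural statement in fact yields all three parts at once, and I would note it as the cleaner but more abstract alternative.
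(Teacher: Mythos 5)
The paper offers no proof of this lemma at all: it is imported verbatim as \cite[Lemma 4.2]{GG05} and used as a black box (its only role is to justify Lemma~\ref{lem:cycle-f}). So there is nothing internal to compare against; what matters is whether your reconstruction is sound, and it is. Your argument is the standard one: identify $\Omega(g(x))$ with $\ker g(L)$, note that membership is equivalent to $m_{\s}(x)\mid g(x)$ because the annihilator of $\s$ in $\F_2[x]$ is the ideal generated by the minimal polynomial, get Part~1 from the existence of a sequence with minimal polynomial exactly $g(x)$, Part~3 immediately from the divisibility criterion, and Part~2 from the inclusion plus the dimension count $\deg g+\deg h-\deg\gcd=\deg\lcm$ together with $\dim_{\F_2}\Omega(g(x))=\deg g(x)$. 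You also correctly isolate the one nontrivial ingredient (existence of a sequence whose minimal polynomial is all of $g(x)$) rather than waving it through. Two small remarks. First, your generating-function justification of that ingredient implicitly assumes $g(0)\neq 0$: if $x\mid g(x)$ the reciprocal $g^{*}(x)$ drops in degree and the correspondence $\s\leftrightarrow a(x)/g^{*}(x)$ needs adjustment, so for the lemma as stated (arbitrary nonzero polynomials) the cleaner path is the impulse-response sequence or, as you note, the cyclic-module isomorphism $\Omega(g(x))\cong\F_2[x]/(g(x))$, whose submodule lattice is the divisor lattice of $g(x)$ and which gives all three parts at once; in the paper's actual setting (nonsingular LFSRs, so constant term $1$) the caveat is vacuous. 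Second, what your explicit proof buys over the paper's citation is self-containedness and, via the module-theoretic remark, the natural generalization to products of $s>2$ irreducible factors that Section~\ref{sec:genpoly} gestures at.
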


\begin{lemma}\label{lem:cycle-f}
Let $f(x)=p(x)q(x)$ where $p(x)$ and $q(x)$ are two distinct irreducible polynomials in $\F_2[x]$ of degree $m$ and $n$ and order $e_1$ and $e_2$, respectively. Let $t_1=\frac{2^m-1}{e_1}$ and
$t_2=\frac{2^n-1}{e_2}$. The cycle structure of $\Omega(f(x))$ is
\begin{equation}\label{equ:cycle-f}
[\0] ~\cup ~ \bigcup_{i=0}^{t_1-1}[\u_i] ~\cup ~\bigcup_{j=0}^{t_2-1}[\s_j] ~\cup~ \left(\bigcup_{i=0}^{t_1-1}\bigcup_{j=0}^{t_2-1}\bigcup_{k=0}^{~\gcd(e_1,e_2)-1~}[L^k\u_i+\s_j]\right).
\end{equation}
\end{lemma}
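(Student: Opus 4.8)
The plan is to reduce everything to the known single-factor description in~(\ref{equ:g}) by exploiting that $p(x)$ and $q(x)$ are distinct irreducibles, so $\gcd(p(x),q(x))=1$ and $\lcm(p(x),q(x))=p(x)q(x)=f(x)$. Applying Lemma~\ref{lemma1}(2) gives $\Omega(f(x))=\Omega(p(x))+\Omega(q(x))$, while Lemma~\ref{lemma1}(3) gives $\Omega(p(x))\cap\Omega(q(x))=\Omega(1)=\{\0\}$. First I would record the consequence that every sequence of $\Omega(f(x))$ is written \emph{uniquely} as $\mathbf g+\mathbf h$ with $\mathbf g\in\Omega(p(x))$ and $\mathbf h\in\Omega(q(x))$: if $\mathbf g_1+\mathbf h_1=\mathbf g_2+\mathbf h_2$ then $\mathbf g_1+\mathbf g_2=\mathbf h_1+\mathbf h_2$ lies in the intersection, hence equals $\0$. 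Using~(\ref{equ:g}) for both factors, I would then sort the sequences according to which component vanishes. Setting $\mathbf h=\0$ recovers $[\0]$ together with the cycles $[\u_i]$, $0\le i<t_1$; symmetrically $\mathbf g=\0$ gives $[\0]$ and the $[\s_j]$, $0\le j<t_2$; both components nonzero gives the \emph{mixed} sequences, each of the form $L^a\u_i+L^b\s_j$ for a unique quadruple $(i,j,a,b)$ with $0\le a<e_1$ and $0\le b<e_2$, since the cycles $[\u_i]$ (resp.\ $[\s_j]$) partition the nonzero part of $\Omega(p(x))$ (resp.\ $\Omega(q(x))$).

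The core of the argument is to organize the mixed sequences into cycles. Shifting commutes with the decomposition, so the cycle of $L^a\u_i+L^b\s_j$ consists of the sequences $L^{a+c}\u_i+L^{b+c}\s_j$, whose two components stay a power of $\u_i$ and a power of $\s_j$ respectively. I would look for the representative in which the $\s$-part is unshifted: this requires $c\equiv -b \Mod{e_2}$, and then the $\u$-part becomes $L^{a+c}\u_i$ with $a+c\equiv k \Mod{e_1}$. Solving the simultaneous congruences $c\equiv -b \Mod{e_2}$ and $c\equiv k-a\Mod{e_1}$ for $c$ modulo $\lcm(e_1,e_2)$ is possible precisely when $k\equiv a-b \Mod{\gcd(e_1,e_2)}$. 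Hence, writing $d=\gcd(e_1,e_2)$ and $k=(a-b)\bmod d$, one obtains $[L^a\u_i+L^b\s_j]=[L^k\u_i+\s_j]$ with $0\le k<d$, which shows that every mixed cycle already appears in the claimed union.

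It remains to prove that the listed mixed cycles $[L^k\u_i+\s_j]$ with $0\le i<t_1$, $0\le j<t_2$, $0\le k<d$ are pairwise distinct; this is the step I expect to carry the weight. Suppose two of them coincide, say $L^{k'}\u_{i'}+\s_{j'}=L^c(L^k\u_i+\s_j)$ for some shift $c$. Matching components through the uniqueness from the first step forces $\s_{j'}=L^c\s_j$ and $L^{k'}\u_{i'}=L^{k+c}\u_i$. Since the $[\s_j]$ (resp.\ the $[\u_i]$) are pairwise disjoint cycles, this yields $j'=j$ with $c\equiv 0\Mod{e_2}$, and $i'=i$ with $k'\equiv k+c\Mod{e_1}$; reducing modulo $d$ and using $e_2\equiv 0\Mod{d}$ then gives $k'\equiv k\Mod{d}$, so $k'=k$. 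Finally I would confirm the bookkeeping: each mixed cycle has length $\lcm(e_1,e_2)$, because the minimal polynomial of a sum with nonzero components is $\lcm(p(x),q(x))=p(x)q(x)$, and $t_1t_2\,d\cdot\lcm(e_1,e_2)=t_1t_2e_1e_2=(2^m-1)(2^n-1)$ matches the number of mixed sequences, so~(\ref{equ:cycle-f}) is a genuine partition. The main obstacle is precisely this mixed part: isolating the shift-invariant $(a-b)\bmod\gcd(e_1,e_2)$ and verifying the congruence system is exactly where $\gcd(e_1,e_2)$ enters, and it is the crux that makes both the surjectivity and the distinctness fall out cleanly.
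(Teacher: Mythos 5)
Your proof is correct, but its logical skeleton is genuinely different from the paper's. The paper handles the mixed cycles by a counting argument: it first computes in (\ref{eq:eqcl}) that the sequences whose minimal polynomial is $f(x)$ split into exactly $t_1 t_2 \gcd(e_1,e_2)$ shift-inequivalent classes, then uses a Bezout identity for $e_1/\gcd(e_1,e_2)$ and $e_2/\gcd(e_1,e_2)$ to show that $[L^k\u_i+\s_j]$ and $[L^{k+t\gcd(e_1,e_2)}\u_i+\s_j]$ coincide, so the listed cycles number at most $t_1 t_2 \gcd(e_1,e_2)$; equality of the two counts then forces the listed cycles to be pairwise distinct and to exhaust all classes, with distinctness obtained implicitly by pigeonhole rather than exhibited. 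You instead prove surjectivity and injectivity directly, and your key tool is the unique decomposition of every sequence into its $\Omega(p(x))$- and $\Omega(q(x))$-components, which rests on Lemma~\ref{lemma1}(3), i.e.\ $\Omega(p(x))\cap\Omega(q(x))=\{\0\}$ --- a part of that lemma the paper's proof never invokes (it only uses the containment statement). Your CRT computation producing the invariant $(a-b)\bmod \gcd(e_1,e_2)$ plays exactly the role of the paper's Bezout manipulation, and your component-matching argument replaces the counting conclusion. What each buys: the paper's route is shorter once (\ref{eq:eqcl}) is written down; yours needs no exact count of classes, makes the shift-invariant explicit, and therefore transfers essentially verbatim to products of $s>2$ pairwise distinct irreducible polynomials (the setting of Section~\ref{sec:genpoly}), where tallying all classes first is more cumbersome. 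One small remark: your final bookkeeping paragraph is logically redundant --- once surjectivity and injectivity are proved, the partition is established --- so it serves only as a consistency check, whereas in the paper the analogous count is the engine of the whole proof.
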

\begin{proof}
Based on~(\ref{equ:g}), we have
\[
\Omega(p(x))=[\0]\cup[\u_0]\cup[\u_1]\cup\ldots\cup[\u_{t_1-1}]\text{ and }
\Omega(q(x))=[\0]\cup[\s_0]\cup[\s_1]\cup\ldots\cup[\s_{t_2-1}].
\]

By Lemma~\ref{lemma1}, $\Omega(f(x))$ contains $\Omega(p(x))$ and $\Omega(q(x))$ as subsets. Hence,
\[
[\0]~\cup~\bigcup_{i=0}^{t_1-1}[\u_i]~\cup~\bigcup_{j=0}^{t_2-1}[\s_j]\subseteq\Omega(f(x)).
\]
The minimal polynomial of all other sequences in $\Omega(f(x))$ must be $f(x)$. The
period of these sequences is the order of $f(x)$, which is $\lcm(e_1,e_2)$. The sequences are of the form
\[
L^k \u_i + L^{\ell} \s_j = L^{\ell} (L^{k- \ell} \u_i + \s_j)
\]
for some $i,j,k$, and $\ell$, where $k - \ell$ is computed modulo $e_1$. They can be partitioned into
\begin{equation}\label{eq:eqcl}
\frac{2^{m+n}-(2^m+2^n-1)}{\lcm(e_1,e_2)}= \frac{(2^m-1)(2^n-1)}{\lcm(e_1,e_2)}= \frac{(e_1 \cdot t_1)(e_2 \cdot t_2)}{\lcm(e_1,e_2)}=t_1 \cdot t_2 \cdot \gcd(e_1,e_2)
\end{equation}
shift inequivalent classes.

Next, we show that $L^k\u_i+\s_j$ and $L^{k+t \cdot \gcd(e_1,e_2)}\u_i+\s_j$ are shift equivalent for $0 \leq k < \gcd(e_1,e_2)$ and $0<t<\sfrac{e_1}{\gcd(e_1,e_2)}$. Since
$\sfrac{e_1}{\gcd(e_1,e_2)}$ and $\sfrac{e_2}{\gcd(e_1,e_2)}$ are coprime, there exist
$v,w \in \Z$ such that
\[
v~\frac{e_1}{\gcd(e_1,e_2)} + w~\frac{e_2}{\gcd(e_1,e_2)}=1 \iff
t \left( \gcd(e_1,e_2)- v \cdot e_1 \right)=  t \cdot w \cdot e_2.
\]
Since the periods of $\u_i$ and $\s_j$ are, respectively, $e_1$ and $e_2$,
\[
L^{k + t \cdot \gcd(e_1,e_2)} \u_i + \s_j = L^{k+t \cdot \gcd(e_1,e_2)-t \cdot v \cdot e_1}\u_i + L^{t \cdot w \cdot e_2}\s_j
=L^{t \cdot w \cdot e_2}(L^k \u_i + \s_j).
\]

Because $0 \leq i < t_1$, $0 \leq j < t_2$, and $0 \leq k < \gcd(e_1,e_2)$, there are at most $t_1 \cdot t_2 \cdot \gcd(e_1,e_2)$ shift inequivalent cycles $[L^k\u_i+\s_j]$ for which the period of each sequence is $\lcm(e_1,e_2)$. Combined with (\ref{eq:eqcl}), we conclude that there are exactly $t_1 \cdot t_2 \cdot \gcd(e_1,e_2)$ shift inequivalent classes and
$[L^k\u_i+\s_j]$ with $0\leq i<t_1$, $0\leq j<t_2$, and $0\leq k<\gcd(e_1,e_2)$ are the cycles.
\qed
\end{proof}

\section{The Main Results}\label{sec:main}
This most technical section contains two subsections. The first one studies the adjacency graph of $\Omega(f(x))$. The second one begins with a method to find a state belonging to a particular cycle and incorporates the results to design an algorithm that finds all conjugate pairs shared by any two cycles. It ends with a heuristic estimate of the number of de Bruijn sequences constructed.

\subsection{The Adjacency Graph of $\Omega(f(x))$}\label{subsec:graph}
Suppose that the special state $\S:=(1,0,\ldots,0) \in \F_2^{m+n}$ is in a given
cycle $[\a]$. Cycles $[\b]$ and $[\c]$ share a conjugate pair if and only if,
for some $i,j,k \in \Z$,
\[
L^i \b + L^j \c = L^k \a \iff \b + L^{j-i} \c = L^{k-i} \a.
\]
This is the basic rule of finding the conjugate pairs shared by the cycles in
$\Omega(f(x))$.

Since the degrees of the minimal polynomials of $\u_i$ and $\s_j$ are all $< m+n$,
neither $\u_i$ nor $\s_j$ can contain $m+n-1$ consecutive $0$s. Thus, for all $i$ and $j$,
$\S \notin [\u_i]$ and $\S \notin [\s_j]$. Hence, $\S \in [L^c \u_a + \s_b]$ for some nonnegative
integers $a,b$, and $c$. From hereon, we use $a,b,$ and $c$ specifically to refer to
the cycle $[L^c \u_a + \s_b]$ that contains the special state $\S$.

\begin{proposition}\label{prop1}
There exist some $a,b,c \in \Z$ such that $[\0]$ and $[L^c\u_a+\s_b]$ are adjacent.
For arbitrary $i$ and $j$, there is no conjugate pair between $[\u_i]$ and $[\u_j]$ and between $[\s_i]$ and $[\s_j]$.
\end{proposition}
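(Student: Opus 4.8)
The plan is to prove the two separate assertions of Proposition~\ref{prop1} using the special state $\S=(1,0,\ldots,0)$ as the universal pivot, exactly as set up in the paragraph preceding the statement. For the first claim, I would locate the special state $\S$ in its host cycle $[L^c\u_a+\s_b]$ and show that its conjugate $\widehat{\S}=(0,0,\ldots,0)$ is the all-zero state, which of course lies in $[\0]$. First I would observe that $\widehat{\S}=\S+(1,0,\ldots,0)=\0$, so $\S$ and $\widehat{\S}$ form a conjugate pair with $\widehat{\S}\in[\0]$ and $\S\in[L^c\u_a+\s_b]$ by the discussion just above the proposition. Since these two cycles are distinct (the all-zero cycle contains only $\0$, while $L^c\u_a+\s_b$ has minimal polynomial $f(x)=p(x)q(x)$ of positive degree and is therefore nonzero), they are disjoint, and the existence of this single shared conjugate pair makes them adjacent by definition. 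This settles the first sentence.

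\emph{The second claim} is the substantive part: there is no conjugate pair between any two cycles of the form $[\u_i],[\u_j]$, and likewise none between any $[\s_i],[\s_j]$. My approach is to argue that a conjugate pair between $[\u_i]$ and $[\u_j]$ would force $\S\in\Omega(p(x))$, contradicting what was already established. Concretely, a shared conjugate pair between $[\u_i]$ and $[\u_j]$ means there is a state $\v\in[\u_i]$ whose conjugate $\widehat\v\in[\u_j]$; since $\v$ and $\widehat\v$ differ only in the leading coordinate, their difference is the special state $\S=(1,0,\ldots,0)$. Reading this at the level of sequences via the basic rule $\b+L^{j-i}\c=L^{k-i}\a$ with $\a$ the sequence hosting $\S$, a conjugate pair in $\Omega(p(x))$ would produce some shift of $\S$ as $L^{i'}\u_i+L^{j'}\u_j$. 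By Lemma~\ref{lemma1}(2), this sum lies in $\Omega(p(x))+\Omega(p(x))=\Omega(p(x))$, so the corresponding sequence, and hence $\S$ itself, would belong to $\Omega(p(x))$, i.e.\ $\S\in[\u_{i''}]$ for some $i''$ (it cannot be $\0$). But the excerpt already shows $\S\notin[\u_{i}]$ for all $i$, because $\u_i$ has minimal polynomial dividing $p(x)$ of degree $m<m+n$ and thus cannot contain $m+n-1$ consecutive zeros. This contradiction rules out any conjugate pair inside the $\u$-cycles; the identical argument with $q(x)$ in place of $p(x)$ and $\s$ in place of $\u$ handles the $\s$-cycles.

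\textbf{The main obstacle}, and the step I would be most careful with, is translating the state-level statement ``$[\u_i]$ and $[\u_j]$ share a conjugate pair'' into the clean sequence-level identity that lands the sum inside $\Omega(p(x))$. The subtlety is that a conjugate pair is a statement about two specific \emph{states} differing in one bit, and I need to convert that into a relation among the full periodic sequences so that I can invoke the closure property $\Omega(p(x))+\Omega(p(x))=\Omega(p(x))$ from Lemma~\ref{lemma1}. The key realization is that the all-important pivot $\S$ is the \emph{difference} of a conjugate pair, so a conjugate pair between two sequences in $\Omega(p(x))$ manufactures a shift of $\S$ as an element of $\Omega(p(x))$; once this is pinned down, the length/degree argument (a sequence with minimal polynomial of degree $m$ cannot realize $m+n-1$ consecutive zeros) closes the gap immediately. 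I expect the rest to be routine bookkeeping of shifts.
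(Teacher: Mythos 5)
Your proof is correct and takes essentially the same route the paper intends: the paper states Proposition~\ref{prop1} with no separate proof, treating it as immediate from the preceding discussion, namely that $\S\in[L^c\u_a+\s_b]$ has conjugate $\widehat{\S}=\0\in[\0]$, and that by the basic rule a conjugate pair between $[\u_i]$ and $[\u_j]$ would force a shift of the cycle containing $\S$ to lie in $\Omega(p(x))$ (closed under sums of shifts by Lemma~\ref{lemma1}), contradicting $\S\notin[\u_i]$ for all $i$. Your write-up just makes those implicit steps explicit, including the symmetric argument for the $[\s_i]$ cycles.
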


The next result determines the number of conjugate pairs between $[\u_i]$ and $[\s_j]$.
\begin{proposition}\label{prop2}
Let $0 \leq i <t_1$ and $0 \leq j <t_2$. Then $[\u_i]$ and $[\s_j]$ share a conjugate pair
if and only if $i=a$ and $j=b$. When this is the case, the conjugate pair is unique.
\end{proposition}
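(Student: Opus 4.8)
The plan is to work entirely at the level of sequences and to exploit the fact that $\Omega(f(x))$ is an internal direct sum. First I would record that, since $p(x)$ and $q(x)$ are distinct irreducible polynomials, they are coprime, so $\gcd(p(x),q(x))=1$ and $\lcm(p(x),q(x))=p(x)q(x)=f(x)$. Lemma~\ref{lemma1} then gives $\Omega(p(x))+\Omega(q(x))=\Omega(f(x))$ together with $\Omega(p(x))\cap\Omega(q(x))=\Omega(1)=\{\0\}$. As $\F_2$-vector spaces this says $\Omega(f(x))=\Omega(p(x))\oplus\Omega(q(x))$, so every sequence in $\Omega(f(x))$ has a \emph{unique} expression as a sum of a sequence from $\Omega(p(x))$ and a sequence from $\Omega(q(x))$. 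This uniqueness is the engine of the entire argument.

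Next I would translate the conjugate-pair condition through the basic rule stated before Proposition~\ref{prop1}. Writing the sequence in $[\a]=[L^c\u_a+\s_b]$ whose initial state is $\S$ as $\u_a^*+\s_b^*$, with $\u_a^*\in[\u_a]\subseteq\Omega(p(x))$ and $\s_b^*\in[\s_b]\subseteq\Omega(q(x))$, the cycles $[\u_i]$ and $[\s_j]$ share a conjugate pair exactly when there are shifts $\mathbf{g}\in[\u_i]$ and $\mathbf{h}\in[\s_j]$ with $\mathbf{g}+\mathbf{h}=\u_a^*+\s_b^*$. Indeed, a shared conjugate pair $(\v,\widehat{\v})$ with $\v\in[\u_i]$ and $\widehat{\v}\in[\s_j]$ corresponds bijectively to such a decomposition, since $\v+\widehat{\v}=\S$ forces the two full-period sequences emanating from $\v$ and from $\widehat{\v}$ to sum to the sequence through $\S$.

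For the ``only if'' direction I would apply uniqueness of the direct-sum decomposition to $\mathbf{g}+\mathbf{h}=\u_a^*+\s_b^*$: because $\mathbf{g},\u_a^*\in\Omega(p(x))$ and $\mathbf{h},\s_b^*\in\Omega(q(x))$, uniqueness forces $\mathbf{g}=\u_a^*$ and $\mathbf{h}=\s_b^*$. Then $\mathbf{g}\in[\u_i]\cap[\u_a]$, and since the cycles $[\u_0],\ldots,[\u_{t_1-1}]$ in (\ref{equ:g}) are pairwise disjoint, this yields $i=a$; symmetrically $j=b$. For the converse together with the uniqueness count, taking $i=a$ and $j=b$, the decomposition $\u_a^*+\s_b^*$ itself exhibits one conjugate pair, and the same uniqueness shows it is the only admissible $(\mathbf{g},\mathbf{h})$, so the conjugate pair is unique. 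The main obstacle I anticipate is bookkeeping rather than conceptual: making the identification between a shifted state and its shifted sequence precise, so that the state equality $\v+\widehat{\v}=\S$ really does lift to an equality of full-period sequences, and confirming that the $\Omega(p(x))$- and $\Omega(q(x))$-components of the sequence through $\S$ lie in $[\u_a]$ and $[\s_b]$ respectively, which holds because every shift of $L^c\u_a+\s_b$ has the form $L^{c+\ell}\u_a+L^{\ell}\s_b$.
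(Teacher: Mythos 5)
Your proof is correct. It differs from the paper's own proof mainly in which earlier result carries the load. The paper's argument is two lines long and leans on the classification already established in (the proof of) Lemma~\ref{lem:cycle-f}: a sum of a shift of $\u_i$ and a shift of $\s_j$ lies in a cycle of the form $[L^k\u_i+\s_j]$, and such a cycle coincides with $[L^c\u_a+\s_b]$ precisely when $i=a$, $j=b$, and $k\equiv c \Mod{\gcd(e_1,e_2)}$; uniqueness of the shared pair is then asserted via the shift-equivalence of the cycles $[L^{c+\ell\cdot\gcd(e_1,e_2)}\u_a+\s_b]$. You instead invoke Lemma~\ref{lemma1} to obtain the internal direct sum $\Omega(f(x))=\Omega(p(x))\oplus\Omega(q(x))$ and run everything through uniqueness of the two components, together with disjointness of the cycles $[\u_0],\ldots,[\u_{t_1-1}]$ and of $[\s_0],\ldots,[\s_{t_2-1}]$. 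Both arguments ultimately rest on the same algebraic fact --- coprimality of $p(x)$ and $q(x)$ forces the $\Omega(p(x))$- and $\Omega(q(x))$-parts of any sequence in $\Omega(f(x))$ to be unique --- but yours is more self-contained and makes the uniqueness count fully explicit: to conclude that the conjugate pair is \emph{unique} one must know that the specific sequence through $\S$ admits only one expression as (shift of $\u_a$) plus (shift of $\s_b$), which is exactly your direct-sum uniqueness, whereas the paper's cycle-level statement only records that the \emph{cycles} coincide and leaves that last step implicit. What the paper's version buys in exchange is brevity and the bookkeeping congruence $k\equiv c \Mod{\gcd(e_1,e_2)}$, which it reuses in Propositions~\ref{prop3}--\ref{prop5}.
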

\begin{proof}
$L^k \u_i + \s_j$ is a shift of $L^c \u_a + \s_b$
if and only if $i=a$, $j=b$, and $k \equiv c \Mod{\gcd(e_1,e_2)}$.
By the proof of Lemma~\ref{lem:cycle-f}, for any $\ell \in \Z$, $[L^c \u_a + \s_b]$
and $[L^{c+ \ell \cdot \gcd(e_1,e_2)} \u_a + \s_b]$ are shift equivalent.
Thus, $[\u_a]$ and $[\s_b]$ share a unique conjugate pair.\qed
\end{proof}

We now consider the number of conjugate pairs between $[\u_i]$ and $[L^{\ell} \u_j+\s_k]$.

\begin{proposition}\label{prop3}
Let $0 \leq i,j <t_1$ and $0 \leq \ell < \gcd(e_1,e_2)$. For a given $(i,j)$, the following properties hold.
\begin{enumerate}
\item $[\u_i]$ and $[L^{\ell} \u_j+\s_k]$ share no conjugate pair when $k \neq b$.
\item The sum of the numbers of conjugate pairs between $[\u_i]$ and $[L^{\ell}\u_j+\s_b]$ from $\ell=0$ to $\ell=\gcd(e_1,e_2)-1$ is the cyclotomic number $\delta_1:=(i-j,a-j)_{t_1}$.
\item Suppose that after determining $\u_j + L^{w} \u_i$, for $0 \leq w < e_1$, we have found the $\delta_1$ distinct shifts of $\u_a$, say
\[
L^{k_0}\u_a,L^{k_1}\u_a,\ldots,L^{k_{\delta_{1}-1}}\u_a.
\]
The exact number of conjugate pairs between $[\u_i]$ and $[L^{\ell}\u_j+\s_b]$ is
\begin{equation}\label{equ:Llu}
\left|\{k_v~|~c-k_v\equiv \ell \Mod {\gcd(e_1,e_2)} \mbox{ with } v=0,1,\ldots,\delta_{1}-1\}\right|.
\end{equation}
\end{enumerate}
\end{proposition}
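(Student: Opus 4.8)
The plan is to lean on the coprimality of $p(x)$ and $q(x)$: by Lemma~\ref{lemma1}, $\Omega(p(x))\cap\Omega(q(x))=\Omega(\gcd(p,q))=\{\0\}$, so every sequence in $\Omega(f(x))$ has a \emph{unique} decomposition into a $p$-part lying in $\Omega(p(x))$ and a $q$-part lying in $\Omega(q(x))$. I would translate the conjugate-pair rule $L^i\b+L^j\c=L^k\a$ into two independent equations, one per component, and count solutions with Lemma~\ref{lem:irr-saa}. First I would fix notation for the sequence through $\S$: since $\S\in[\a]$ with $\a=L^c\u_a+\s_b$, the unique element of $\Omega(f(x))$ with initial state $\S$ is $L^{p_0}\a=L^{p_0+c}\u_a+L^{p_0}\s_b$, where $p_0$ is the position of $\S$ in $[\a]$; its $p$-part is $L^{p_0+c}\u_a$ and its $q$-part is $L^{p_0}\s_b$. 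A state $\v=L^r\u_i$ of $[\u_i]$ and its conjugate $\widehat{\v}=\v+\S$ form a conjugate pair with $[L^{\ell}\u_j+\s_k]$ exactly when the sequence $\mathbf{z}=L^r\u_i+L^{p_0+c}\u_a+L^{p_0}\s_b$ through $\widehat{\v}$ lies in $[L^{\ell}\u_j+\s_k]$.

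For Part~(1), I would compare $q$-parts. Since $L^r\u_i$ has zero $q$-part, the $q$-part of $\mathbf{z}$ is $L^{p_0}\s_b$, which must be a shift of $\s_k$ for $\mathbf{z}$ to lie in $[L^{\ell}\u_j+\s_k]$. By the cycle structure in (\ref{equ:g}) the cycles $[\s_0],\dots,[\s_{t_2-1}]$ are pairwise distinct, so $\s_b$ and $\s_k$ are shift-equivalent iff $k=b$; hence no conjugate pair exists when $k\neq b$.

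For $k=b$ I would compare $p$-parts: membership in $[L^{\ell}\u_j+\s_b]$ requires $L^r\u_i+L^{p_0+c}\u_a=L^m\u_j$ for some $m$ (determined modulo $e_1$). Shifting by $-m$ rewrites this as $\u_j+L^{w}\u_i=L^{k_v}\u_a$ with $w=r-m$ and $k_v=p_0+c-m$ reduced modulo $e_1$. Applying Lemma~\ref{lem:irr-saa} to the triple $(j,i,a)$ shows there are exactly $(i-j,\,a-j)_{t_1}=\delta_1$ admissible values of $w$, each yielding one shift $L^{k_v}\u_a$. The assignment $r\mapsto w$ is a genuine bijection: from $w$ one recovers $m=p_0+c-k_v$ and hence $r=w+m$, all modulo $e_1$, so valid states $\v$ correspond one-to-one with admissible $w$. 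Summing over all $\ell$ thus gives exactly $\delta_1$ conjugate pairs, proving Part~(2).

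For Part~(3) I would track the index $\ell$ of the target cycle. Writing $\mathbf{z}=L^s(L^{\ell}\u_j+\s_b)$, the $q$-part matching forces $s\equiv p_0\Mod{e_2}$, while the $p$-part gives $s+\ell\equiv m\Mod{e_1}$. Reducing the second congruence modulo $\gcd(e_1,e_2)$, using $s\equiv p_0$ there as well, and substituting $m\equiv p_0+c-k_v$ collapses everything to the single condition $\ell\equiv c-k_v\Mod{\gcd(e_1,e_2)}$; conversely, a CRT argument on the moduli $e_1$ and $e_2$ shows that whenever this holds, an $s$ modulo $\lcm(e_1,e_2)$ exists and is unique, so each solution $k_v$ contributes exactly one conjugate pair to the cycle indexed by that $\ell$. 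This gives the count $\left|\{k_v~|~c-k_v\equiv\ell\Mod{\gcd(e_1,e_2)}\}\right|$. I expect the main obstacle to be precisely this last bookkeeping: keeping the three moduli $e_1$, $e_2$, and $\gcd(e_1,e_2)$ straight and verifying, via CRT, that admissible pairs $(w,k_v)$ correspond bijectively to conjugate pairs rather than being over- or under-counted.
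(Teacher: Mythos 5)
Your proof is correct and takes essentially the same route as the paper's: both reduce the conjugate-pair condition to requiring that the $\Omega(p(x))$-component $\u_j+L^{w}\u_i$ be a shift of $\u_a$, counted via Lemma~\ref{lem:irr-saa} as $\delta_1=(i-j,a-j)_{t_1}$, and then match the cycle index through the congruence $\ell\equiv c-k_v \Mod{\gcd(e_1,e_2)}$. The only difference is expository: you make explicit the unique $p$-part/$q$-part decomposition (via Lemma~\ref{lemma1}) and the CRT existence--uniqueness of the shift $s$, bookkeeping that the paper compresses into the phrase ``by choosing an appropriate $\ell$.''
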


\begin{proof}
The first statement is clear.

Let $k=b$. Consider, for $ 0 \leq w < e_1$ and $0 \leq \ell < \gcd(e_1,e_2)$, the equation
\begin{equation*}
L^{\ell} \u_j + \s_b + L^w \u_i = L^{\ell}(\u_j + L^{w-{\ell}} \u_i) + \s_b.
\end{equation*}
By Lemma~\ref{lem:irr-saa}, for a given $\ell$, as $w$ runs through $\{0,1,\ldots,e_1-1\}$,
there are $\delta_1$ many $(w-\ell)$'s such that $\u_j + L^{w-\ell}\u_i$ is a shift of $\u_a$.
By choosing an appropriate $\ell$, we ensure that $L^{\ell}(\u_j+L^{w-\ell}\u_i)+\s_b$ is a shift of $L^c \u_a+\s_b$.
Thus, for a given pair $(i,j)$ with $0 \leq i,j <t_1$, the sum of the numbers of conjugate pairs between $[\u_i]$ and
$[L^{\ell} \u_j+\s_b]$ from $\ell=0$ to $\ell=\gcd(e_1,e_2)-1$ is the cyclotomic number $\delta_1$.
This proves Statement 2.

Each of $L^{k_0}\u_a, L^{k_1}\u_a, \ldots, L^{k_{\delta_{1}-1}}\u_a$ corresponds to an
$\ell_v \equiv c-k_v \Mod{\gcd(e_1,e_2)}$ with $0 \leq v <\delta_1$. Hence, $L^{\ell_v} L^{k_v} \u_a = L^{c'}\u_a$
where $c'\equiv c \Mod {\gcd(e_1,e_2)}$. Thus, the number of conjugate pairs between $[\u_i]$ and $[L^{\ell} \u_j+\s_b]$
for a given $\ell$ is indeed as given in (\ref{equ:Llu}).
\qed
\end{proof}

In an analogous way, we can obtain similar results on the number of conjugate pairs between
$[\s_i]$ and $[L^{\ell}\u_k+\s_j]$ for $0\leq i,j <t_2$.

\begin{proposition}\label{prop4}
Let $0\leq i,j<t_2$ and $0 \leq \ell < \gcd(e_1,e_2)$. For a given $(i,j)$, the following properties hold.
\begin{enumerate}
\item $[\s_i]$ and $[L^{\ell}\u_k+\s_j]$ share no conjugate pair when $k \neq a$.
\item The sum of the numbers of conjugate pairs between $[\s_i]$ and $[L^{\ell}\u_a+\s_j]$ from $\ell=0$ to $\ell=\gcd(e_1,e_2)-1$ is the cyclotomic number $\delta_2:=(i-j,b-j)_{t_2}$.
\item Suppose that after computing $(\s_j + L^w \s_i)$, for $0 \leq w < e_2$, we have determined the $\delta_2$
distinct shifts of $\s_b$, say
 \[
L^{k_0}\s_b, L^{k_1}\s_b, \ldots, L^{k_{\delta_{2}-1}}\s_b.
\]
The exact number of conjugate pairs between $[\s_i]$ and $[L^{\ell}\u_a+\s_j]$ is
\begin{equation}\label{equ:Lls}
\left|\{k_v~|~c+k_v\equiv \ell \Mod{\gcd(e_1,e_2)} \mbox{ with } v=0,1,\ldots,\delta_{2}-1\}\right|.
\end{equation}
\end{enumerate}
\end{proposition}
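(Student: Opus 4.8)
The plan is to essentially mirror the proof of Proposition~\ref{prop3}, swapping the roles of the two irreducible factors: here the common summand is the shift $L^{\ell}\u_a$ coming from $p(x)$, while the combining is done among the $\s$-sequences coming from $q(x)$. The only genuinely new feature is a sign change in the final congruence, which I will have to justify carefully.

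First, Statement~1 is immediate. Any sum built from $[\s_i]$ and $[L^{\ell}\u_k+\s_j]$ has $\u$-component equal to a shift of $\u_k$; since $\S$ lies in $[L^c\u_a+\s_b]$, the basic conjugate-pair rule $L^{w_1}\s_i+L^{w_2}(L^{\ell}\u_k+\s_j)=L^{w_3}(L^c\u_a+\s_b)$ can hold only if $\u_k$ is a shift of $\u_a$, forcing $k=a$. Hence no conjugate pair exists when $k\neq a$.

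For Statement~2, fix $k=a$ and write, for $0\le w<e_2$ and $0\le\ell<\gcd(e_1,e_2)$,
\[
L^{\ell}\u_a+\s_j+L^w\s_i=L^{\ell}\u_a+\left(\s_j+L^w\s_i\right).
\]
Applying Lemma~\ref{lem:irr-saa} to the $\s$-sequences (parameters $t_2,e_2$), as $w$ ranges over $\{0,1,\ldots,e_2-1\}$ there are exactly $\delta_2=(i-j,b-j)_{t_2}$ values of $w$ for which $\s_j+L^w\s_i$ is a shift of $\s_b$, say $\s_j+L^w\s_i=L^{k_v}\s_b$. For such $w$ the sum becomes $L^{\ell}\u_a+L^{k_v}\s_b$; normalizing the $\s$-part by the shift $L^{-k_v}$ turns it into $L^{\ell-k_v}\u_a+\s_b$, which by Lemma~\ref{lem:cycle-f} lies in $[L^c\u_a+\s_b]$ exactly when $\ell-k_v\equiv c\Mod{\gcd(e_1,e_2)}$. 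As $\ell$ runs from $0$ to $\gcd(e_1,e_2)-1$, each of the $\delta_2$ solutions $k_v$ satisfies this congruence for exactly one value of $\ell$, so the numbers of conjugate pairs summed over $\ell$ total $\delta_2$.

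Statement~3 then reads off the contribution of a single $\ell$: given the precomputed shifts $L^{k_0}\s_b,\ldots,L^{k_{\delta_2-1}}\s_b$, a conjugate pair arises for that $\ell$ precisely from those $k_v$ with $\ell-k_v\equiv c$, equivalently $c+k_v\equiv\ell\Mod{\gcd(e_1,e_2)}$, giving the count in~(\ref{equ:Lls}). The main point to watch is exactly this sign: because the shared factor here carries the fixed shift $L^{\ell}$ while the $\s$-part must be renormalized from $L^{k_v}\s_b$ back to $\s_b$, that renormalization transfers $-k_v$ onto the $\u$-part, yielding $c+k_v\equiv\ell$ rather than the $c-k_v\equiv\ell$ of~(\ref{equ:Llu}) in Proposition~\ref{prop3}. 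Confirming this bookkeeping, together with the fact that distinct admissible $w$ yield distinct conjugate pairs so that the count is exact, is the only delicate step.
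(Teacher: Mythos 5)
Your proof is correct and is precisely the argument the paper intends: the paper gives no separate proof of Proposition~\ref{prop4}, stating only that it follows ``in an analogous way'' from the proof of Proposition~\ref{prop3}, and your mirroring of that proof---including the renormalization by $L^{-k_v}$ that transfers $-k_v$ onto the $\u_a$-part and yields the congruence $c+k_v\equiv\ell\Mod{\gcd(e_1,e_2)}$ in place of $c-k_v\equiv\ell$---is exactly that analogy carried out with the sign bookkeeping done correctly.
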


Let  $0 \leq i_1, i_2 <t_1$, $0 \leq j_1, j_2 < t_2$, and $0 \leq \ell_1, \ell_2 < \gcd(e_1,e_2)$.
We determine the number of conjugate pairs between $[L^{\ell_1}\u_{i_1}+\s_{j_1}]$ and $[L^{\ell_2}\u_{i_2}+\s_{j_2}]$, using
$\lambda:=(j_2-j_1,b-j_1)_{t_2}$ and $\mu:=(i_2-i_1,a-i_1)_{t_1}$ for brevity. Based on
Lemma~\ref{lem:irr-saa}, we know of the following facts.

\begin{enumerate}[label=Fact \arabic*:,leftmargin=2.1\parindent]
\item $L^{k_0}\s_b, L^{k_1}\s_b, \ldots, L^{k_{\lambda-1}}\s_b$ are the $\lambda$
distinct shifts of $\s_b$ generated from $\s_{j_1}+L^{\ell}\s_{j_2}$.
We denote the corresponding $\ell$'s modulo $e_2$ by $c_0, c_1,\ldots, c_{\lambda-1}$.
\item $L^{k'_0}\u_a,L^{k'_1}\u_a,\ldots,L^{k'_{\mu-1}}\u_a$ are the $\mu$
distinct shifts of $\u_a$ generated from $\u_{i_1}+L^{\ell}\u_{i_2}$. We denote
the corresponding $\ell$'s modulo $e_1$ by $d_0,d_1,\ldots,d_{\mu-1}$.
\end{enumerate}

\begin{proposition}\label{prop5}
With $\lambda$ and $\mu$ as given above, let $0 \leq i_1,i_2 <t_1$, $0 \leq j_1,j_2 <t_2$, and $0 \leq \ell_1,\ell_2 < \gcd(e_1,e_2)$. For a given $(i_1,i_2,j_1,j_2)$-tuple, the following properties hold.
\begin{enumerate}
\item The sum of the numbers of conjugate pairs between cycles $[L^{\ell_1}\u_{i_1}+\s_{j_1}]$ and $[L^{\ell_2}\u_{i_2}+\s_{j_2}]$ over all possible $\ell_1$ and $\ell_2$ is $\lambda \cdot \mu$.
\item The exact number of conjugate pairs between two \underline{distinct} cycles $[L^{\ell_1}\u_{i_1}+\s_{j_1}]$ and $[L^{\ell_2}\u_{i_2}+\s_{j_2}]$ is
\begin{align}\label{equ:twocyc}
|\{
(d_i,k'_i,c_j,k_j)~|~
\ell_1 &\equiv c+k_j-k'_i \Mod{\gcd(e_1,e_2)}\text{ and } \notag \\
\ell_2 &\equiv c+k_j-k'_i + d_i - c_j \Mod{\gcd(e_1,e_2)} \text{, with }\notag\\
0 &\leq i < \mu \text{ and } 0 \leq j < \lambda\}|.
\end{align}
\end{enumerate}

Let $0 \leq i <t_1$, $0 \leq j <t_2$, and $0 \leq \ell < \gcd(e_1,e_2)$. The exact number of conjugate pairs between $[L^{\ell}\u_{i}+\s_{j}]$ and itself is
\begin{align}\label{equ:onecyc}
\frac{1}{2}
|\{
(d_{i},k'_{i},c_{j},k_{j})~|~
d_{i} &\equiv c_{j} \Mod{\gcd(e_1,e_2)} \text{ and } \ell\equiv c+k_{j}-k'_{i}\Mod{\gcd(e_1,e_2)} \notag\\
\text{with } 0 & \leq i <(0,a-i)_{t_1} \text{ and } 0 \leq j < (0,b-j)_{t_2}\}|.
\end{align}
\end{proposition}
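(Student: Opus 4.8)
The plan is to translate the conjugate-pair condition into a single shift-and-add identity in $\Omega(f(x))$, split that identity into independent identities over $\Omega(p(x))$ and $\Omega(q(x))$, and then recombine the two solution sets by the Chinese Remainder Theorem. Write $\a=L^c\u_a+\s_b$, $\b=L^{\ell_1}\u_{i_1}+\s_{j_1}$, and $\c=L^{\ell_2}\u_{i_2}+\s_{j_2}$. By the basic rule, a conjugate pair shared by $[\b]$ and $[\c]$ is the same as a pair of shifts $(r_1,r_2)$ for which $L^{r_1}\b+L^{r_2}\c$ equals the unique shift $L^{k}\a$ whose initial state is $\S$. Since $\u_{i_1},\u_{i_2},\u_a$ have minimal polynomial $p(x)$ while $\s_{j_1},\s_{j_2},\s_b$ have minimal polynomial $q(x)$, Lemma~\ref{lemma1} gives $\Omega(p(x))\cap\Omega(q(x))=\{\0\}$, so this identity splits uniquely as
\[
L^{r_1+\ell_1}\u_{i_1}+L^{r_2+\ell_2}\u_{i_2}=L^{k+c}\u_a \quad\text{and}\quad L^{r_1}\s_{j_1}+L^{r_2}\s_{j_2}=L^{k}\s_b .
\]

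Reading the first identity modulo $e_1$ and the second modulo $e_2$ and applying Lemma~\ref{lem:irr-saa} in the guise of Fact~2 and Fact~1, every solution must come from an index $i$ with $0\le i<\mu$ and an index $j$ with $0\le j<\lambda$. This already yields Statement~1: summed over all $\ell_1$ and $\ell_2$, the $\u$-part contributes $\mu$ admissible shifts of $\u_a$ and the $\s$-part contributes $\lambda$ admissible shifts of $\s_b$ independently, for a total of $\mu\lambda$ conjugate pairs. For Statement~2 I would fix $\ell_1,\ell_2$ and track the shifts: Fact~2 forces $(r_2+\ell_2)-(r_1+\ell_1)\equiv d_i$ and $(k+c)-(r_1+\ell_1)\equiv k'_i \Mod{e_1}$, while Fact~1 forces $r_2-r_1\equiv c_j$ and $k-r_1\equiv k_j \Mod{e_2}$. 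These determine $r_1$ and $r_2$ modulo $e_1$ and modulo $e_2$, and by the CRT they lift to genuine shifts modulo $\lcm(e_1,e_2)$ precisely when the two residues of $r_1$, and likewise of $r_2$, agree modulo $\gcd(e_1,e_2)$. Writing out these two compatibility conditions is exactly where the unknown $k$ cancels, leaving the two congruences of~(\ref{equ:twocyc}); since each admissible pair $(i,j)$ then produces one $(r_1,r_2)$ and conversely, this gives the claimed bijection.

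For the self-adjacency count in~(\ref{equ:onecyc}) I would set $\b=\c$, that is $i_1=i_2$, $j_1=j_2$, and $\ell_1=\ell_2=\ell$, so that $\mu=(0,a-i)_{t_1}$ and $\lambda=(0,b-j)_{t_2}$; subtracting the two congruences of~(\ref{equ:twocyc}) then collapses them to $d_i\equiv c_j$ together with $\ell\equiv c+k_j-k'_i \Mod{\gcd(e_1,e_2)}$. The extra factor $\tfrac12$ arises because both states of a conjugate pair now lie on one cycle, so the fixed-point-free involution $(r_1,r_2)\mapsto(r_2,r_1)$ identifies the two ordered solutions that describe the same unordered pair. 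I expect the main obstacle to be the modular bookkeeping of the previous paragraph: keeping rigorous track of which quantities are read modulo $e_1$, modulo $e_2$, and modulo $\gcd(e_1,e_2)$, and confirming that the CRT compatibility conditions are precisely the displayed congruences with $k$ eliminated. A secondary point needing care is that distinct index pairs $(i,j)$ yield distinct $(r_1,r_2)$, which rests on the $d_i$ and the $c_j$ being distinct as guaranteed by Fact~2 and Fact~1.
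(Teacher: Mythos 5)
Your proposal is correct and takes essentially the same route as the paper's proof: both split the conjugate-pair identity into its $\Omega(p(x))$ and $\Omega(q(x))$ components, parametrize the solutions via Facts 1 and 2, obtain the congruences in (\ref{equ:twocyc}) as the Chinese Remainder Theorem compatibility conditions modulo $\gcd(e_1,e_2)$ (with the unknown overall shift cancelling), and halve the count for a cycle paired with itself. The only difference is cosmetic bookkeeping: you track two shift parameters $(r_1,r_2)$ against the fixed shift of $L^c\u_a+\s_b$ whose initial state is $\S$, whereas the paper uses a single relative shift $\ell$ and allows the target to be any shift of $L^c\u_a+\s_b$; the two parametrizations are in evident bijection.
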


\begin{proof}
Let $C_1=[L^{\ell_1}\u_{i_1}+\s_{j_1}]$ and $C_2=[L^{\ell_2}\u_{i_2}+\s_{j_2}]$.
Let $0 \leq \ell < \lcm(e_1,e_2)$ and consider
\begin{align}\label{equ:usus1}
L^{\ell_1}\u_{i_1}+\s_{j_1}+L^{\ell}(L^{\ell_2}\u_{i_2}+\s_{j_2})&=(L^{\ell_1}\u_{i_1}+L^{\ell+ \ell_2}\u_{i_2})+(\s_{j_1}+L^{\ell}\s_{j_2})\notag\\
&=L^{\ell_1}(\u_{i_1}+L^{\ell+ \ell_2- \ell_1}\u_{i_2})+(\s_{j_1}+L^{\ell}\s_{j_2}).
\end{align}
To guarantee that this sequence is a shift of $L^c\u_a+\s_b$, we must ensure that
$\u_{i_1}+L^{\ell+ \ell_2- \ell_1}\u_{i_2}$ is a shift of $\u_a$ and
$\s_{j_1}+L^{\ell}\s_{j_2}$ is a shift of $\u_b$.
By Fact 1, $\ell$ must satisfy the system of congruences
\begin{equation}\label{equ:usus2}
\begin{cases}
\ell \equiv d_i + \ell_1- \ell_2 \Mod{e_1}~|~0\leq i< \mu,\\
\ell \equiv c_j \Mod{e_2}~|~0 \leq j < \lambda.
\end{cases}
\end{equation}

By the Chinese Remainder Theorem \cite[Theorem 2.9]{Nath00}, the system has
a unique solution if and only if, modulo $\gcd(e_1,e_2)$,
\begin{equation}\label{equ:usus3}
c_j \equiv d_i + \ell_1-\ell_2 \iff
\ell_2 - \ell_1 \equiv d_i-c_j.
\end{equation}

If $\ell_1$ and $\ell_2$ satisfy~(\ref{equ:usus3}) and $\ell$ satisfies~(\ref{equ:usus2}),
then~(\ref{equ:usus1}) can be expressed as
\begin{equation}\label{equ:usus4}
L^{\ell_1} L^{k'_i} \u_a + L^{k_j} \s_{b}=L^{k_j}(L^{\ell_1+k'_i-k_j}\u_a+\s_b).
\end{equation}
Computing modulo $\gcd(e_1,e_2)$ and taking
\begin{equation}\label{equ:usus5}
\ell_1 \equiv c+k_j-k'_i \text{ and } \ell_2 \equiv c+k_j-k'_i + d_i - c_j,
\end{equation}
the sequence in~(\ref{equ:usus4}) is indeed the required shift of $L^c \u_a+ \s_b$. Thus, we get a conjugate pair between $[L^{\ell_1}\u_{i_1}+\s_{j_1}]$ and $[L^{\ell_2}\u_{i_2}+\s_{j_2}]$. Note that, as $\ell_1$ and $\ell_2$ range through all of their respective values, it may happen that $C_1=C_2$ for some $(\ell_1,\ell_2)$ combination. When this is the case, we count the conjugate pairs $(\v,\hat{\v})$ and $(\hat{\v},\v)$ separately even though they are the same. There are $\mu \cdot \lambda$ choices for the tuple $(d_i,k'_i,c_j,k_j)$, proving Statement 1.

To verify Statement 2, notice that the exact number of conjugate pairs between two distinct cycles $[L^{\ell_1}\u_{i_1}+\s_{j_1}]$ and $[L^{\ell_2}\u_{i_2}+\s_{j_2}]$ is equal to the number of
the tuples $(d_i,k'_i,c_j,k_j)$ that satisfy (\ref{equ:usus5}).

It remains to count the exact number of conjugate pairs between $[L^{\ell}\u_{i}+\s_{j}]$ and itself.
By (\ref{equ:usus3}) and (\ref{equ:usus5}), computing modulo $\gcd(e_1,e_2)$, we have $d_{i}\equiv c_{j}$ and $\ell\equiv c+k_{j}-k'_{i}$. When considering the conjugate pairs between a cycle and itself, every conjugate pair is double counted. To get the correct number we halve the count.
\qed
\end{proof}

\begin{theorem}\label{main}
The adjacency graph of $\Omega(f(x))$ can be constructed based on the results in Propositions~\ref{prop1} to~\ref{prop5}.
\end{theorem}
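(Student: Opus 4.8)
The statement is essentially a claim that Propositions~\ref{prop1}--\ref{prop5} \emph{together} suffice to pin down every edge of the adjacency graph, so the plan is to assemble them into an exhaustive case analysis rather than to prove anything genuinely new. First I would recall that, by Lemma~\ref{lem:cycle-f}, the vertex set of the adjacency graph $G$ is the disjoint union of four families of cycles: the zero cycle $[\0]$, the $t_1$ cycles $[\u_i]$, the $t_2$ cycles $[\s_j]$, and the $t_1 t_2 \gcd(e_1,e_2)$ mixed cycles $[L^k\u_i+\s_j]$. Constructing $G$ amounts to specifying, for every unordered pair of (not necessarily distinct) vertices, the number of shared conjugate pairs, that is, the label of the edge joining them, with loops corresponding to a cycle sharing a conjugate pair with itself. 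It therefore suffices to show that each such pair falls into exactly one case covered by the propositions.

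Then I would run the case analysis according to the types of the two endpoints. For a pair of cycles drawn from the families above, the handling is: (i) $[\0]$ against everything else---since the unique conjugate of the state $\0$ is the special state $\S$, and $\S\notin[\u_i]$ and $\S\notin[\s_j]$ for all $i,j$ by the observation preceding Proposition~\ref{prop1}, the only neighbor of $[\0]$ is the mixed cycle $[L^c\u_a+\s_b]$ containing $\S$, contributing a single edge (Proposition~\ref{prop1}), and $[\0]$ carries no loop; (ii) $[\u_i]$--$[\u_j]$ and $[\s_i]$--$[\s_j]$ contribute no edges (Proposition~\ref{prop1}); (iii) $[\u_i]$--$[\s_j]$ is settled by Proposition~\ref{prop2}; (iv) $[\u_i]$ against a mixed cycle by Proposition~\ref{prop3}; (v) $[\s_j]$ against a mixed cycle by Proposition~\ref{prop4}; (vi) two \emph{distinct} mixed cycles by the second statement of Proposition~\ref{prop5}; and (vii) the loop on a single mixed cycle by the final formula of Proposition~\ref{prop5}. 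Since these seven cases are mutually exclusive and cover every unordered pair of vertices, every edge of $G$ together with its label is explicitly determined.

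The substance of the argument is thus the verification that this partition of vertex pairs is exhaustive, which is the step I expect to be the main obstacle. Two points need care. First, one must confirm that $[\0]$ has no edges to any $[\u_i]$ or $[\s_j]$ and carries no loop; this is exactly where the earlier remark that neither $\u_i$ nor $\s_j$ contains $m+n-1$ consecutive zeros is used, ensuring that $\S$---and hence the sole conjugate of $\0$---lies only in a mixed cycle. Second, the per-edge formulas must be read as the operative data: the aggregate ``sum over $\ell$'' assertions in Propositions~\ref{prop3}--\ref{prop5} serve as cyclotomic-number consistency checks, whereas the exact counts in~(\ref{equ:Llu}), (\ref{equ:Lls}), (\ref{equ:twocyc}), and~(\ref{equ:onecyc}) are what actually populate the labels on individual edges. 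Once both points are in place, the graph follows by transcribing these labels onto the corresponding edges.
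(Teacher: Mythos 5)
Your proposal is correct and matches the paper's intent exactly: the paper states Theorem~\ref{main} without a separate proof, treating it as an immediate assembly of Propositions~\ref{prop1}--\ref{prop5}, and your exhaustive case analysis over the four families of cycles (including the loop cases) is precisely the justification left implicit there. Nothing in your argument deviates from or adds beyond what the paper's propositions already establish.
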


In the earlier process of computing the number of conjugate pairs, we emphasize
the \emph{ordering of the cycles} by specifying the parameters $i,j$ and $\ell$ in $[L^{\ell} \u_i+\s_j]$. The main reason is to benefit from the notions of cyclotomic classes and numbers. We also require $\S=(1,0,\ldots,0) \in [L^c \u_a + s_b]$. In practice, however, the order of the cycles does not matter. For two distinct orderings of the cycles, the corresponding matrices constructed based on Theorem~\ref{BEST} can be obtained from each other by properly permuting the rows and columns, which does not affect the cofactor.

\subsection{Finding Conjugate Pairs}\label{subsec:Algo}
Recall the definition of $\s_i$ and its successor $\s_{i+1}$ of an $n$-stage FSR sequence $\s$ with feedback function $f(x_0,\ldots,x_{n-1})$ from Section~\ref{sec:prelims}. A {\it state operator} $T$ turns $\s_i$ into $\s_{i+1}$ with $s_{i+n} =
f(s_i,\ldots,s_{i+n-1})$. Hence, if the state $\s_i$ belongs to cycle $[\s]$, then all the states of $[\s]$ are
$\s_i,T \s_i, T^2 \s_i,\ldots $. If $e$ is the period of $\s$, then the distinct states are
\[
\s_i, T \s_i = \s_{i+1},\ldots, T^{e-1} \s_i = \s_{i+e-1}.
\]
Thus, finding one state in a given cycle is sufficient to generate all distinct states. To reduce clutters,
$T$ may be used to denote the state operator for distinct cycles with distinct stages and $\0$ is used to
denote zero vectors and sequences with arbitrary lengths. The context provides enough information to avoid confusion.

For any irreducible polynomial of degree $n$ and order $e$ over $\F_2$, the corresponding cycle structure is given in (\ref{equ:g}). For each cycle, there are
several ways to find one of its states. One can perform an exhaustive search or use the correspondence between cycles and cyclotomic classes
defined in the Section \ref{sec:cycle} to accomplish the task.

Now, assume that a state belonging to each of the cycles in $\Omega(p(x))$ and in $\Omega(q(x))$ has been found.
We propose an efficient way to determine a state belonging to each of the cycles
in $\Omega(f(x)=p(x)q(x))$. Let $[L^{\ell} \u_i+ \s_j]$ with $i,j,\ell \in \Z$ be a cycle in $\Omega(f(x))$.
If $\u_i = u_0,u_1,u_2,\ldots$ and $\s_j = s_0,s_1,s_2,\ldots$, then we have
\[
L^{\ell} \u_i + \s_j = u_{\ell}+s_0, u_{\ell+1} +s_1, u_{\ell+2} + s_2,\ldots
\]
and the $k$-th state $\v_k=(v_0,v_1,\ldots,v_{m+n-1})$ of $[L^{\ell} \u_i+ \s_j]$ satisfies
\begin{align}\label{eq:s-s}
\v_k &=(u_{\ell+k}+s_{k},\ldots,u_{\ell+k+m+n-1}+s_{k+m+n-1})\notag\\
 &=(u_{\ell+k},\ldots,u_{\ell+k+m+n-1})+(s_k,\ldots,s_{k+m+n-1}).
\end{align}
Note that $(u_{\ell+k}, \ldots, u_{\ell+k+m+n-1})$ and $(s_k,\ldots,s_{k+m+n-1})$ are uniquely and
linearly determined by, respectively, $(u_{\ell+k},\ldots,u_{\ell+k+m-1})$ and $(s_k,\ldots,s_{k+n-1})$. These last two are,
respectively, the $m$-stage $(\ell+k)$-th state of $\u_i$ and the $n$-stage $k$-th state of $\s_j$.
Thus, once the states of $\u_i$ and $\s_j$ are known, we can determine the corresponding state in $L^{\ell} \u_i+ \s_j$.

Now, let $\v_k$ be known. Since $(u_{\ell+k},\ldots,u_{\ell+k+m+n-1})$
and $(s_k,\ldots,s_{k+m+n-1})$ are uniquely and linearly determined by $(u_{\ell+k},\ldots,u_{\ell+k+m-1})$
and $(s_k,\ldots,s_{k+n-1})$, respectively, one can use (\ref{eq:s-s}) to construct nonhomogeneous linear equations whose
unique solution is, by the properties of LFSR, $(u_{\ell+k},\ldots,u_{\ell+k+m-1},s_k,\ldots,s_{k+n-1})$. Thus, from $\v_k$,
the $m$-stage $(\ell+k)$-th state of $\u_i$ and the $n$-stage $k$-th state of $\s_j$ can be uniquely determined.

We construct an $(m+n) \times (m+n)$ matrix $P$ from two matrices, namely an $m \times (m+n)$ matrix $P_1$ built from $p(x)$ and an $n \times (m+n)$ matrix $P_2$ based on $q(x)$. $P_1$ is the first $m$
rows of $P$ while $P_2$ is the last $n$ rows.
The $i$-th row of $P_1$ is the first $m+n$ bits of the sequence generated by the LFSR with
characteristic polynomial $p(x)$ whose $m$-stage initial state has $1$ in the $i$-th position and $0$ elsewhere. Similarly, the $j$-th row of $P_2$ is the first $m+n$ bits of the sequence generated by the LFSR with characteristic polynomial $q(x)$ whose $n$-stage initial state has $1$ in the $j$-th position and $0$ elsewhere. Hence, the first $m$ columns of $P_1$ is the $I_m$ identity matrix and the first $n$ columns of $P_2$ is the $I_n$ identity matrix. Since $p(x)$ and $q(x)$ are distinct irreducible polynomials, $P$ is full-rank.

Let $\v \in \F_2^{m+n}$, $\a \in \F_2^m$, and $\b \in \F_2^n$ be, respectively, the initial $(m+n)$-, $m$-, and $n$-stage states of $L^{\ell} \u_i+ \s_j$, $\u_i$, and $\s_j$. We denote by $(\a,\b) \in \F_2^{m+n}$ the simple concatenation of $\a$ and $\b$. There is a one-to-one correspondence between $\v$ and $(T^{\ell} \a,\b)$ through the mapping $P$
\begin{equation}\label{trans:V-ab}
\v=(T^{\ell} \a,\b) P \text{ and } (T^{\ell} \a,\b)= \v P^{-1}.
\end{equation}
Notice that if $\v$ is the $(m+n)$-stage state of $[\u_i]$, then $\v=(\a,\0) P$
and, if $\v$ is the $(m+n)$-stage state of $[\s_j]$, then $\v=(\0,\b) P$. Clearly,
\[
(T^{k} \v) P^{-1}=T^k (\v P^{-1})= (T^{\ell+k}\a,T^k \b)\mbox{ and } (\v_1+\v_2) P^{-1}=\v_1P^{-1}+\v_2P^{-1}.
\]
We view $(T^{\ell}\a,\b)$ as a state of $[L^{\ell} \u_i+ \s_j]$, keeping in mind that
the actual state is $(T^{\ell}\a,\b)P$.

Let $\p_0,\p_1,\ldots,\p_{t_1-1}$ be arbitrary $m$-stage states of $[\u_0],[\u_1],\ldots,[\u_{t_1-1}]$.
Similarly, let $\q_0,\q_1,\ldots,\q_{t_2-1}$ be arbitrary $n$-stage states of $[\s_0],[\s_1],\ldots,[\s_{t_2-1}]$. Then $(\p_i,\q_j)$ must be a state of $[L^{\ell} \u_i+ \s_j]$ for some $0 \leq \ell < \gcd(e_1,e_2)$.
Similarly, $(T^k \p_i, \q_j)$, for $0\leq k < \gcd(e_1,e_2)$, must be a state of $[L^{\ell+k}\u_i+\s_j]$,
where $\ell+k$ is reduced modulo $\gcd(e_1,e_2)$. Since the exact value of $\ell$ does not affect
the final result, we let $\ell$ be any integer. 
Thus, we obtain one state of each cycle.

We now use this new representation of the states via the mapping $P$ to construct a generic algorithm to find all conjugate pairs between any two cycles in $\Omega(f(x))$. For $C_1,C_2 \in \Omega(f(x))$, let $\v_1= (T^{x_1}\a_1,T^{x_2}\b_1)P$ be a state of $C_1$ and
$\v_2=(T^{x_3}\a_2,T^{x_4}\b_2)P$ a state of $C_2$ where $x_1, x_2, x_3, x_4 \in \Z$.
Let $e_1$ and $e_2$ be the respective period of the sequences containing states $\a_1$, $\a_2$ and $\b_1$, $\b_2$.
We assume that the period of $(\0)$ is $1$.
Algorithm~\ref{algo:b} outputs all conjugate pairs between $C_1$ and $C_2$. If $C_1=C_2$, then each conjugate pair appears twice in the output, first as $(\v,\hat{\v})$ and then as $(\hat{\v},\v)$.

\begin{algorithm}
\caption{Finding All Conjugate Pairs between Two Cycles}
\label{algo:b}
\begin{algorithmic}[1]
 \renewcommand{\algorithmicrequire}{\textbf{Input:}}
 \renewcommand{\algorithmicensure}{\textbf{Output:}}
 \Require $P, \v_1 = (\a_1,\b_1)P$, $\v_2=(\a_2,\b_2)P$, states of $C_1$ and of $C_2$, and $e_1$, $e_2$.
 \Ensure All conjugate pairs between $C_1$ and $C_2$. If $C_1=C_2$, each pair appears twice.
\Procedure{Precomputation~}{$P,\S$}\Comment{Determining $\a_3,\b_3$}
\State \textbf{return} $(\a_3,\b_3) = \S P^{-1}$
\EndProcedure
\Procedure {Subalgorithm~1~}{$\a_1$, $\a_2$, $\a_3$, $e_1$}
\State $counter_1 \gets 0$
\For {$i$ from $0$ to $e_1-1$}\Comment{If $\a_1=\0$, then $e_1=1$}
	\State $temp1 \gets \a_1+\a_3$
	\For {$i'$ from $0$ to $e_1-1$}\Comment{If $\a_2=\0$, then $e_1=1$}
		\If {$temp1=\a_2$}
			\State $counter_1 \gets counter_1+1$
			\State Store and index $(i,i')$; break from this inner loop
		\Else
			\State $temp1 \gets T(temp1)$
		\EndIf
	\EndFor
	\State $\a_1 \gets T\a_1$
\EndFor
\EndProcedure

\Procedure {Subalgorithm~2~}{$\b_1$, $\b_2$, $\b_3$, $e_2$}
\State $counter_2 \gets 0$
\For {$j$ from $0$ to $e_2-1$}\Comment{If $\b_1=\0$, then $e_2=1$}
	\State $temp2 \gets \b_1+\b_3$
	\For {$j'$ from $0$ to $e_2-1$}\Comment{If $\b_2=\0$, then $e_2=1$}
		\If {$temp2=\b_2$}
			\State $counter_2 \gets counter_2+1$
			\State Store and index $(j,j')$; break from this inner loop
		\Else
			\State $temp2 \gets T(temp2)$
		\EndIf
	\EndFor
	\State $\b_1 \gets T\b_1$
\EndFor
\EndProcedure
\Procedure {Main~}{$\v_1=(T^{x_1}\a_1,T^{x_2}\b_1)P$, $\v_2=(T^{x_3}\a_2,T^{x_4}\b_2)P$, $counter_1$, $counter2$}
\If {$counter_1=0$ or $counter_2=0$}
	\State \textbf{return} there is no conjugate pair \Comment{Propositions~\ref{prop1} and~\ref{prop2}}
\EndIf
	\For {$y$ from $1$ to $counter_1$}
    \State Take $(i,i')$ in order
		\For {$z$ from $1$ to $counter_2$}
        \State Take $(j,j')$ in order
           \If {Two elements among $\a_1$, $\a_2$, $\b_1$, $\b_2$ are $\0$}
             \State $\v \gets (T^i\a_1,T^j\b_1)P$; \textbf{output} $(\v,\hat{\v})$; \textbf{break}\Comment{Propositions~\ref{prop1} and~\ref{prop2}}
           \EndIf
	         \If {One element among $\a_1$, $\a_2$, $\b_1$, $\b_2$ is $\0$}
             \If {$\a_1=\0$ or $\b_1=\0$}
               	\If {$i'+x_3 \equiv j'+x_4 \Mod{\gcd(e_1,e_2)}$}
               		\State $\v \gets (T^i\a_1,T^j\b_1)P$; \textbf{output} $(\v,\hat{\v})$\Comment{Propositions~\ref{prop3} and~\ref{prop4}}
				\EndIf
              \Else
              	\If {$i-x_1 \equiv j-x_2 \Mod{\gcd(e_1,e_2)}$}
                	\State $\v \gets (T^i\a_1,T^j\b_1)P$; \textbf{output} $(\v,\hat{\v})$\Comment{Propositions~\ref{prop3} and~\ref{prop4}}
              	\EndIf
             \EndIf
           \EndIf
		   \If {$i-x_1 \equiv j-x_2$ and $i'+x_3 \equiv j'+x_4$ modulo  $\gcd(e_1,e_2)$}
	          \State $\v \gets (T^i\a_1,T^j\b_1)P$; \textbf{output} $(\v,\hat{\v})$\Comment{Proposition~\ref{prop5}}
	       \EndIf
       \EndFor
	\EndFor
\EndProcedure
\end{algorithmic}
\end{algorithm}

\begin{theorem}\label{th:alg}
Algorithm~\ref{algo:b} is correct.
\end{theorem}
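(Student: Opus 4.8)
My plan is to anchor the argument on a single observation: since $\S=(1,0,\ldots,0)$, a state $\v$ and its conjugate satisfy $\widehat{\v}=\v+\S$, so $C_1$ and $C_2$ share a conjugate pair exactly when some $\v\in C_1$ has $\v+\S\in C_2$. I would first transport this condition through the bijection $\v\mapsto\v P^{-1}$ of~(\ref{trans:V-ab}), which sends sums to sums and under which $T$ acts componentwise on the preimage. Writing $(\a_3,\b_3)=\S P^{-1}$ (the Precomputation step), and recognizing that $\a_3$ is a state of $\u_a$ and $\b_3$ a state of $\s_b$ because $\S\in[L^{c}\u_a+\s_b]$, the pair condition splits into a $p(x)$-component equation, a $q(x)$-component equation, and two synchronization constraints. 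The bulk of the proof is to check that the three procedures of the algorithm test exactly these four conditions, neither more nor fewer, so that termination being obvious the only issue is soundness and completeness of the output.

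Next I would analyze the two subalgorithms. Writing a candidate state of $C_1$ as $(T^{i}\a_1,T^{j}\b_1)P$, its conjugate is $(T^{i}\a_1+\a_3,\,T^{j}\b_1+\b_3)P$, so membership of the conjugate in $C_2$ forces the $p(x)$-component $T^{i}\a_1+\a_3$ to be a shift of $\a_2$ and the $q(x)$-component $T^{j}\b_1+\b_3$ to be a shift of $\b_2$. Subalgorithm~1 runs the exhaustive double loop over $0\le i,i'<e_1$ and records precisely the pairs $(i,i')$ with $T^{i'}(T^{i}\a_1+\a_3)=\a_2$, that is $T^{i}\a_1+\a_3=T^{-i'}\a_2$; Subalgorithm~2 does the same in the $q(x)$-component. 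Completeness here is immediate from exhaustiveness of the loops, and I would invoke Lemma~\ref{lem:irr-saa} to identify the number of recorded pairs with the cyclotomic numbers $\mu$ and $\lambda$, thereby tying the enumeration to the counts asserted in Propositions~\ref{prop3}--\ref{prop5}.

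The core of the argument is then the Main procedure's two congruences. I would show that $i-x_1\equiv j-x_2\pmod{\gcd(e_1,e_2)}$ is equivalent, by the Chinese Remainder Theorem exactly as in the proof of Proposition~\ref{prop5}, to the existence of a single power of $T$ carrying $\v_1$ to $(T^{i}\a_1,T^{j}\b_1)P$, i.e.\ to $\v\in C_1$; and symmetrically that $i'+x_3\equiv j'+x_4\pmod{\gcd(e_1,e_2)}$ is equivalent to $\widehat{\v}=(T^{-i'}\a_2,T^{-j'}\b_2)P\in C_2$. Combining the two subalgorithm relations with these two congruences yields a condition that is necessary and sufficient for $(\v,\widehat{\v})$ to be a conjugate pair between $C_1$ and $C_2$, so the Main loop outputs neither too few nor too many pairs. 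The degenerate branches, where one or two of $\a_1,\a_2,\b_1,\b_2$ equal $\0$, I would dispatch by specializing the same equivalence: a zero component has period $1$, which collapses the corresponding loop and congruence. I would then match the ``two zero'' branch to Propositions~\ref{prop1}--\ref{prop2} (a unique pair, or none, detected by $counter_1$ and $counter_2$), and the ``one zero'' branch to Propositions~\ref{prop3}--\ref{prop4}, where exactly one surviving congruence is tested according to whether the zero component sits in $C_1$ or in $C_2$. Finally, when $C_1=C_2$ the roles of $\v$ and $\widehat{\v}$ are symmetric, so each unordered pair surfaces once as $(\v,\widehat{\v})$ and once as $(\widehat{\v},\v)$, accounting for the stated double counting and matching the factor $\tfrac12$ in~(\ref{equ:onecyc}).

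The step I expect to be most delicate is the bookkeeping of the four offset exponents $x_1,x_2,x_3,x_4$ together with the sign conventions hidden in the loop indices $i,i',j,j'$: one must verify that the precise congruences printed in the Main procedure, and not some shifted variant, are the ones equivalent to $\v\in C_1$ and $\widehat{\v}\in C_2$, and that the branch guards correctly detect which of Propositions~\ref{prop1}--\ref{prop5} governs each configuration of zero components. Everything else reduces to the completeness of exhaustive loops and to citing the already-established correspondence in~(\ref{trans:V-ab}) and Lemma~\ref{lem:irr-saa}.
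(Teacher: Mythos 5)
Your proposal is correct and takes essentially the same route as the paper's proof: decompose states through $P^{-1}$, use the subalgorithm relations $T^i\a_1+\a_3=T^{-i'}\a_2$ and $T^j\b_1+\b_3=T^{-j'}\b_2$, invoke the Chinese Remainder Theorem to show that the two congruences modulo $\gcd(e_1,e_2)$ are exactly the solvability conditions, and treat the zero-component cases and the double count for $C_1=C_2$ in the same way. Your reading of the two congruences as membership tests ($\v\in C_1$ and $\widehat{\v}\in C_2$) is merely a repackaging of the paper's formulation, which instead asserts the existence of shifts $\ell,\ell'$ solving the split equation~(\ref{eqn:ab}).
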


\begin{proof}
If $counter_1=0$ or $counter_2=0$, then there does not exist a conjugate pair.

Let $(\a_3,\b_3)P=\S$ be the initial state of $[L^{c}\u_a+\s_b]$.
Without loss of generality, let $C_1=[\s_1]$ and $C_2=[\s_2]$ with initial states $\v_1=(T^{x_1}\a_1,T^{x_2}\b_1)P$ and $\v_2=(T^{x_3}\a_2,T^{x_4}\b_2)P$, respectively. If $C_1$ and $C_2$ share a conjugate pair, then there must exist an integer $0 \leq \ell < \lcm(e_1,e_2)$ such that
$L^{\ell}\s_1+\s_2$ is a shift of $L^{c}\u_a+\s_b$ or $L^{\ell}\s_1+L^{c}\u_a+\s_b$ is a shift of $\s_2$.

Suppose that none of $\a_1$, $\a_2$, $\b_1$, and $\b_2$ is $\0$. Then
there exist $0 \leq \ell, \ell' < \lcm(e_1,e_2)$ such that
$[T^{\ell}(T^{x_1}\a_1,T^{x_2}\b_1)+(\a_3,\b_3)]P
 = T^{\ell'}[(T^{x_3}\a_2,T^{x_4}\b_2)P]=[T^{\ell'}(T^{x_3}\a_2,T^{x_4}\b_2)]P$.
Hence, $T^{\ell}(T^{x_1}\a_1,T^{x_2}\b_1)+(\a_3,\b_3)=T^{\ell'}(T^{x_3}\a_2,T^{x_4}\b_2)$.
Splitting the expression into two separate components, consider
\begin{equation}\label{eqn:ab}
T^{\ell+x_1}\a_1+\a_3=T^{\ell'+x_3}\a_2 \mbox{ and } T^{\ell+x_2}\b_1+\b_3=T^{\ell'+x_4}\b_2.
\end{equation}
We have $(i,i')$ and $(j,j')$ satisfying $T^i \a_1+ \a_3 = T^{-i'}\a_2$ and $T^j\b_1+\b_3=T^{-i'}\b_2$ from the two subalgorithms.
To ensure that~(\ref{eqn:ab}) holds, it must be the case that
\[
\begin{cases}
\ell \equiv i-x_1 \Mod{e_1}\\
\ell \equiv j-x_2 \Mod{e_2}
\end{cases} \mbox{ and}
\quad
\begin{cases}
\ell' \equiv -i'-x_3 \Mod{e_1}\\
\ell' \equiv -j'-x_4 \Mod{e_2}
\end{cases}.
\]
To satisfy the requirements, we know from the Chinese Remainder Theorem that the congruences $i-x_1 \equiv j-x_2$ and
$i'+x_3 \equiv j'+x_4 $ modulo $\gcd(e_1,e_2)$ must be simultaneously satisfied.
When $(i,i')$ and $(j,j')$ satisfy the congruences,
$(T^i\a_1,T^j\b_1)P$ and $(T^{-i'}\a_2,T^{-j'}\b_2)P$ form a
conjugate pair.

If $\a_1=\0$ or $\a_2=\0$ but $\b_1, \b_2$ are not $\0$, we assume $\a_1=\0$. Hence,
$(i,i')=(0,i')$ and (\ref{eqn:ab}) becomes $\a_3=T^{\ell'+x_3}\a_2 \mbox{ and } T^{\ell+x_2}\b_1+\b_3=T^{\ell'+x_4}\b_2$. If there exists $(j,j')$ such that
$T^j\b_1+\b_3=T^{-i'}\b_2$, then there is an $\ell$ with the required properties.
It now suffices to check that $\ell'$ satisfies $\ell' \equiv -i'-x_3 \Mod{e_1}$ and
$\ell' \equiv -j'-x_4 \Mod{e_2}$ to ensure $ i'+x_3 \equiv j'+x_4  \Mod{\gcd(e_1,e_2)}$.

The other cases can be similarly proved.
\qed
\end{proof}

We gain significantly from using the new representation of the states.
Algorithm \ref{algo:b} relies on the representation to transform
the problem of finding conjugate pairs between any two cycles in $\Omega(f(x))$
into the analogous problem in the smaller sets of cycles $\Omega(p(x))$ and $\Omega(q(x))$
whose characteristic polynomials are irreducible.
The two subalgorithms ensure that the sum of the two states is equal to the indicated
part in the new representation of $\S$. Finding a conjugate
pair between any two cycles in $\Omega(f(x))$ by exhaustive search can be done in $(\lcm(e_1,e_2))^2$ times. Algorithm \ref{algo:b} requires at most $e_1^2+e_2^2$ times
to complete the same task.

In particular, if $\a_1,\a_2,\a_3$ are states of cycles in $\Omega(p(x))$
where $p(x)$ is primitive, then the connection can be made simpler by using the Zech logarithm
$\tau_{n}(\ell)$. Recall that for a primitive element $\alpha\in \F_{2^n}$, $1+\alpha^{\ell}=\alpha^{\tau_n(\ell)}$ for $1 \leq \ell <2^n-1$.
If $\a$ is an $n$-stage state of an $m$-sequence, then Lemma \ref{lem:saa} says that $\a+T^{\ell}\a=T^{\tau_n(\ell)}\a$.
Suppose it has been established that $\a:=\a_1=\a_2$ and $\a_3=T^{k}\a$. Then the
output $(i,i')$ in the first subalgorithm implies $T^i\a_1+\a_3=T^{-i'}\a_2$. Hence,
\[
T^i\a+T^k\a=T^k(\a+T^{i-k}\a)=T^{k+\tau_n(i-k)}\a=T^{-i'}\a,
\]
with $i\in \{0,1,\ldots,2^n-2\}\setminus\{k\}$. Thus, as $i$ ranges over the set $\{0,1,\ldots,2^n-2\}\setminus\{k\}$, all possible values for $(i,i')$ are given by
$\{(i,-k-\tau_n(i-k))\}$. In this special case, knowing $\tau_{n}(\ell)$ is sufficient to deduce
all possible $(i,i')$s.

\begin{remark}\label{rem:2}
Several remarks regarding Algorithm~\ref{algo:b} are in order.
\begin{enumerate}

\item The choice of a state belonging to a cycle affects neither the number of
conjugate pairs nor the states being paired in each conjugate pair.
\item The ordering of $\a_1$ and $\a_2$ matters in Subalgorithm 1. If the
output on input $(\a_1,\a_2,\a_3,e_1)$ is $(i,i')$, then that on input
$(\a_2,\a_1,\a_3,e_1)$ is $(-i',-i)$.

\item Each subalgorithm finds
"conjugate pairs" between two cycles constructed from one irreducible minimal polynomial
by exhaustive searching.
An improvement on this approach may give a significant speed up. In $\Omega(p(x))$, suppose that $\a_1$, $\a_2$, and $\a_3$ are the respective states of cycles $[\u_i]$, $[\u_j]$, and $[\u_k]$. Then Subalgorithm 1 should output $(i-k,j-k)_{t_1}$ tuples and can be stopped once all of them have been found. If it has been established that the cyclotomic number is $0$, then there is no need to
run the algorithm on this particular input case. Knowing the cyclotomic numbers allows us to truncate
the running of the algorithm. Equivalently, up to some values of $m$ and $n$, the two subalgorithms can determine the exact cyclotomic numbers computationally by using $counter_1$ and $counter_2$.

\item Let us consider the running time. The precomputation gives us $\S P^{-1}=(\a_3,\b_3)$.
Hence, we immediately infer which cycle shares a conjugate pair with $[\0]$ without having
to run the subalgorithms. By Item 2 above, the outputs of Subalgorithm 1$(\a_2,\a_1,\a_3,e_1)$
follow directly from the outputs of Subalgorithm 1$(\a_1,\a_2,\a_3,e_1)$.
Thus, Subalgorithm 1 needs to perform at most $\frac{t_1(t_1-1)}{2}+t_1=\frac{t_1(t_1+1)}{2}$
operations. The total for the two subalgorithms is therefore
$\frac{t_1(t_1+1)}{2}+\frac{t_2(t_2+1)}{2}$. The main procedure needs to be performed at most
$e_1 \cdot e_2$ times. The total number one needs to repeat Algorithm~\ref{algo:b} to complete
the adjacency matrix is bounded above by the square of the number of cycles in $\Omega(f(x))$.
\end{enumerate}
\end{remark}

To end this subsection, we provide a rough estimate on the number of de Bruijn sequences generated by our method. Let $G$ be the adjacency graph of $\Omega(p(x)q(x))$. The number is the cofactor of any entry of the symmetric and positive definite matrix $\M$ in Theorem~\ref{BEST}. With $[\0]$ as the first vertex, we use the cofactor of the entry $\M_{1,1}=1$. The product of the (remaining) entries in the main diagonal of $\M$ is a reasonably good heuristic to approximate the number.

In the main diagonal, $1$ occurs once,  $e_1$ appears $t_1$ times, $e_2$ appears $t_2$ times, and
there are $\chi:=\frac{(2^n-1)(2^m-1)}{\lcm(e_1,e_2)}=t_1 \cdot t_2 \cdot \gcd(e_1,e_2)$ other entries,
each is approximately $\lcm(e_1,e_2)$. The product of these $\chi$ entries is
\begin{equation}\label{numberdB}
E \approx \left(\lcm(e_1,e_2)\right)^{\chi}
=\left(\frac{e_1 \cdot e_2}{\gcd(e_1,e_2)}\right)^{\chi}
=\left(\frac{(2^m-1)(2^n-1)}{\chi}\right)^{\chi} \approx \left(\frac{2^{m+n}}{\chi}\right)^{\chi}.
\end{equation}
We use the last expression as a rough estimate on the number of de Bruijn sequences constructed in this work.

\section{A Detailed Example}\label{sec:ex}
This section demonstrates how the general techniques developed above fit together nicely by way of a worked-out example.
Let $p(x)=x^4+x^3+x^2+x+1$ and $q(x)=x^4+x+1$. Note that $p(x)$ is not primitive. Let $\alpha$ be a root of $q(x)$. Then $\beta=\alpha^3$ and the order of $\beta$ is $5$. Given $0 \leq j <15$, Table~\ref{table:ex} provides the representation $\left(a_{j,0},a_{j,1},a_{j,2},a_{j,3}\right)$ of $\alpha^j$ in the $\beta$ basis $\{1,\beta,\beta^2,\beta^3\}$ and $\varphi(\alpha^j)=\left(a_{j,0},a_{j+3,0},a_{j+6,0},a_{j+9,0}\right)$.

\begin{table}[h!]
\caption{List of $\varphi(\alpha^j)$ for $0 \leq j < 15$}
\label{table:ex}
\centering
\renewcommand{\arraystretch}{1.1}
\begin{tabular}{ccc|ccc|ccc}
\hline
$j$ & in $\beta$ basis & $\varphi(\alpha^j)$  & $j$ & in $\beta$ basis & $\varphi(\alpha^j)$ & $j$ & in $\beta$ basis & $\varphi(\alpha^j)$\\
\hline
$0$ & $(1,0,0,0)$ & $(1,0,0,0)$ & $5$ & $(0,0,1,1)$ & $(0,1,0,1)$ & $10$ & $(1,0,1,1)$ & $(1,1,0,1)$\\

$1$ & $(0,1,0,1)$ & $(0,1,1,1)$ & $6$ & $(0,0,1,0)$ & $(0,0,1,1)$ & $11$ & $(0,1,1,1)$ & $(0,1,0,0)$\\

$2$ & $(0,1,1,0)$ & $(0,0,1,0)$ & $7$ & $(1,0,0,1)$ & $(1,1,1,0)$ & $12$ & $(1,1,1,1)$ & $(1,1,0,0)$\\

$3$ & $(0,1,0,0)$ & $(0,0,0,1)$ & $8$ & $(1,1,1,0)$ & $(1,0,1,0)$ & $13$ & $(1,0,1,0)$ & $(1,0,1,1)$\\

$4$ & $(1,1,0,1)$ & $(1,1,1,1)$ & $9$ & $(0,0,0,1)$ & $(0,1,1,0)$ & $14$ & $(1,1,0,0)$ & $(1,0,0,1)$\\
\hline
\end{tabular}
\end{table}

By (\ref{eq:corres}), $\u_i=\left(a_{i,0},a_{i+3,0},a_{i+6,0},a_{i+9,0},a_{i+12,0} \right)$. Therefore, $\u_0=(10001)$, $\u_1=(01111)$, $\u_2=(00101)$, and $\s=(10001~00110~10111)$. We have $\Omega(p(x))=[\0]\cup [\u_0] \cup [\u_1] \cup [\u_2]$ and $\Omega(q(x))=[\0]\cup[\s]$. Thus, there are 20 disjoint cycles in $\Omega(f(x))$. Writing explicitly,
\[
\Omega(f(x))=[\0]~\cup~[\s]~\cup~\bigcup_{i=0}^2[\u_i]~\cup~\left(\bigcup_{i=0}^2\bigcup_{j=0}^{4}[L^j\u_i+\s]\right).
\]
The ordering of the $20$ cycles in use is
\[
[\0],[\u_0],[\u_1],[\u_2],[\s],[\u_0+\s],\ldots, [L^4\u_0+\s],[\u_1+\s],\ldots,[L^4\u_1+\s],[\u_2+\s],\ldots,[L^4\u_2+\s].
\]
We show how to implement Algorithm~\ref{algo:b}, work on the adjacency graph of $\Omega(f(x))$, and construct the associated matrix $\M_1$.

The $4$-stage states of $\u_0,\u_1,\u_2,$ and $\s$ are, respectively, $\p_0=(1000)$, $\p_1=(0111)$, $\p_2=(0010)$, and $\q=(1000)$.
The cycles in $\Omega(f(x))$ can be represented by their $8$-stage states
\begin{align*}
&(\0,\0)\in [\0],\ \ (\p_i,\0)\in [\u_i] \text{ for } i\in\{0,1,2\},\\
&(\0,\q)\in [\s],\ \ (T^j \p_i,\b)\in [L^j\u_i+\s] \text{ for } i\in \{0,1,2,3,4\}.
\end{align*}
Using sequences $\u_0,\u_2,\u_2$, and $\s=(10001~00110~10111)$,
\begin{equation*}
\setlength{\arraycolsep}{5pt}
P=
\begin{pmatrix}
P_1\\
P_2\\
\end{pmatrix}
=
\begin{pmatrix}
1 & 0 & 0 & 0 & 1 & 1 & 0 & 0  \\
0 & 1 & 0 & 0 & 1 & 0 & 1 & 0  \\
0 & 0 & 1 & 0 & 1 & 0 & 0 & 1  \\
0 & 0 & 0 & 1 & 1 & 0 & 0 & 0  \\
\hline
1 & 0 & 0 & 0 & 1 & 0 & 0 & 1  \\
0 & 1 & 0 & 0 & 1 & 1 & 0 & 1  \\
0 & 0 & 1 & 0 & 0 & 1 & 1 & 0  \\
0 & 0 & 0 & 1 & 0 & 0 & 1 & 1
\end{pmatrix}.
\end{equation*}
We compute $\S P^{-1}=((1101),(0101))=(T^3 \p_1,T^9 \q) =
T^9(T^4 \p_1,\q)\in[L^4 \u_1+ \s]$ to conclude that $[\0]$ and $[L^4 \u_1+ \s]$
are adjacent and the unique conjugate pair shared by $[\u_1]$ and $[\s]$ is
$(\v=(T^3 \p_1,\0)P,\hat{\v})$. In Algorithm~\ref{algo:b}, use $\a_3=(1101)=T^3 \p_1$ and $\b_3=(0101)=T^9 \q$.

Running Subalgorithm~1, we have
\begin{equation}\label{eq:out1}
\renewcommand{\arraystretch}{1.2}
\begin{array}{c|c|c|c| c|c|c}
(\a_1,\a_2)= &  (\p_0,\p_0) & (\p_0,\p_1) & (\p_0,\p_2) & (\p_1,\p_2) & (\p_2,\p_2) & (\0,\p_1)\\
\{(i,i')\}=& \{(1,1),(4,4)\} & \{(2,3),(3,1)\} & \{(0,4)\} & \{(0,3),(1,0)\} & \{(3,1),(4,2)\} & \{(0,2)\}
\end{array}.
\end{equation}
There is no output corresponding to $(\p_1,\p_1)$.
The rest of the outputs can be directly obtained by invoking Item 2 in Remark~\ref{rem:2}.
Hence, for $(\a_1,\a_2) \in \{(\p_1,\p_0),(\p_2,\p_0),(\p_2,\p_1)\}$, the respective outputs
$\{(i,i')\}$ are $\{(2,3),(4,2)\},\{(1,0)\}$, and $\{(2,0),(0,4)\}$.

On input $(\0,\q, T^9\q,15)$, Subalgorithm 2 outputs $(j,j')=(0,6)$. On $(\q,\q,T^9 \q,15)$, it outputs $\{(j,j'=-9-\tau_4(j-9))\}$ with $ j \neq 9$. The values of $\tau_4 (y)$ for $1 \leq y <15$ is reproduced here from~\cite[p.~39]{GG05}
\begin{equation*}
\renewcommand{\arraystretch}{1.2}
\setlength{\arraycolsep}{5pt}
\begin{array}{c|ccccc ccccc cccc}
y         & ~1 & 2 & 3  & 4 & 5  & 6  & 7 & 8 & 9 & 10 & 11 & 12 & 13 & 14 \\
\hline
\tau_4(y) & ~4 & 8 & 14 & 1 & 10 & 13 & 9 & 2 & 7 & 5 & 12 & 11 & 6 & 3
\end{array}.
\end{equation*}

The conjugate pair(s) shared by any two cycles in $\Omega(f(x))$ can now be determined.
We consider three cases in details.

\begin{enumerate}[label=Case \arabic*:,leftmargin=2.3\parindent]
\item The state $(\p_0,\0)P$ belongs to $C_1=[\u_0]$.

We have $(i,i') \in \{(1,1),(4,4),(2,3),(3,1),(0,4)\}$
and $(j,j')=(0,6)$. Since $\a_1 =\p_0$ and $\a_3 = T^3 \p_1$, $\a_2 \neq \0$.
Since $\b_3 \neq \0$ and $\b_1=\0$, $\b_2 \neq \0$. Hence, $C_2$ that shares at
least a conjugate pair with $[\u_0]$ must be of the form $[L^{\ell}\u_k+\s]$ for $0 \leq k <3$.
The \textsf{if loop} to consider starts from Line 45 in Algorithm~\ref{algo:b}. Note that
$x_3 = \ell$ and $x_4=0$, so $i'+\ell\equiv j' \Mod{5}$. Table~\ref{table:case1} provides the
relevant results. The state $\v$ in $C_1$ and the state $\hat{\v}$ in $C_2$ form a conjugate pair.

\begin{table}
\caption{Values obtained for Case 1}
\label{table:case1}
\centering
\renewcommand{\arraystretch}{1.2}
\begin{tabular}{c||c|c|c|c||c|c|c|c}
\hline
$k$ & $(i,i')$ & $\ell$ & $C_2$ & $\v$ & $(i,i')$ & $\ell$ & $C_2$ & $\v$\\
\hline
$0$ & $(1,1)$ & $0$ & $[\u_0+\s]$ & $(T \p_0,\0) P$ & $(4,4)$ & $2$ & $[L^{2}\u_0+\s]$ & $(T^4 \p_0,\0)P$ \\

$1$ & $(2,3)$ & $3$ & $[L^{3}\u_1+\s]$ & $(T^2 \p_0,\0)P$ & $(3,1)$  & $0$ & $[\u_1+\s]$ & $(T^3 \p_0,\0)P$ \\

$2$ & $(0,4)$ & $2$ & $[L^{2}\u_2+\s]$ & $(\p_0,\0)P$ & & & &\\
\hline
\end{tabular}
\end{table}

\item $C_1 = [L\u_0+\s]$ with $(\a_1=T \p_0,\b_1=\q)P$ as a state and $C_2 \in \{[\0],[\s],[\u_k]\}$.

Subalgorithm 1 does not output any $(i,i')$ on input $(\p_0,\0,T^3 \p_1,5)$. Thus, there is
no conjugate pair between $[L\u_0+\s]$ and either $[\0]$ or $[\s]$.

Let $[\u_k]$ have $(\a_2=\p_k, \b_2=\0)P$ as a state. On input $(\q,\0, T^9\q,15)$, $(j,j')=(-6,0)=(9,0)$. Refer to Line 51 in Algorithm~\ref{algo:b}. Since $x_1=1$ and $x_2=x_3=x_4=0$, there exists a conjugate pair between $[L\u_0+\s]$ and $[\u_k]$ if and only if
$i \equiv 0 \Mod{5}$. From (\ref{eq:out1}), this holds only if $k=2$, {\it i.e.}, $\a_2=\p_2$. Thus, only
$[L\u_0+\s]$ and $[\u_2]$ share a conjugate pair with $\v=(T^i \p_0, T^j \q)P=(\p_0, T^{9} \q)P$.

\item $C_1 = [L\u_0+\s]$ with $\v_1=(T \p_0,\q)P$ and $C_2=[L^{\ell}\u_k+\s]$ with $\v_2=(T^{\ell} \p_k,\q)P$.

It is clear that $x_1=1$, $x_2=x_4=0$, and $x_3=\ell$. Since none of $\a_1,\a_2,\b_1$, and $\b_2$ is $\0$, refer to Line 56 in the algorithm. There exists a conjugate pair between $[L \u_0+\s]$ and $[L^{\ell} \u_k+\s]$ if and only if $i-1\equiv j\Mod{5}$ and $i'+\ell\equiv j'\Mod{5}$. Table~\ref{table:case3} summarizes our computation with $j=9$ excluded from consideration.
The state $\v$ in $C_1$ and the state $\hat{\v}$ in $C_2$ form a conjugate pair.

\begin{table}[h!]
\vspace{-0.5cm}
\caption{Values obtained for Case 3}
\label{table:case3}
\centering
\renewcommand{\arraystretch}{1.2}
\begin{tabular}{c|c|c|c|c|c}
\hline
$k$ & $(i,i')$ & Requirement $\Mod{5}$ & $(j,\ell)$ & $C_2$ & $\v$ \\
\hline
$0$ & $(1,1)$ & $\ell\equiv -\tau_{4}(j-9)$ & $(0,2)$  & $[L^2 \u_0 + \s]$ & $(T \p_0, \q)P$       \\
    &         &                             & $(5,3)$  & $[L^3 \u_0 + \s]$ & $(T \p_0, T^5 \q)P$    \\
    &         &                             & $(10,1)$ & $[L \u_0 + \s]$   & $(T \p_0, T^{10} \q)P$ \\

    & $(4,4)$ & $\ell\equiv 2-\tau_4(j-9)$  & $(3,0)$  & $[\u_0 + \s]$     & $(T^4 \p_0, T^3 \q)P$    \\
    &         &                             & $(8,4)$  & $[L^4 \u_0 + \s]$ & $(T^4 \p_0, T^8 \q)P$   \\
    &         &                             & $(13,1)$ & $[L \u_0 + \s]$   & $(T^4 \p_0, T^{13} \q)P$ \\
\hline
$1$ & $(2,3)$ & $\ell\equiv 3-\tau_4(j-9)$  & $(1,4)$  & $[L^4 \u_1 + \s]$ & $(T^2 \p_0, T \q)P$     \\
    &         &                             & $(6,2)$  & $[L^2 \u_1 + \s]$ & $(T^2 \p_0, T^6 \q)P$    \\
    &         &                             & $(11,0)$ & $[\u_1 + \s]$     & $(T^2 \p_0, T^{11} \q)P$ \\

    & $(3,1)$ & $\ell\equiv -\tau_4(j-9)$   & $(2,3)$  & $[L^3 \u_1 + \s]$ & $(T^3 \p_0, T^2 \q)P$   \\
    &         &                             & $(7,4)$  & $[L^4 \u_1 + \s]$ & $(T^3 \p_0, T^7 \q)P$    \\
    &         &                             & $(12,1)$ & $[L \u_1 + \s]$   & $(T^3 \p_0, T^{12} \q)P$ \\
\hline
$2$ & $(0,4)$ & $\ell\equiv 2-\tau_4(j-9)$  & $(4,2)$  & $[L^2 \u_2 + \s]$ & $(\p_0, T^4 \q)P$       \\
    &         &                             & $(14,2)$ & $[L^2 \u_2 + \s]$ & $(\p_0, T^{14} \q)P$     \\
\hline
\end{tabular}
\end{table}
\end{enumerate}

The rest of the cases can be analyzed in a similar manner. Once all possible cases have been examined,
the completed adjacency matrix is given in (\ref{eq:AM}).

The last step is to compute the cofactor of any of the matrix's entries. Our approach constructs $2,003,859,941,621,760,000 \approx 2^{60.797}$ de Bruijn sequences. Our approximation in (\ref{numberdB}) gives $\left(\sfrac{2^{8}}{15}\right)^{15} \approx 2^{61.397}$.

\begin{equation}\label{eq:AM}
\setlength{\arraycolsep}{2pt}
\renewcommand{\arraystretch}{1.4}
\small
\M_1=
\left(
\begin{array}{@{}*{20}{r}@{}}
1 & 0 & 0 & 0 & 0 & 0 & 0 & 0 & 0 & 0 & 0 & 0 & 0 & 0 & -1 & 0 & 0 & 0 & 0 & 0\\
0 & 5 & 0 & 0 & 0 & -1 & 0 & -1 & 0 & 0 & -1 & 0 & 0 & -1 & 0 & 0 & 0 & -1 & 0 & 0\\
0 & 0 & 5 & 0 & -1 & 0 & 0 & 0 & -1 & -1 & 0 & 0 & 0 & 0 & 0 & 0 & -1 & 0 & -1 & 0\\
0 & 0 & 0 & 5 & 0 & 0 & -1 & 0 & 0 & 0 & 0 & -1 & -1 & 0 & 0 & -1 & 0 & 0 & 0 & -1\\
0 & 0 & -1 & 0 & 15 & 0 & 0 & 0 & 0 & 0 & -3 & -3 & -3 & -3 & -2 & 0 & 0 & 0 & 0 & 0\\

0 & -1 & 0 & 0 & 0 & 15 & -1 &-3& 0 & -1& 0 & -1 & -2 & -1 & -2 & -1 & 0 & 0 & -1 & -1\\
0 & 0 & 0 & -1 & 0 & -1 & 13 & -1 & -1 & -1 & -1 & -1 & -1 & -1& -2 & 0 & 0 & -2 & 0 & 0\\
0 & -1 & 0 & 0 & 0 & -3 & -1 & 15 & -1 & 0 & -1 & -2 & -1 & 0 & -2 & -1 & -1 & 0 & 0 & -1\\
0 & 0 & -1 & 0 & 0 & 0 & -1 & -1 & 13 & -2 & 0 & -1 & -1 & -3 & 0 & -1 & -1 & 0 & -1 & 0\\
0 & 0 & -1 & 0 & 0 & -1 & -1 & 0 & -2 & 13 & -3 & -1 & -1 & 0 & 0 & 0 & -1 & 0 & -1 & -1\\

0 & -1 & 0 & 0 & -3 & 0 & -1 & -1 & 0 & -3 & 15 & 0 & 0 & 0 & 0 & -2 & -1 & -1 & -1 & -1\\
0 & 0 & 0 & -1 & -3 & -1 & -1 & -2 & -1 & -1 & 0 & 15 & 0 & 0 & 0 & 0 & 0 & -1 & -3 & -1\\
0 & 0 & 0 & -1 & -3 & -2 & -1 & -1 & -1 & -1 & 0 & 0 & 15 & 0 & 0 & -1 & -3 & -1 & 0 & 0\\
0 & -1 & 0 & 0 & -3 & -1 & -1 & 0 & -3 & 0 & 0 & 0 & 0 & 15 & 0 & -1 & -1 & -1 & -1 & -2\\
-1 & 0 & 0 & 0 & -2 & -2 & -2 & -2 & 0 & 0 & 0 & 0 & 0 & 0 & 15 & -2 & 0 & -2 & 0 & -2\\
0 & 0 & 0 &  -1 & 0 & -1 & 0 & -1 & -1 & 0 & -2 & 0 & -1 & -1 & -2 & 15 & 0 & -1 & -1 & -3\\

0 & 0 & -1 & 0 & 0 & 0 & 0 & -1 & -1 & -1 & -1 & 0 & -3 & -1 & 0 & 0 & 13 & -1 & -2 & -1\\
0 & -1 & 0 & 0 & 0 & 0 & -2 & 0 & 0 & 0 & -1 & -1 & -1 & -1 & -2 & -1 & -1 & 13 & -1 & -1\\
0 & 0 & -1 & 0 & 0 & -1 & 0 & 0 & -1 & -1 & -1 & -3 & 0 & -1 & 0 & -1 & -2 & -1 & 13 & 0\\
0 & 0 & 0 & -1 & 0 & -1 & 0 & -1 & 0 & -1 & -1 & -1 & 0 & -2 & -2 & -3 & -1 & -1 & 0 & 15
\end{array}
\right).
\end{equation}

\section{Some Special Cases}\label{sec:spcases}
Theorem~\ref{main} makes clear that, for general irreducible polynomials $p(x)$ and $q(x)$, determining all conjugate pairs between any two cycles can be quite complicated. This section highlights three special cases for which the process is much simpler.
\subsection{The orders of $p(x)$ and $q(x)$ are relatively prime}\label{subsec1}
Based on Lemma~\ref{lem:cycle-f}, when $\gcd(e_1,e_2)=1$,
\begin{equation}\label{equ:cycle-f1}
\Omega(f(x))=[\0]~\cup~\bigcup_{i=0}^{t_1-1}[\u_i]~\cup~\bigcup_{j=0}^{t_2-1}[\s_j]~\cup
             ~\left(\bigcup_{i=0}^{t_1-1}~\bigcup_{j=0}^{t_2-1}[\u_i+\s_j]\right).
\end{equation}
Directly applying Propositions~\ref{prop1} to~\ref{prop5} leads to the next result.
\begin{proposition}\label{prop:rp}
Let $\S \in [\u_a+\s_b]$ for some $a$ and $b$. The following properties hold.
\begin{enumerate}
\item $[\0]$ and $[\u_a+\s_b]$ are adjacent.
\item Let $0 \leq i <t_1$ and $0 \leq j <t_2$. There is no conjugate pair between
$[\u_i]$ and $[\u_j]$ and between $[\s_i]$ and $[\s_j]$. There is a conjugate pair
shared by $[\u_i]$ and $[\s_j]$ if and only if $i=a$ and $j=b$, in which case
the pair is unique.
\item There is no conjugate pair between $[\u_i]$ and $[\u_j+\s_k]$ if $k \neq b$.
For $0 \leq i,j <t_1$, the number of conjugate pairs between
$[\u_i]$ and $[\u_j+\s_b]$ is the cyclotomic number $(i-j,a-j)_{t_1}$.
\item There is no conjugate pair between $[\s_i]$ and $[\u_k+\s_j]$ if $k \neq a$.
For $0\leq i,j<t_2$, the number of conjugate pairs between
$[\s_i]$ and $[\u_a+\s_j]$ is the cyclotomic number $(i-j,b-j)_{t_2}$.
\item For $0\leq i_1,i_2<t_1$ and $0\leq j_1,j_2<t_2$, the number of conjugate pairs
between two {\bf distinct} cycles $[\u_{i_1}+\s_{j_1}]$ and $[\u_{i_2}+\s_{j_2}]$ is
$(i_2-i_1,a-i_1)_{t_1}\cdot(j_2-j_1,b-j_1)_{t_2}$.
\item The number of conjugate pairs between $[\u_{i}+\s_{j}]$ and itself is
$\frac{1}{2} (0,a-i)_{t_1} \cdot (0,b-j)_{t_2}$.
\end{enumerate}
\end{proposition}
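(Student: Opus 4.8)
The plan is to obtain Proposition~\ref{prop:rp} as a direct specialization of Propositions~\ref{prop1} through~\ref{prop5} to the case $\gcd(e_1,e_2)=1$. The single structural fact driving every simplification is that when $\gcd(e_1,e_2)=1$ the shift index in $[L^k\u_i+\s_j]$ runs only over $k=0$, so every cycle of the third type reduces to $[\u_i+\s_j]$ exactly as recorded in~(\ref{equ:cycle-f1}); in particular the constant $c$ attached to the special state $\S\in[L^c\u_a+\s_b]$ satisfies $c\equiv 0\Mod{\gcd(e_1,e_2)}$, so $\S\in[\u_a+\s_b]$. I would first state this collapse explicitly and then read off each item.

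Items~1 and~2 are immediate: Item~1 is Proposition~\ref{prop1} once $c$ is absorbed, and Item~2 combines the non-adjacency assertions of Proposition~\ref{prop1} with the uniqueness statement of Proposition~\ref{prop2}, both of which hold verbatim. For Items~3 and~4 I would invoke Propositions~\ref{prop3} and~\ref{prop4}. The vanishing when $k\neq b$ (resp.\ $k\neq a$) is Statement~1 of each. For the count, note that Statement~2 of each proposition records the \emph{sum} of the conjugate-pair numbers as $\ell$ ranges over $0,\ldots,\gcd(e_1,e_2)-1$; since this range is the single value $\ell=0$, that sum already equals the number of conjugate pairs for the only surviving cycle, giving $(i-j,a-j)_{t_1}$ and $(i-j,b-j)_{t_2}$ respectively. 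Equivalently, the congruence $c-k_v\equiv\ell\Mod{\gcd(e_1,e_2)}$ in~(\ref{equ:Llu}) and its analogue in~(\ref{equ:Lls}) hold for every $k_v$ modulo $1$, so the count in those displays is the full $\delta_1$ (resp.\ $\delta_2$).

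Items~5 and~6 follow from Proposition~\ref{prop5} in the same spirit. For two distinct cycles the pair $(\ell_1,\ell_2)$ is forced to $(0,0)$, so Statement~1 of Proposition~\ref{prop5}, which gives the \emph{sum} over all $(\ell_1,\ell_2)$ as $\lambda\cdot\mu$, is already the exact count; substituting $\mu=(i_2-i_1,a-i_1)_{t_1}$ and $\lambda=(j_2-j_1,b-j_1)_{t_2}$ yields Item~5. Here distinctness of $[\u_{i_1}+\s_{j_1}]$ and $[\u_{i_2}+\s_{j_2}]$ means simply $(i_1,j_1)\neq(i_2,j_2)$, by the cycle structure of Lemma~\ref{lem:cycle-f}. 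For the self-count, the congruences in~(\ref{equ:onecyc}) become vacuous modulo $1$, leaving $\tfrac12(0,a-i)_{t_1}\cdot(0,b-j)_{t_2}$, which is Item~6.

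The computations here are routine; the only points demanding care are bookkeeping ones. I would double-check that reducing every modulus to $1$ is applied consistently, so that each ``sum over shifts'' statement in Propositions~\ref{prop3}--\ref{prop5} legitimately becomes an exact count rather than merely an upper bound, and that the factor $\tfrac12$ in the self-conjugate case of~(\ref{equ:onecyc}) is retained, since that is the one place where the collapse does not trivialize the combinatorics. These are the natural places where an off-by-one or a double-counting slip could enter, so verifying the $(\ell_1,\ell_2)=(0,0)$ reduction against the conditions in~(\ref{equ:usus5}) is the step I would write out most carefully.
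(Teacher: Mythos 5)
Your proposal is correct and follows exactly the paper's route: the paper offers no written proof beyond the remark that the result follows from ``directly applying Propositions~\ref{prop1} to~\ref{prop5}'' once $\gcd(e_1,e_2)=1$ collapses the cycle structure to~(\ref{equ:cycle-f1}). Your write-up simply makes explicit the details the paper leaves implicit --- that $c\equiv 0$, that the sums over $\ell$ (resp.\ $(\ell_1,\ell_2)$) reduce to a single term, and that the factor $\tfrac12$ in the self-pair count survives --- all of which are handled correctly.
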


\subsection{Both $p(x)$ and $q(x)$ are primitive polynomials}\label{subsec2}
Let $p(x)$ and $q(x)$ be distinct primitive polynomials. Then,
$t_1=t_2=1$ and $r:=\gcd(e_1,e_2)=\gcd(2^m-1,2^n-1)=2^{\gcd(m,n)}-1$. Consulting~(\ref{equ:cycle-f}),
\begin{equation}\label{equ:cycle-fs}
\Omega(f(x))=[\0]~\cup~[\u]~\cup~[\s]~\cup~\bigcup_{i=0}^{r-1}[L^i\u+\s].
\end{equation}
\begin{proposition}\label{prop:pp}
Let $a$ be such that $\S \in [L^a\u+\s]$ and $r=\gcd(e_1,e_2)=2^{\gcd(m,n)}-1$. Then
\begin{enumerate}
\item $[\0]$ and $[L^a\u+\s]$ are adjacent.
\item $[\u]$ and $[\s]$ share a unique conjugate pair.
\item There are $\sfrac{e_1}{r}-1$ conjugate pairs between $[\u]$ and $[L^a\u+\s]$
and $\sfrac{e_1}{r}$ conjugate pairs between $[\u]$ and $[L^i\u+\s]$ when $0 \leq i < r$ and $i \neq a$.
\item There are $\sfrac{e_2}{r}-1$ conjugate pairs between $[\s]$ and $[L^a \u+\s]$
and $\sfrac{e_2}{r}$ conjugate pairs between $[\s]$ and $[L^i\u+\s]$ when $0 \leq i< r$ and $i \neq a$.
\item Let $0 \leq i \neq j < r$. The number of conjugate pairs between $[L^i\u+\s]$ and $[L^j\u+\s]$ is
\begin{align}\label{equ:numus}
N(i,j)&=N(j,i):=\left|\{0\leq k<\lcm(e_1,e_2)\}\right| \text{ where} \notag\\
\tau_n(k) &\equiv\tau_m(k+i-j)+j-a \Mod{r},~ k \not\equiv 0 \Mod{e_1} \text{, and } k \not\equiv j-i \Mod{e_2}.
\end{align}
If $i=j$, we halve the count in~(\ref{equ:numus}).
\end{enumerate}
\end{proposition}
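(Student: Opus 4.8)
The plan is to read off every item from the general Propositions~\ref{prop1}--\ref{prop5}, specialized to $t_1=t_2=1$, where the only cyclotomic parameter that ever appears is $(0,0)_1$. First I would record the elementary computation that, when $t=1$, the single cyclotomic class is $C_0=\F_{2^m}^{*}$, so that $(0,0)_1=|\{\xi\in\F_{2^m}^{*}:\xi+1\in\F_{2^m}^{*}\}|=2^m-2=e_1-1$, and likewise $e_2-1$ on the $q$-side. Items~1 and~2 are then immediate: they are precisely Proposition~\ref{prop1} and Proposition~\ref{prop2} with $t_1=t_2=1$, which force the indices $a=b=0$, collapse ``$[\u_a]$ and $[\s_b]$'' to ``$[\u]$ and $[\s]$'', and yield a single, unique conjugate pair.

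For Items~3 and~4 I would invoke Proposition~\ref{prop3} (resp.~Proposition~\ref{prop4}). The governing cyclotomic number there is $\delta_1=(i-j,a-j)_{t_1}=(0,0)_1=e_1-1$. By the shift-and-add property (Lemma~\ref{lem:saa}), $\u+L^{w}\u=L^{\tau_m(w)}\u$ for $w\not\equiv 0\Mod{e_1}$, so as $w$ runs through $\{1,\ldots,e_1-1\}$ the shifts $L^{k_v}\u$ produced are exactly $L^{1}\u,\ldots,L^{e_1-1}\u$, i.e.\ $\{k_v\}=\{1,\ldots,e_1-1\}$. Substituting into the count~(\ref{equ:Llu}), the number of conjugate pairs between $[\u]$ and $[L^{\ell}\u+\s]$ equals the number of $k_v\in\{1,\ldots,e_1-1\}$ with $k_v\equiv a-\ell\Mod{r}$. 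Since $r\mid e_1$, the complete residue system $\{0,1,\ldots,e_1-1\}$ meets each class modulo $r$ in exactly $e_1/r$ elements; deleting the excluded value $w\equiv 0$ removes one representative from the class $0$ only. Hence the class $a-\ell\equiv 0$, that is $\ell=a$, has $e_1/r-1$ elements while every other $\ell$ has $e_1/r$, which is exactly the claim. Item~4 follows verbatim with $(m,e_1,\tau_m)$ replaced by $(n,e_2,\tau_n)$, the only change being the sign of the congruence coming from Proposition~\ref{prop4}.

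Item~5 is the substantial one, and I would derive it directly from the basic conjugate-pair rule rather than merely quoting Proposition~\ref{prop5}. Writing $C_1=[L^{i}\u+\s]$ and $C_2=[L^{j}\u+\s]$, I would consider the sum (taking $C_2$ as the shifted summand so as to match the displayed congruence)
\[
(L^{j}\u+\s)+L^{\kappa}(L^{i}\u+\s)=L^{j}\bigl(\u+L^{\kappa+i-j}\u\bigr)+\bigl(\s+L^{\kappa}\s\bigr),
\]
and apply Lemma~\ref{lem:saa} to each summand. Both components must be nonzero, and this is exactly where the two excluded residues $\kappa\not\equiv j-i\Mod{e_1}$ and $\kappa\not\equiv 0\Mod{e_2}$ enter: outside them the sum degenerates to a shift of $\u$ alone or of $\s$ alone and cannot be a shift of $L^{a}\u+\s$. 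The expression then equals $L^{j+\tau_m(\kappa+i-j)}\u+L^{\tau_n(\kappa)}\s$, and requiring it to be a shift of $L^{a}\u+\s$ means matching the $\u$- and $\s$-shifts by a common $\sigma$; by the Chinese Remainder Theorem such $\sigma$ exists precisely when the two prescribed shifts agree modulo $r=\gcd(e_1,e_2)$, which rearranges to $\tau_n(\kappa)\equiv\tau_m(\kappa+i-j)+j-a\Mod{r}$, the congruence in~(\ref{equ:numus}). Counting admissible $\kappa$ over one full period $0\le\kappa<\lcm(e_1,e_2)$ gives $N(i,j)$.

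The final, and I expect trickiest, step is the bookkeeping. I must verify that distinct $\kappa$ in one period correspond to distinct shared conjugate pairs, so that the cardinality in~(\ref{equ:numus}) is the honest count; confirm the symmetry $N(i,j)=N(j,i)$ by rerunning the same computation with the roles of $C_1$ and $C_2$ interchanged (which simply swaps $i\leftrightarrow j$ throughout); and handle the diagonal case $i=j$. When $i=j$ the two cycles coincide, so every conjugate pair $\{\v,\widehat{\v}\}$ lying inside it is counted twice, once as $(\v,\widehat{\v})$ and once as $(\widehat{\v},\v)$, and the count must therefore be halved, matching the parenthetical remark in the statement and mirroring the factor $\tfrac12$ already present in~(\ref{equ:onecyc}). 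The main obstacle is thus not any single inequality but the careful tracking of the two degenerate residues together with the double counting on the diagonal; everything else reduces to the uniform distribution of residues modulo $r$ within a full period and the shift-and-add identity.
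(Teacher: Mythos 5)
Your proposal is correct and follows essentially the same route as the paper: items 1--4 come down to the shift-and-add property of $m$-sequences (Lemma~\ref{lem:saa}), the fact that the Zech logarithm permutes $\{1,\ldots,e_1-1\}$, and counting residues modulo $r$ in that set, while your item 5 is exactly the paper's direct computation --- sum the two shifted cycles, apply shift-and-add to each component, and match the $\u$- and $\s$-shifts modulo $r$ via the Chinese Remainder Theorem. One detail worth flagging: your derivation produces the exclusions $k \not\equiv j-i \Mod{e_1}$ and $k \not\equiv 0 \Mod{e_2}$, which differ from the statement as printed but agree with what the paper's own proof yields, so the moduli in~(\ref{equ:numus}) appear to be swapped by a typo and your version is the correct one.
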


\begin{proof}
Since $\S \in [L^a\u+\s]$, $[\0]$ and $[L^a\u+\s]$ are adjacent. So are $[\u]$ and $[\s]$,
which share a unique conjugate pair.

Consider $[\u]$ and $[L^i\u+\s]$ for $0 \leq i < r$.
By Lemma~\ref{lem:saa}, for $0 \leq k <e_1$,
\[
L^k \u+ L^i \u+\s = L^i (L^{k-i}\u+\u)+\s = L^i L^{\tau_m(k-i)}\u+\s = L^{i+\tau_m(k-i)}\u+\s
\]
is shift equivalent to $L^a\u+\s$ if and only if
\begin{equation}\label{equ:moduv}
i+\tau_m (k-i)\equiv a \Mod{r} \iff \tau_m(k-i)\equiv a-i \Mod{r}.
\end{equation}
Since $\tau_m$ is a permutation,~(\ref{equ:moduv}) has $\sfrac{e_1}{r}-1$ solutions when $i=a$ and $\sfrac{e_1}{r}$ solutions when $i \neq a$.

Consider $[\s]$ and $[L^i\u+\s]$ for $0 \leq i < r$.
For $0\leq k<e_2$, Lemma~\ref{lem:saa} says that
\[
L^k\s+L^i\u+\s = L^i\u+\s+L^k\s = L^i\u+L^{\tau_n(k)}\s =L^{\tau_n(k)}(L^{i-\tau_n(k)}\u+\s)
\]
is shift equivalent to $L^a\u+\s$ if and only if
\begin{equation}\label{equ:modsv}
i-\tau_n(k)\equiv a \Mod{r} \iff \tau_n(k)\equiv i-a \Mod{r}.
\end{equation}
Thus, (\ref{equ:modsv}) has $\sfrac{e_2}{r}-1$ solutions for $i=a$ and $\sfrac{e_2}{r}$ solutions
for $i \neq a$.

For the last assertion, we count the number of conjugate pairs between
$[L^i\u+\s]$ and $[L^j\u+\s]$ for $0 \leq i,j <r$. By Lemma~\ref{lem:saa}, for $0 \leq k
< \lcm(e_1,e_2)$,
\begin{align*}
L^k(L^i\u+\s) + L^j\u+\s &= L^{k+i}\u+L^j\u+L^k\s+\s
= L^j(L^{k+i-j}\u+\u)+L^{\tau_n(k)}\s\\
&= L^{\tau_n(k)}(L^{j-\tau_n(k)}L^{\tau_m(k+i-j)}\u+\s)
= L^{\tau_n(k)}(L^{j-\tau_n(k)+\tau_m(k+i-j)}\u+\s)
\end{align*}
is shift equivalent to $L^a\u+\s$ if and only if $j-\tau_n(k)+\tau_m(k+i-j) \equiv a \Mod{r}$.
Equivalently, the condition can be written as
\begin{equation}\label{equ:modssv}
\tau_n(k) \equiv \tau_m(k+i-j)+j-a \Mod{r}.
\end{equation}
Thus, if $i\neq j$, the number of conjugate pairs is indeed given by (\ref{equ:numus}), and we halve the number when $i=j$.\qed
\end{proof}

When $\gcd(e_1,e_2)=1$, Item 5 in Proposition~\ref{prop:pp} is a special case of~\cite[Theorem 2]{Li14-2}. We present it here as a corollary.
\begin{corollary}\label{cor1}
Adding $\gcd(e_1,e_2)=1$ to the assumptions of Proposition~\ref{prop:pp},
the number of conjugate pairs between $[\u+\s]$ and itself is
\begin{equation}\label{equ:numusgcd=1}
\sfrac{1}{2} \cdot \left| \{0\leq k<e_1e_2~|~k \not\equiv 0 \Mod{e_1};~
k \not \equiv 0 \Mod{e_2}\}\right|
=\sfrac{1}{2} \cdot (e_1-1)(e_2-1).
\end{equation}
\end{corollary}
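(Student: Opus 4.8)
The plan is to obtain the corollary as a direct specialization of Item~5 of Proposition~\ref{prop:pp} to the self-adjacency case $i=j$, simplified by the extra hypothesis $\gcd(e_1,e_2)=1$, and then to finish with an elementary inclusion-exclusion count. First I would record what the coprimality hypothesis buys us: since $r=\gcd(e_1,e_2)=2^{\gcd(m,n)}-1$, the assumption $\gcd(e_1,e_2)=1$ forces $r=1$. Consequently the index $i$ in $[L^i\u+\s]$ ranges only over $0\le i<r=1$, so the \emph{only} cycle of this shape is $[\u+\s]$ itself, and the quantity to be computed is precisely the number of conjugate pairs of $[\u+\s]$ with itself.

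Next I would put $i=j=0$ into~(\ref{equ:numus}). The third defining condition $k\not\equiv j-i\Mod{e_2}$ collapses to $k\not\equiv 0\Mod{e_2}$, while the Zech-logarithm congruence $\tau_n(k)\equiv\tau_m(k+i-j)+j-a\Mod{r}$ is read modulo $r=1$ and is therefore satisfied by every $k$. Thus the entire Zech condition becomes vacuous, and what remains is the count of $k$ with $0\le k<\lcm(e_1,e_2)$ subject only to $k\not\equiv 0\Mod{e_1}$ and $k\not\equiv 0\Mod{e_2}$. Because $\gcd(e_1,e_2)=1$, we have $\lcm(e_1,e_2)=e_1e_2$, so $k$ runs over $\{0,1,\ldots,e_1e_2-1\}$.

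Finally I would count by inclusion-exclusion on this range of $e_1e_2$ values: exactly $e_2$ of them satisfy $k\equiv 0\Mod{e_1}$, exactly $e_1$ satisfy $k\equiv 0\Mod{e_2}$, and exactly one (namely $k=0$) satisfies both, since $k\equiv 0\Mod{e_1}$ and $k\equiv 0\Mod{e_2}$ together force $k\equiv 0\Mod{e_1e_2}$. Hence the number avoiding both congruences is $e_1e_2-e_1-e_2+1=(e_1-1)(e_2-1)$. The halving stipulated in Proposition~\ref{prop:pp} for the case $i=j$, which accounts for each conjugate pair inside a single cycle being recorded twice, then yields $\tfrac12(e_1-1)(e_2-1)$, matching~(\ref{equ:numusgcd=1}).

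The argument is essentially routine, so there is no serious obstacle; the only points deserving a line of care are verifying that the modulo-$r$ congruence genuinely becomes vacuous (rather than merely simplifying) once $r=1$, and noting that the factor $\tfrac12$ is not something to re-derive here but is inherited directly from the self-pairing convention already justified in Proposition~\ref{prop:pp}.
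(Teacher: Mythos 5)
Your proposal is correct and follows exactly the route the paper intends: Corollary~\ref{cor1} is stated as a direct specialization of Item~5 of Proposition~\ref{prop:pp}, where $r=\gcd(e_1,e_2)=1$ makes the Zech-logarithm congruence in~(\ref{equ:numus}) vacuous, $\lcm(e_1,e_2)=e_1e_2$, and inclusion--exclusion plus the inherited halving for $i=j$ gives $\sfrac{1}{2}(e_1-1)(e_2-1)$. Nothing is missing.
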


If $e_1 \mid e_2$, that is when $\gcd(e_1,e_2) = e_1$, we can derive an explicit formula.
\begin{corollary}\label{cor2}
If $e_1 \mid e_2$, the number of conjugate pairs between $[L^i\u+\s]$ and $[L^j\u+\s]$, for $0\leq i \neq j < e_1$, is
\begin{equation}\label{equ:numusgcde}
N(i,j)=\sum_{\substack{\ell=0 \\ \ell \not\equiv j-i \Mod{e_1}}}^{e_1-1}(\ell,\tau_m(\ell + i-j)+j-a)_{e_1}.
\end{equation}
If $i=j$, we halve the count in~(\ref{equ:numusgcde}).
\end{corollary}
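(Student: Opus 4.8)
The plan is to derive (\ref{equ:numusgcde}) from the counting expression (\ref{equ:numus}) of Proposition~\ref{prop:pp} by reindexing the admissible set of shifts $k$ and recognizing each residue block as a cyclotomic number via (\ref{eq:cycnum2}). First I would record the arithmetic forced by the hypothesis. Since $p(x)$ and $q(x)$ are primitive, $e_1=2^m-1$ and $e_2=2^n-1$, and $e_1\mid e_2$ is equivalent to $m\mid n$; hence $r=\gcd(e_1,e_2)=2^{\gcd(m,n)}-1=e_1$ and $\lcm(e_1,e_2)=e_2=N\,e_1$, where I set $N:=e_2/e_1$. In particular the ``$\Mod{r}$'' appearing in (\ref{equ:numus}) becomes ``$\Mod{e_1}$'' and the counting index $k$ ranges over $\{0,1,\ldots,e_2-1\}$.

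Next I would revisit the shift-and-add computation underlying Proposition~\ref{prop:pp}. For $0\le k<e_2$, Lemma~\ref{lem:saa} gives $L^k(L^i\u+\s)+L^j\u+\s=L^{\tau_n(k)}(L^{\,j-\tau_n(k)+\tau_m(k+i-j)}\u+\s)$, which is a shift of $L^a\u+\s$ exactly when $\tau_n(k)\equiv\tau_m(k+i-j)+j-a\Mod{e_1}$. The nondegeneracy of the two components — that neither $L^k\s+\s$ nor $L^{k+i-j}\u+\u$ collapses to $\0$ — requires $k\not\equiv 0\Mod{e_2}$ and $k\not\equiv j-i\Mod{e_1}$. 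The first condition is enforced automatically by the convention $\tau_n(k)=\infty$ for $k\equiv 0\Mod{e_2}$, so it will cost nothing; the second is precisely the exclusion $\ell\not\equiv j-i\Mod{e_1}$ already present in (\ref{equ:numusgcde}).

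The heart of the argument is the reindexing. I would write each $k\in\{0,\ldots,e_2-1\}$ uniquely as $k=\ell+s\cdot e_1$ with $0\le\ell<e_1$ and $0\le s<N$, and then exploit the $e_1$-periodicity of $\tau_m$ to replace $\tau_m(k+i-j)$ by the constant $\tau_m(\ell+i-j)$ on the block indexed by $\ell$. For each fixed $\ell$ the congruence becomes $\tau_n(\ell+s\cdot e_1)\equiv\tau_m(\ell+i-j)+j-a\Mod{e_1}$, so by (\ref{eq:cycnum2}) applied in $\F_{2^n}$ with $t=e_1$ (whereby each class has size $(2^n-1)/e_1=N$ and the relevant Zech logarithm is exactly $\tau_n$), the number of admissible $s$ equals the cyclotomic number $(\ell,\,\tau_m(\ell+i-j)+j-a)_{e_1}$. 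Summing over all $\ell\not\equiv j-i\Mod{e_1}$ — the block $\ell\equiv j-i$ being discarded since $\tau_m(\ell+i-j)$ is undefined there — yields (\ref{equ:numusgcde}); for $i=j$ every conjugate pair is double counted, so the count is halved, matching the parenthetical remark of Proposition~\ref{prop:pp}.

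The step I expect to demand the most care is the clean identification of the per-block count with $(\ell,\cdot)_{e_1}$: one must confirm that the two Zech logarithms play disjoint roles, with $\tau_n$ governing the classes counted in $\F_{2^n}$ while $\tau_m$ merely supplies the fixed target index $\ell'=\tau_m(\ell+i-j)+j-a$, and that the $\infty$-convention silently absorbs the $k\equiv 0\Mod{e_2}$ term so that no separate correction is needed. Verifying that $e_1\mid(2^n-1)$ legitimizes the choice $t=e_1$ in (\ref{eq:cycnum2}), together with the bijection $k\leftrightarrow(\ell,s)$ and the reconciliation of the single residue exclusion, is then routine.
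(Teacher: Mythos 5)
Your proof is correct and follows essentially the same route as the paper's own: specialize (\ref{equ:numus}) to $r=e_1$, reindex $k=\ell+s\cdot e_1$ with $0\leq s<e_2/e_1$, use the $e_1$-periodicity of $\tau_m$, and identify each residue block's count with a cyclotomic number via (\ref{eq:cycnum2}). Your write-up is simply more explicit than the paper's three-line argument --- in particular about the exclusions $k\not\equiv j-i \Mod{e_1}$ and $k\not\equiv 0 \Mod{e_2}$ (which appear with the moduli inadvertently swapped in the statement of (\ref{equ:numus}), a slip that both you and the paper's proof of the corollary silently correct) and about the $\tau_n(0)=\infty$ convention absorbing the $k=0$ term.
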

\begin{proof}
Since $\gcd(e_1,e_2)=e_1$, rewrite~(\ref{equ:numus}) as
\begin{align}\label{equ:modssv1}
N(i,j)&=\left|\{0 < k < e_2 \}\right| \text{ satisfying } k \not\equiv j-i \Mod{e_1} \notag\\
\text{ and } \tau_n(k) &\equiv \tau_m(k+i-j)+j-a \Mod{e_1}.
\end{align}
More explicitly, we compute for
\begin{equation}\label{equ:modssv2}
\sum_{\substack{\ell=0,\\ \ell \neq j-i}}^{e_1-1} \left|\{\ell+t \cdot e_1 \}\right| \text{ with }
0\leq t<\frac{e_2}{e_1} \text{ and } \tau_n(\ell+t \cdot e_1) \equiv\tau_m(\ell+i-j)+j-a \Mod{e_1}.
\end{equation}
Based on the equivalence of (\ref{eq:cycnum}) and (\ref{eq:cycnum2}), we confirm that (\ref{equ:modssv2}) and~(\ref{equ:numusgcde}) are the same.\qed
\end{proof}
\subsection{De Bruijn Sequences of order $n+2$}\label{subsec:3}
Let $p(x)$ be a primitive polynomial of degree $n > 2$. We look into the construction
from LFSRs with characteristic polynomial $(x^2+x+1)p(x)$. The exact number of de
Bruijn sequences constructed can be determined.

It is clear that $\Omega(p(x))=[\0]\cup[\s]$ and $\Omega(x^2+x+1)=[\0]\cup[\u]$,
where $\s$ and $\u$ are maximal length sequences with period $2^n-1$ and $3$, respectively.
In fact, $\u$ must be a shift of $(110)$.
By Lemma~\ref{lem:cycle-f} and the fact that $\gcd(3,2^n-1)$ is $1$ if $n$ is odd and is $3$ if $n$ is even,
\begin{equation*}
\Omega(f(x))=
\begin{cases}
[\0]\cup[\u]\cup[\s]\cup[\u+\s] &\text{ if } n \text{ is odd,}\\
[\0]\cup[\u]\cup[\s]\cup\bigcup_{i=0}^2 [L^i\u+\s] &\text{ if } n \text{ is even}.
\end{cases}
\end{equation*}

The next proposition follows from Proposition~\ref{prop:pp} and Corollary~\ref{cor1}.
\begin{proposition}\label{prop:c3odd}
Let $n \geq 3$ be odd. Figure~\ref{fig1} shows the adjacency graph, based on the following facts.
\begin{enumerate}
\item There is a unique conjugate pair each between $[\0]$ and $[\u+\s]$ and between $[\u]$ and $[\s]$.
\item $[\u]$ and $[\u+\s]$ share $2$ conjugate pairs.
\item $[\s]$ and $[\u + \s]$ share $2^n-2$ conjugate pairs.
\item $[\u+\s]$ shares $2^n-2$ conjugate pairs with itself.
\end{enumerate}
\begin{figure}[h]
\vspace{-0.6cm}
\begin{tikzpicture}[auto,node distance=2.2cm,every loop/.style={},
                    thick,main node/.style={draw,font=\sffamily\bfseries}]

  \node[main node] (1) {$[\u]$};
  \node[main node] (2) [right of=1] {$[\s]$};
  \node[main node] (3) [right of=2] {$[\u+\s]$};
  \node[main node] (4) [right of=3] {$[\0]$};

  \path[every node/.style={font=\sffamily}]
    (1) edge node [above] {$1$} (2)
        edge [bend left] node [above] {$2$} (3)
    (3) edge node [above] {$1$} (4)
        edge node [above] {$2^n-2$} (2)
        edge [loop above] node {$2^n-2$} (3);
\end{tikzpicture}
\centering
\caption{The adjacency graph of $\Omega((x^2+x+1)~p(x))$ for odd $n \geq 3$.}
\label{fig1}
\end{figure}
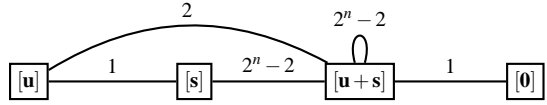
\end{proposition}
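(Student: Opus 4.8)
The plan is to specialize the already-established machinery to the parameters at hand and read off the four facts; there is no new work beyond substitution. First I would record the relevant quantities. Here $x^2+x+1$ plays the role of the degree-$m$ primitive polynomial with $m=2$ and order $e_1=3$, while $p(x)$ is the degree-$n$ primitive polynomial with order $e_2=2^n-1$. Since $n$ is odd, $\gcd(m,n)=\gcd(2,n)=1$, so in the notation of Proposition~\ref{prop:pp} the relevant parameter is $r=\gcd(e_1,e_2)=2^{\gcd(m,n)}-1=1$. Consulting~(\ref{equ:cycle-fs}), $r=1$ forces a single mixed cycle, namely $[\u+\s]$, so the index $a$ with $\S\in[L^a\u+\s]$ must be $a=0$.

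With these values in hand, the first three facts follow by direct substitution into Proposition~\ref{prop:pp}. Item~1 of that proposition states that $[\0]$ and $[L^a\u+\s]=[\u+\s]$ are adjacent; since $[\0]$ consists of the single state $\0$, whose only conjugate is $\S$, this adjacency is realized by exactly one conjugate pair. Item~2 supplies the unique conjugate pair between $[\u]$ and $[\s]$. Together these give Fact~1. For Fact~2, I would invoke Item~3 with $a=0$ and $r=1$: the number of conjugate pairs between $[\u]$ and $[\u+\s]$ is $\sfrac{e_1}{r}-1=3-1=2$. For Fact~3, Item~4 under the same substitution yields $\sfrac{e_2}{r}-1=(2^n-1)-1=2^n-2$ conjugate pairs between $[\s]$ and $[\u+\s]$.

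The last fact concerns the self-loop at $[\u+\s]$. Because $\gcd(e_1,e_2)=1$, I would apply Corollary~\ref{cor1} verbatim: the number of conjugate pairs between $[\u+\s]$ and itself is $\sfrac{1}{2}(e_1-1)(e_2-1)=\sfrac{1}{2}\cdot 2\cdot(2^n-2)=2^n-2$, which is Fact~4. Assembling the four vertices $[\0],[\u],[\s],[\u+\s]$ with the edge labels just computed reproduces Figure~\ref{fig1}.

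There is essentially no serious obstacle, as every count is a specialization of a previously proved formula. The only points requiring care are the two small verifications that feed the substitution: confirming $r=1$ from the oddness of $n$ (so that a single mixed cycle exists and $a=0$), and justifying the uniqueness half of Fact~1 — namely that $[\0]$, being a one-state cycle, can share at most one conjugate pair with any other cycle, so its adjacency with $[\u+\s]$ is realized exactly once.
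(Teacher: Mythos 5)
Your proposal is correct and follows exactly the route the paper intends: the paper gives no separate proof, stating only that the proposition ``follows from Proposition~\ref{prop:pp} and Corollary~\ref{cor1},'' which is precisely your specialization with $e_1=3$, $e_2=2^n-1$, $r=\gcd(3,2^n-1)=1$ for odd $n$, and $a=0$. Your added remarks---that oddness of $n$ forces $r=1$ and hence a single mixed cycle, and that $[\0]$ has only one state so its adjacency with $[\u+\s]$ is realized by exactly one conjugate pair---are the right small verifications and match the paper's implicit reasoning.
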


When $n$ is even, we need some results on cyclotomic numbers.
\begin{lemma}\label{lemma:cn}(See~\cite{Storer67} or~\cite[Section~4]{HH96}) Let $n$ be even. Then
\begin{align*}
A :&=(0,0)_3 = \frac{1}{9} \cdot \left(2^n+(-2)^{\frac{n}{2}+1}-8\right)\text{, } C :=(1,2)_3 = \frac{1}{9} \cdot \left(2^n+(-2)^{\frac{n}{2}+1}+1 \right)\text{, and}\\
B :&=(0,1)_3 =(1,1)_3 = (0,2)_3 =(2,2)_3= \frac{1}{9} \cdot \left(2^n+(-2)^{\frac{n}{2}}-2\right).
\end{align*}
For $i > j$, we have $(i,j)_3=(j,i)_3$.
\end{lemma}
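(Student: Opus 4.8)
The plan is to determine the three independent values $A=(0,0)_3$, $B$, and $C=(1,2)_3$ in three stages: first collapse the nine numbers $(i,j)_3$ to these three by symmetry, then extract two linear relations among them from elementary counting, and finally supply the single remaining numerical constraint by a Gauss-sum (Gaussian-period) computation. Throughout, the classes $C_0,C_1,C_2$ of (\ref{eq:cyclas}) partition $\F_{2^n}^\ast$ into sets of size $f=\frac{2^n-1}{3}$, which is an integer precisely because $n$ is even.

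First I would reduce to three values. Three maps on $\F_{2^n}^\ast$ induce identities among the $(i,j)_3$ of (\ref{eq:cycnum}): the involution $\xi\mapsto\xi+1$ gives $(i,j)_3=(j,i)_3$ (since $(\xi+1)+1=\xi$ in characteristic $2$), which is exactly the last assertion of the lemma; the Frobenius $\xi\mapsto\xi^2$ sends $C_i$ to $C_{2i}$ and hence gives $(i,j)_3=(2i,2j)_3$; and inversion $\xi\mapsto\xi^{-1}$ gives $(i,j)_3=(-i,\,j-i)_3$. Letting these act on index pairs, a short orbit computation shows the orbits are exactly $\{(0,0)\}$, the six ``mixed'' pairs $\{(0,1),(0,2),(1,0),(2,0),(1,1),(2,2)\}$, and $\{(1,2),(2,1)\}$. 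This already proves that there are only the three values $A$, $B$, $C$ and, in passing, that $C$ is well defined independently of which nontrivial class is labelled $C_1$.

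Next, two linear relations come from row sums. For fixed $i$, as $\xi$ runs through $C_i$ the element $\xi+1$ runs through all nonzero values except when $\xi+1=0$, i.e. $\xi=1\in C_0$. Hence $\sum_{j}(i,j)_3=f$ for $i\neq 0$ and $\sum_j(0,j)_3=f-1$. With the value table from the first stage these read $A+2B=f-1$ and $2B+C=f$, so $C=A+1$ and everything is pinned down by a single remaining unknown, say $B$.

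The one genuinely arithmetic input, and the main obstacle, is the value of that last unknown, which I would obtain from the Gaussian periods $\eta_i:=\sum_{\xi\in C_i}\chi(\xi)$ for the canonical additive character $\chi$ of $\F_{2^n}$. The standard multiplication rule $\eta_a\eta_b=\delta_{a,b}\,f+\sum_{c=0}^{2}(b-a,c)_3\,\eta_{a+c}$ (indices modulo $3$) specializes, using $\eta_0+\eta_1+\eta_2=-1$, to $\eta_0^2=(f-B)+(A-B)\,\eta_0$, so knowing $\eta_0$ determines $B$. To evaluate $\eta_0$ I would compute the cubic Gauss sum $G=g(\psi)=\sum_{\xi}\psi(\xi)\chi(\xi)$, where $\psi$ is the cubic character obtained by lifting the cubic character $\psi'$ of $\F_4$ along the norm $\mathrm{N}_{\F_{2^n}/\F_4}$ (available because $n$ is even, so $\F_4\subseteq\F_{2^n}$). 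A direct calculation over $\F_4$ gives $g(\psi')=2$, and the Davenport--Hasse relation then yields $G=(-1)^{\frac n2-1}\,g(\psi')^{\frac n2}=-(-2)^{\frac n2}$, whence $G^2=2^n$, both nontrivial Gauss sums are real and equal, and the standard expression of periods in terms of Gauss sums gives $\eta_0=\tfrac13(-1+2G)=\tfrac13\bigl(-1+(-2)^{\frac n2+1}\bigr)$. Substituting this $\eta_0$ into $\eta_0^2=(f-B)+(A-B)\eta_0$ and solving the resulting system together with $A+2B=f-1$ and $C=A+1$ produces the stated closed forms; the remaining work is the routine arithmetic of checking that $G^2=2^n$ yields exactly the $(-2)^{\frac n2}$ and $(-2)^{\frac n2+1}$ terms. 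Everything outside the Gauss-sum evaluation is symmetry and linear algebra, so that evaluation — equivalently, the single Gaussian period carrying the $(-2)^{n/2}$ contribution — is the crux, and it is precisely the ingredient supplied by the references cited in the lemma.
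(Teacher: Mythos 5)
Your proposal is correct, but it is not comparable to a proof in the paper for the simple reason that the paper offers none: Lemma~\ref{lemma:cn} is stated as a known result and justified only by citation to Storer and to Section~4 of Hauge--Helleseth. Your derivation is a legitimate, self-contained alternative, and I checked its three stages. The symmetry reduction is sound: in characteristic $2$ the involution $\xi\mapsto\xi+1$ gives $(i,j)_3=(j,i)_3$, Frobenius gives $(i,j)_3=(2i,2j)_3$, inversion gives $(i,j)_3=(-i,j-i)_3$, and the orbits on index pairs are exactly $\{(0,0)\}$, the six mixed pairs, and $\{(1,2),(2,1)\}$. The row sums $A+2B=f-1$ and $2B+C=f$ (with $f=\frac{2^n-1}{3}$) are correct, as is the period relation $\eta_0^2=(f-B)+(A-B)\eta_0$, which in characteristic $2$ follows cleanly because $t=1$ is the only $t$ with $1+t=0$. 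The Gauss-sum evaluation also checks out: $g(\psi')=1-\zeta_3-\zeta_3^2=2$ over $\F_4$, Davenport--Hasse gives $G=(-1)^{\frac n2-1}2^{\frac n2}=-(-2)^{\frac n2}$, both nontrivial cubic Gauss sums are real and equal (since $-1=1$ in characteristic $2$), hence $\eta_0=\frac13\bigl(-1+(-2)^{\frac n2+1}\bigr)$; substituting into the period relation and using $1+3\eta_0=(-2)^{\frac n2+1}\neq 0$ determines $B$ uniquely, and the resulting values agree with the stated formulas (I verified the algebra using $\bigl((-2)^{\frac n2+1}\bigr)^2=2^{n+2}$). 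Compared with the cited route, which in the classical cyclotomy literature runs through Jacobi sums and the quadratic partition $4q=L^2+27M^2$ (with $M=0$ forced when $q=2^n$), your lift-from-$\F_4$ argument buys a short, transparent, and fully self-contained proof whose only external input is the Davenport--Hasse relation; what it costs is length, which is presumably why the authors preferred a citation.
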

The next result follows from Proposition~\ref{prop:pp} and Corollary~\ref{cor2}.
\begin{proposition}\label{prop:c3even}
Let $n\geq 4$ be even. Without loss of generality, suppose that $\S \in [\u + \s]$. Then
\begin{enumerate}
\item $[\0]$ and $[\u+\s]$ share a unique conjugate pair.
\item $[\u]$ shares a unique conjugate pair each with $[\s]$, $[L \u+ \s]$ and $[L^2 \u + \s]$.\\
There is no conjugate pair between $[\u]$ and $[\u+\s]$.
\item $[\s]$ and $[\u+\s]$ share $\frac{2^n-1}{3}-1$ conjugate pairs.
\item For $\ell \in \{1,2\}$, $[\s]$ and $[L^{\ell} \u+\s]$ share $\frac{2^n-1}{3}$ conjugate pairs.
\item Let $N(i,j)=N(j,i)$, for $0 \leq i,j<3$, be the number of conjugate pairs between
$[L^{i} \u +\s]$ and $[L^{j} \u + \s]$. Based on Lemma~\ref{lemma:cn}, $N(0,0) = C$,
$N(0,1) = 2B$, $N(0,2)=2B$, $N(1,1) = B$, $N(1,2) = A+C$, and $N(2,2)=B$.
\end{enumerate}
The adjacency graph is shown in Figure~\ref{fig2} with $Z:=\frac{2^n-1}{3}$.
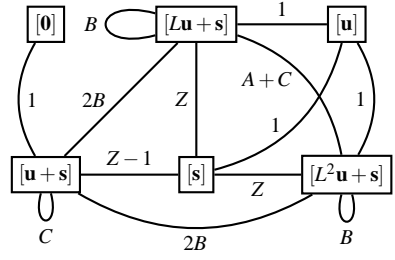
\begin{figure}[h]
\vspace{-0.1cm}
\begin{tikzpicture}[auto, node distance=2.0cm, every loop/.style={},
                    thick,main node/.style={draw,font=\sffamily\bfseries}]

  \node[main node] (1) {$[\0]$};
  \node[main node] (2) [below of=1] {$[\u+\s]$};
  \node[main node] (4) [right of=1] {$[L\u+\s]$};
  \node[main node] (3) [below of=4] {$[\s]$};
  \node[main node] (5) [right of=4] {$[\u]$};
  \node[main node] (6) [below of=5] {$[L^{2}\u+\s]$};

  \path[every node/.style={font=\sffamily}]
    (1) edge [bend right] node [right] {$1$} (2)
    (2) edge node [above] {$Z-1$} (3)
        edge node [left] {$2B~$} (4)
        edge [loop below] node {$C$} (2)
        edge [bend right] node [below] {$2B$} (6)
    (4) edge [loop left] node {$B$} (4)
    		edge node [above] {$1$} (5)
        edge node [left] {$Z$} (3)
        edge [bend left] node [left] {$A+C~$} (6)
    (3) edge [bend right] node [left] {$1~$} (5)
        edge node [below] {$Z$} (6)
    (5) edge [bend left] node [left] {$1$} (6)
    (6) edge [loop below] node {$B$} (6);

\end{tikzpicture}
\centering
\caption{The adjacency graph of $\Omega((x^2+x+1)~p(x))$ for even $n \geq 4$.}
\label{fig2}
\end{figure}
\end{proposition}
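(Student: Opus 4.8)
The plan is to recognize this as a direct specialization of the \textbf{both primitive} case treated in Proposition~\ref{prop:pp} and Corollary~\ref{cor2}, followed by substituting the explicit order-$3$ cyclotomic numbers of Lemma~\ref{lemma:cn}. First I would fix the parameter dictionary: let $x^2+x+1$ play the role of the degree-$m$ primitive polynomial, so that $\u$ has period $e_1=3$, and let $p(x)$ play the role of the degree-$n$ primitive polynomial, so that $\s$ has period $e_2=2^n-1$. Since $n$ is even we have $2^n\equiv 1 \Mod{3}$, hence $3\mid 2^n-1$, giving $r=\gcd(e_1,e_2)=3$ and, crucially, $e_1\mid e_2$. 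This places us in the regime where both Proposition~\ref{prop:pp} and Corollary~\ref{cor2} apply verbatim, and by the opening assumption we may take $a=0$, i.e. $\S\in[\u+\s]$.

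With $a=0$ in hand, Items 1 through 4 should fall out immediately from Proposition~\ref{prop:pp}. Item~1 is its Item~1. Item~2 combines its Item~2 (the unique pair between $[\u]$ and $[\s]$) with its Item~3 specialized to $e_1/r=1$: the count $e_1/r-1=0$ explains the absence of a conjugate pair between $[\u]$ and $[\u+\s]$, while $e_1/r=1$ gives the single pair with each of $[L\u+\s]$ and $[L^2\u+\s]$. Items~3 and~4 are its Item~4 with $e_2/r=(2^n-1)/3=Z$, distinguishing the $i=a$ case (giving $Z-1$) from the $i\neq a$ cases (each giving $Z$).

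The remaining work, and the only place that needs genuine computation, is Item~5. Here I would invoke Corollary~\ref{cor2}, which applies because $e_1\mid e_2$, with $m=2$, $e_1=3$, and $a=0$. The two finite values of the Zech logarithm of $\F_4$ relative to a root $\alpha$ of $x^2+x+1$ are $\tau_2(1)=2$ and $\tau_2(2)=1$ (with $\tau_2(0)=\infty$, which is exactly the term excluded by the condition $\ell\not\equiv j-i \Mod{3}$). Substituting these into
\[
N(i,j)=\sum_{\substack{\ell=0\\ \ell\not\equiv j-i \Mod{3}}}^{2}\left(\ell,\ \tau_2(\ell+i-j)+j\right)_{3},
\]
reducing each second index modulo $3$, and using the symmetry $(i,j)_3=(j,i)_3$ from Lemma~\ref{lemma:cn}, each $N(i,j)$ collapses to a sum of two order-$3$ cyclotomic numbers drawn from $\{A,B,C\}$; remembering to halve the count when $i=j$ then yields $N(0,0)=C$, $N(0,1)=N(0,2)=2B$, $N(1,1)=N(2,2)=B$, and $N(1,2)=A+C$.

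The expected obstacle is purely bookkeeping rather than conceptual: one must carefully track the three simultaneous reductions modulo $3$ (the index $\ell$, the argument $\ell+i-j$ of $\tau_2$, and the resulting second cyclotomic index), apply the exclusion and the halving convention exactly as stated in Corollary~\ref{cor2}, and correctly invoke the symmetry of the cyclotomic numbers to match the $A,B,C$ formulas. Once the values $N(i,j)$ and the counts from Items 1--4 are assembled, the adjacency graph in Figure~\ref{fig2} is obtained by labelling each edge with its corresponding count, with $Z=\frac{2^n-1}{3}$.
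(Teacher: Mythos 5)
Your proposal is correct and takes essentially the same route as the paper's own proof: Items 1--4 are read off directly from Proposition~\ref{prop:pp} with $a=0$, $e_1=3$, $r=\gcd(3,2^n-1)=3$, $e_2=2^n-1$, and Item 5 is obtained by substituting $\tau_2(1)=2$, $\tau_2(2)=1$ into formula (\ref{equ:numusgcde}) of Corollary~\ref{cor2}, halving when $i=j$ and invoking the symmetry $(i,j)_3=(j,i)_3$ from Lemma~\ref{lemma:cn}. The paper does exactly this, displaying only the $N(0,0)$ and $N(0,1)$ computations and leaving the rest as analogous, so your outline matches it step for step.
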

\begin{proof}
The first four items follow directly from Proposition~\ref{prop:pp}.
The last item is deduced from (\ref{equ:numusgcde}) using $\tau_2(1)=2$ and $\tau_2(2)=1$.
\begin{align*}
N(0,0)&=\frac{1}{2}\sum_{\ell=1}^{3-1}(\ell,\tau_m(\ell + 0-0)+0-0)_{3}=\frac{1}{2}((1,2)_3+(2,1)_3)=(1,2)_3=C,\\
N(0,1)&=N(1,0)=(0,\tau_2(0-1)+1)_3+(2,\tau_2(2-1)+1)_3=(0,2)_3+(2,0)_3=2B.
\end{align*}
The other values can be obtained in a similar way.\qed
\end{proof}
\begin{theorem}\label{thm:count}
Let $\A_n$ be the set of all primitive polynomial of degree $n>2$ over $\F_2$. Let $p(x) \in \A_n$. The total number of de Bruijn sequences constructed from LFSRs with characteristic polynomials $(x^2+x+1)~p(x)$ is
\[
\begin{cases}
(3\cdot2^n-4) \cdot \frac{\phi(2^n-1)}{n} &\text{ if } n \geq 3 \text{ is odd}\\
\left[2^{3n}-\frac{9\cdot2^{2n+4}-(-2)^{\sfrac{3n}{2}+4}-3\cdot2^{n+6}+2^6}{27}\right]\frac{\phi(2^n-1)}{n}
&\text{ if } n \geq 4 \text{ is even.}
\end{cases}.
\]
\end{theorem}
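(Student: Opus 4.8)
The plan is to reduce the count, for each fixed primitive $p(x) \in \A_n$, to a single cofactor computation and then to sum over all primitive polynomials. For a fixed $p(x)$, the number of de Bruijn sequences produced from $(x^2+x+1)\,p(x)$ by the cycle joining method is the number of spanning trees of the adjacency graph, which by Theorem~\ref{BEST} equals the cofactor of any entry of the associated matrix $\M$. The conceptual crux is that this adjacency graph depends only on $n$ and not on the particular $p(x)$: Propositions~\ref{prop:c3odd} and~\ref{prop:c3even} express every edge label purely in terms of $2^n$ and the order-$3$ cyclotomic numbers $A,B,C$ of Lemma~\ref{lemma:cn}, all of which are functions of $n$ alone. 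Hence every $p(x)\in\A_n$ contributes the same number of sequences, and the total is that common number times $|\A_n|=\phi(2^n-1)/n$, the number of primitive polynomials of degree $n$ over $\F_2$ (each of the $\phi(2^n-1)$ primitive elements of $\F_{2^n}$ has a degree-$n$ minimal polynomial with $n$ conjugate roots). It remains to evaluate one cofactor per parity.

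For odd $n$, the graph of Figure~\ref{fig1} has four vertices and, after deleting the loop on $[\u+\s]$, only four edges. Ordering $[\0]$ first and taking the cofactor of the $(1,1)$-entry $1$ leaves a $3\times3$ determinant indexed by $[\u],[\s],[\u+\s]$, with diagonal degree entries $3$, $2^n-1$, $2^n+1$ and off-diagonal entries $-1$, $-2$, $-(2^n-2)$ read off from the labels. Expanding gives $3\cdot 2^n-4$, and multiplying by $\phi(2^n-1)/n$ yields the odd-case formula.

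For even $n$ the work is heavier. The graph of Figure~\ref{fig2} has six vertices; again I would order $[\0]$ first and take the cofactor of $\M_{1,1}=1$, leaving a dense $5\times5$ determinant whose entries are the degrees and negated labels $1$, $Z-1$, $Z$, $2B$, $A+C$, with $Z=(2^n-1)/3$. The plan is to substitute the closed forms $9A=2^n-2(-2)^{n/2}-8$, $9B=2^n+(-2)^{n/2}-2$, $9C=2^n-2(-2)^{n/2}+1$ from Lemma~\ref{lemma:cn}, writing everything in $N:=2^n$ and $s:=(-2)^{n/2}$ (so $s^2=N$ and $(-2)^{3n/2+4}=16s^3=16Ns$), and to simplify using the relation $A+6B+2C=2^n-2$, which records $\sum_{i,j}(i,j)_3=2^n-2$. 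After expanding and collecting terms one should recover
\begin{equation*}
N^3-\frac{144N^2-16Ns-192N+64}{27},
\end{equation*}
which is exactly the bracketed expression in the statement; multiplying by $\phi(2^n-1)/n$ finishes the even case.

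The main obstacle is precisely this even-case determinant: it is a dense $5\times5$ matrix in the three symbols $Z$, $B$, and $A+C$, for which a naive cofactor expansion is unwieldy. I would control it by exploiting the partial symmetry between $[L\u+\s]$ and $[L^2\u+\s]$ (which share the same neighbours $[\s]$ and $[\u]$, the asymmetry coming only from $\S\in[\u+\s]$), performing row and column operations to expose that structure, and keeping the answer in $N$ and $s$ throughout, using $s^2=N$ to reduce degrees so that the final collapse to the claimed closed form is transparent. The only other point needing care is the bookkeeping of which loops are removed before forming $\M$ and the consistency of each diagonal degree entry with the labels in Figure~\ref{fig2}.
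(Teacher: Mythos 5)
Your overall route is the same as the paper's: fix $p(x)$, invoke Theorem~\ref{BEST} to reduce the per-polynomial count to a cofactor of the matrix built from Figure~\ref{fig1} (odd $n$) or Figure~\ref{fig2} (even $n$), observe that the edge labels depend only on $n$ through $2^n$ and the cyclotomic numbers of Lemma~\ref{lemma:cn}, and multiply by $|\A_n|=\phi(2^n-1)/n$. Your odd-case cofactor is correct ($3\cdot 2^n-4$), your algebraic bookkeeping for the even case (the identity $A+6B+2C=2^n-2$, the substitution $s=(-2)^{n/2}$, $s^2=2^n$) is sound, and the paper itself gives no more detail on the $5\times 5$ determinant than you do.

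However, there is one genuine gap: the final multiplication by $\phi(2^n-1)/n$ is not justified by anything you wrote. The bijection between spanning trees and de Bruijn sequences (via the CJ method) holds \emph{for a fixed LFSR}; it tells you that each choice of $p(x)$ yields exactly $3\cdot 2^n-4$ (resp.\ the even-case cofactor) pairwise distinct sequences. But summing these counts over all $p(x)\in\A_n$ only gives the stated total if the families of de Bruijn sequences arising from distinct primitive polynomials are pairwise \emph{disjoint}. This is not automatic: a de Bruijn sequence of order $n+2$ is just a cyclic sequence, and a priori the same sequence could be obtainable by joining the cycles of $\Omega((x^2+x+1)p_1(x))$ and also those of $\Omega((x^2+x+1)p_2(x))$ for $p_1\neq p_2$; the fact that the two adjacency graphs are isomorphic makes the per-polynomial counts equal but says nothing about disjointness of the resulting sequence sets. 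The paper closes exactly this gap by citing~\cite[Theorem 5]{Jansen91}, which asserts that applying the cycle joining method to two distinct (L)FSRs produces distinct de Bruijn sequences (intuitively, the characteristic polynomial is recoverable from the constructed sequence because the CJ feedback function differs from the linear one in few positions). Without this ingredient, your argument yields only an upper bound on the total, so you should either cite such a result or prove the disjointness directly.
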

\begin{proof}
We count the number of spanning trees in the adjacency graph.

Let $n \geq 3$ be odd. Label the vertices $v_1=[\0]$, $v_2=[\u]$, $v_3=[\s]$, and $v_4=[\u+\s]$ to derive
\begin{equation*}
\setlength{\arraycolsep}{5pt}
\M=
\begin{pmatrix}
    1 & 0  & 0        & -1 \\
    0 & 3  & -1       & -2 \\
    0 & -1 & 2^n-1    & 2-2^n \\
    -1 & -2 & 2-2^n  & 2^n+1
\end{pmatrix}
\end{equation*}
with cofactor $\M(3,3)= 3 \cdot 2^n-4$.

Let $n \geq 4$ be even. Label the vertices $v_1=[\0]$, $v_2=[\u]$, $v_3=[\s]$, $v_4=[\u+\s]$, $v_5=[L \u + \s]$, and $v_6=[L^2 \u+\s]$ to derive
\begin{equation*}
\renewcommand{\arraystretch}{1.5}
\setlength{\arraycolsep}{6pt}
\M'=
\begin{pmatrix}
    1  & 0  & 0        & -1       & 0       & 0   \\
    0  & 3  & -1       & 0        & -1      & -1  \\
    0  & -1 & 2^n-1    & \frac{4-2^n}{3}    & \frac{1-2^n}{3} & \frac{1-2^n}{3} \\
   -1  & 0  & \frac{4-2^n}{3} & 4B+\frac{2^n-1}{3} & -2B   & -2B \\
    0  & -1 & \frac{1-2^n}{3} & -2B                & A+2B+C+\frac{2^n+2}{3} & -(A+C) \\
    0  & -1 & \frac{1-2^n}{3} & -2B                & -(A+C)    & A+2B+C+\frac{2^n+2}{3}
\end{pmatrix}
\end{equation*}
with cofactor $\displaystyle{\M'(5,5)=2^{3n}-\frac{1}{27} \left(9\cdot2^{2n+4}-(-2)^{\frac{3n}{2}+4}-3\cdot2^{n+6}+2^6\right)}$.

By~\cite[Theorem 5]{Jansen91}, applying the cycle joining method to two distinct LFSRs results in distinct de Bruijn sequences. Since there are $\sfrac{\phi(2^n-1)}{n}$ choices for the primitive polynomial $p(x)$ (see {\it e.g.}~\cite[page~70]{GG05}), the desired conclusion follows.
\qed
\end{proof}

Table~\ref{table:nplus2} provides the number for $3 \leq n \leq 10$ based on Theorem~\ref{thm:count}.

\begin{table}[h!]
\vspace{-0.5cm}
\caption{Number of de Bruijn sequences of order $n+2$ in Theorem~\ref{thm:count} for $3 \leq n \leq 10$.}
\label{table:nplus2}
\centering
\renewcommand{\arraystretch}{1.2}
\begin{tabular}{c|c|c|c|c|c|c|c|c}
\hline
$\deg(p(x))=n$  & $3$ & $4$ & $5$ & $6$ & $7$ & $8$ & $9$ & $10$ \\
Order $=n+2$     & $5$ & $6$ & $7$ & $8$ & $9$ & $10$ & $11$ & $12$ \\
\# per $p(x)$    & $20$ & $2,880$ & $92$ & $240,448$ & $380$ & $16,431,936$ & $1,532$ & $1,068,137,280$ \\
$|\A_n|$        & $2$ & $2$ & $6$ & $6$ & $18$ & $16$ & $48$ & $60$ \\
\hline
\end{tabular}
\end{table}

\begin{remark}
One can also derive Theorem~\ref{thm:count} by applying~\cite[Proposition 5]{Li16} on relevant
results in \cite{Li14-2}. The latter reference uses $(1+x^3)p(x)$ as the characteristic polynomial. Hence, proper modifications are needed before the count can be established. In our present work, properties of cyclotomic numbers play a crucial role in establishing the count directly.
\end{remark}

\begin{example}
Let $n=3$ and $p(x)=x^3+x+1$, making $f(x)=x^5+x^4+1$. This produces $20$ distinct $32$-periodic de Bruijn sequences.
\[
\Omega(f(x))=\{[\0],[\u=(110)],[\s=(0010111)],[\u + \s = (1111010~1001100~0100001)]\}.
\]
Cycles $[\0]$ and $[\u + \s]$ share the pair $(\vX_1=(00000),\widehat{\vX}_1)$.
Cycles $[\u]$ and $[\s]$ are adjacent with a shared pair $(\vX_2=(11011),\widehat{\vX}_2)$.
Cycles $[\u]$ and $[\u + \s]$ share $2$ conjugate pairs, namely
$(\vX_3=(10110),\widehat{\vX}_3)$ and $(\vX_4=(01101),\widehat{\vX}_4)$. To derive one de Bruijn sequence, select the spanning tree
\begin{figure}[h]
\vspace{-0.7cm}
\begin{tikzpicture}[auto,node distance=2.4cm,every loop/.style={},
                    thick,main node/.style={draw,font=\sffamily\bfseries}]

  \node[main node] (1) {$[\s]$};
  \node[main node] (2) [right of=1] {$[\u]$};
  \node[main node] (3) [right of=2] {$[\u+\s]$};
  \node[main node] (4) [right of=3] {$[\0]$};

  \path[every node/.style={font=\sffamily}]
    (1) edge node [above] {$\vX_2$} (2)
    (2) edge node [above] {$\vX_3$ or $\vX_4$} (3)
    (3) edge node [above] {$\vX_1$} (4);
\end{tikzpicture}
\centering
\label{fig3}.
\end{figure}

Applying the CJ method on $[\0]$ and $[\u+\s]$ using the conjugate pairs defined by
$\vX_1$ yields $(\colorbox[gray]{0.8}{1\underline{0000}}\underline{0}1111101010011000)$; on $[\u]$ and $[\s]$
using $\vX_2$ results in
$(\colorbox[gray]{0.8}{11011}10\colorbox[gray]{0.8}{01011}0)$. We now choose
$\vX_3$ to combine the two larger cycles to get the de Bruijn sequence
\[
(00000 11111 0101\colorbox[gray]{0.8}{00110}11100\colorbox[gray]{0.8}{10110}001)\]
whose feedback function is $h(x_0,x_1,x_2,x_3,x_4)=$
\begin{align*}
&x_0+x_4+(x_1+1)(x_2+1)(x_3+1)(x_4+1)+x_1(x_2+1)x_3x_4+(x_1+1)x_2x_3(x_4+1)=\\
&x_1x_2x_3x_4+x_1x_2x_4+x_1x_2+x_1x_3+x_1x_4+x_2x_4+x_3x_4+x_0+x_1+x_2+x_3+1.
\end{align*}
One can opt to use $\vX_4$ instead of $\vX_3$. The derivation is an easy exercise for the reader.
\end{example}

\begin{example}
Let $n=4$ and $p(x)=x^4+x+1$. Hence, $f(x)=x^6+x^5+x^4+x^3+1$, from which $2880$ distinct de Bruijn sequences with period $64$ can be constructed.
\begin{align*}
\Omega(f(x))= \{&[\0],[\u = (110)], [\s = (00010~01101~01111)],
[\u+\s = (11001~00000~11001)],\\
&[L \u+\s  = (10100~10110~00010)],[L^2 \u+\s = (01111~11011~10100)]\}.
\end{align*}
Cycles $[\0]$ and $[\u + \s]$ share the pair $(\vX_1=(000000),\widehat{\vX}_1)$.
Cycles $[\u]$ and $[\s]$ share the pair $(\vX_2 =(101101),\widehat{\vX}_2)$.
The pair $(\vX_3=(110110),\widehat{\vX}_3)$ is shared by $[\u]$ and $[L \u + \s]$. Cycles
$[\u]$ and $[L^2 \u+ \s]$ share the pair $(\vX_4=(011011),\widehat{\vX}_4)$. Finally,
the pair $(\vX_5=(100110),\widehat{\vX}_5)$ is between $[\s]$ and $[\u+ \s]$. To construct one de Bruijn sequence, use the spanning tree
\begin{figure}[h]
\begin{tikzpicture}[auto,node distance=1.3cm,every loop/.style={},
                    thick,main node/.style={draw,font=\sffamily\bfseries}]

  \node[main node] (1) {$[\0]$};
  \node[main node] (2) [below of=1] {$[\u+\s]$};
  \node[main node] (3) [right of=2] {$[\s]$};
  \node[main node] (4) [right of=3] {$[\u]$};
  \node[main node] (5) [right of=4] {$[L^{2}\u + \s]$};
  \node[main node] (6) [above of=4] {$[L\u + \s]$};

  \path[every node/.style={font=\sffamily}]
    (1) edge node [right] {$X_1$} (2)
    (2) edge node [above] {$X_5$} (3)
    (3) edge node [above] {$X_2$} (4)
    (4) edge node [above] {$X_4$} (5)
    (4) edge node [left]  {$X_3$} (6);
\end{tikzpicture}
\centering
\label{fig4}.
\end{figure}

Table~\ref{table:ex3} lists down the joined cycles using the pairs defined by,
in order, $\vX_1$, $\vX_5$, $\vX_2$, $\vX_4$, and $\vX_3$. The one
in the last row is de Bruijn with feedback function $h(x_0,x_1,x_2,x_3,x_4,x_5)=$
\begin{multline*}
x_0+x_3+x_4+x_5+(x_1+1)(x_2+1)(x_3+1)(x_4+1)(x_5+1)+(x_1+1)x_2x_3(x_4+1)x_5+\\
x_1(x_2+1)x_3x_4(x_5+1)+ x_1x_2(x_3+1)x_4x_5+(x_1+1)(x_2+1)x_3x_4(x_5+1).
\end{multline*}

\begin{table}[h]
\vspace{-0.6cm}
\caption{Applying the cycle joining method on the spanning tree.}
\label{table:ex3}
\centering
\renewcommand{\arraystretch}{1.2}
\begin{tabular}{cl}
\hline
Link & Resulting Cycle \\
\hline
$\vX_1$ & $(\underline{1}~\colorbox[gray]{0.8}{\underline{000~00}0}1~1001~1100)$ \\
$\vX_5$ & $(1000~\colorbox[gray]{0.8}{0001~10}10~1111~000\colorbox[gray]{0.8}{1~0011~0}011~100)$ \\
$\vX_2$ & $(1000~0\colorbox[gray]{0.8}{001~\underline{101}}\underline{1~01}01~1110~0010~0110~0111~00)$ \\
$\vX_4$ & $(1000~00\colorbox[gray]{0.8}{01~1011}1010~0011~1\colorbox[gray]{0.8}{111~011}0~1011~1100~0100~1100~1110~0)$ \\
$\vX_3$ & $(1000~0001~1011~1010~0011~11\colorbox[gray]{0.8}{11~0110}~0001~0101~0\colorbox[gray]{0.8}{010~110}1~0111~1000~1001~1001~1100)$ \\
\hline
\end{tabular}
\end{table}
\end{example}

\section{More General Characteristic Polynomials}\label{sec:genpoly}
This section briefly touches upon the construction of de Bruijn sequences based on LFSRs with characteristic polynomials other than those discussed above.

When the characteristic polynomial takes a certain form, the adjacency graph contains no loops (see, {\it e.g.},~\cite[Proposition~2]{Li14-2}). The same holds for a much larger class of polynomials. Since $1+x^3 = (1+x)(1+x+x^2)$,~\cite[Proposition~2]{Li14-2} is subsumed by the next result.
\begin{proposition}\label{prop:nopair}
Let $1+x,p_1(x),p_2(x),\ldots,p_s(x) \in \F_2[x]$ be pairwise distinct irreducible polynomials and $f(x)=(1+x)~\prod_{i=1}^{s} p_i(x)$.
The adjacency graph of $\Omega(f(x))$ contains no loops.
\end{proposition}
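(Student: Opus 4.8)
The plan is to show directly that no cycle in $\Omega(f(x))$ shares a conjugate pair with itself, since a loop in the adjacency graph is exactly such a self-shared pair. I would work with the ``basic rule'' stated at the start of Section~\ref{subsec:graph}: if $\S=(1,0,\ldots,0)$ lies in the cycle $[\a]$, then a cycle $[\c]$ has a loop if and only if $L^i\c+L^j\c=L^k\a$ for some integers $i,j,k$. Indeed, a self-loop means states $\v$ and $\widehat{\v}$ both occur in $[\c]$, say at positions $i$ and $j$; then the sum sequence $L^i\c+L^j\c$ has initial state $\v+\widehat{\v}=\S$, hence is the unique sequence in $\Omega(f(x))$ with that initial state, i.e.\ a shift of $\a$. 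Writing $t:=j-i$, this reads $\c+L^t\c=L^{k-i}\a$, and $t\not\equiv 0$ because $\a\neq\0$ (its cycle contains the nonzero state $\S$). Thus the whole statement reduces to proving that $\c+L^t\c$ can never be a shift of $\a$, for any $\c\in\Omega(f(x))$ and any $t$.

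The key is to exploit the distinguished factor $1+x$. Set $g(x):=\prod_{i=1}^s p_i(x)$, so that $f(x)=(1+x)\,g(x)$ with $\gcd(1+x,g(x))=1$. By Lemma~\ref{lemma1}, $\Omega(f(x))=\Omega(1+x)+\Omega(g(x))$ while $\Omega(1+x)\cap\Omega(g(x))=\Omega(1)=\{\0\}$, so the sum is direct. Since $\Omega(1+x)=\{\0,\1\}$ consists of the two constant sequences, each $\c$ decomposes uniquely as $\c=\c'+\c''$ with $\c'\in\{\0,\1\}$ and $\c''\in\Omega(g(x))$. Because $\c'$ is constant it is fixed by the shift, $L^t\c'=\c'$, so the constant component cancels:
\[
\c+L^t\c=(\c'+L^t\c')+(\c''+L^t\c'')=\c''+L^t\c''\in\Omega(g(x)),
\]
as $\Omega(g(x))$ is a shift-invariant $\F_2$-vector space. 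Hence the left-hand side always lies in $\Omega(g(x))$.

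It then remains to check that no shift of $\a$ lies in $\Omega(g(x))$; equivalently, since $\Omega(g(x))$ is shift-invariant, that $\a\notin\Omega(g(x))$. Here I would mirror the consecutive-zeros argument already used in Section~\ref{subsec:graph}: the state $\S=(1,0,\ldots,0)\in\F_2^{n}$ with $n=\deg f$ contains $n-1=\deg g$ consecutive $0$s. A nonzero sequence in $\Omega(g(x))$ has minimal polynomial of some degree $d\le \deg g$ and cannot contain $d$ consecutive $0$s, because the all-zero window is a fixed point of the state map, which is invertible as $g(0)=f(0)=1$; hence it cannot contain $\deg g\ge d$ consecutive $0$s either. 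Since $\a$ contains the state $\S$ yet $\a\neq\0$, we get $\a\notin\Omega(g(x))$. Combining this with the previous paragraph, $\c+L^t\c\in\Omega(g(x))$ can never equal a shift of $\a$, so $[\c]$ has no loop; as $\c$ is arbitrary, $\Omega(f(x))$ has no loops.

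The only genuinely delicate point is the reduction in the first paragraph: recognizing that a self-conjugate pair forces $\c+L^t\c$ to be a shift of the single $\S$-sequence $\a$, and that the role of the factor $1+x$ is precisely to annihilate the constant component of $\c$ under $\c\mapsto\c+L^t\c$. Everything afterward is short parity and consecutive-zeros bookkeeping, so I expect no serious obstacle; the main care needed is confirming $g(0)=1$ (guaranteeing the state map is invertible and that a zero window forces the zero sequence), which holds because $f(0)=1$ for a nonsingular LFSR and $f(0)=(1+0)\,g(0)=g(0)$.
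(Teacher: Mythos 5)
Your proof is correct and follows essentially the same route as the paper's: both exploit the fact that the constant ($1+x$) component of $\c$ is shift-invariant and hence cancels in $\c+L^t\c$, leaving a sequence in $\Omega\bigl(\prod_{i=1}^{s} p_i(x)\bigr)$ whose minimal polynomial has degree $<\deg f(x)$ and which therefore cannot contain $\S$. The only cosmetic difference is that you decompose an arbitrary cycle via the direct sum $\Omega(1+x)+\Omega\bigl(\prod_{i=1}^{s} p_i(x)\bigr)$, whereas the paper first reduces to cycles whose minimal polynomial is exactly $f(x)$; the core cancellation and consecutive-zeros arguments coincide.
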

\begin{proof}
Let $C$ be a cycle in $\Omega(f(x))$ that shares a conjugate pair with itself. Then the minimal polynomial of $C$ must be $f(x)$.
Hence, $C=[\1+ L^{i_1} \u_1+ \ldots + L^{i_{s-1}} \u_{s-1}+ \u_s]$ for some integers $i_1,i_2,\ldots,i_{s-1}$ with $p_i(x)$ being the minimal polynomial of $\u_i$ for $1 \leq i \leq s$. Thus, for some $\ell \in \Z$, we get $(\1+ L^{i_1} \u_1+ \cdots + L^{i_{s-1}} \u_{s-1}+ \u_s) +
L^{\ell} (\1+ L^{i_1} \u_1 + \cdots + L^{i_{s-1}} \u_{s-1}+ \u_s)
= L^{i'_1} \u'_1+ \cdots + L^{i'_{s-1}} \u'_{s-1} + L^{i'_s} \u'_s$, where the characteristic polynomial of $\u'_i$ is $p_i(x)$.
Now, the degree of the minimal polynomial of the resulting sequence must be $< \deg(f(x))$. Thus, it cannot contain $\S$.\qed
\end{proof}

Consider the characteristic polynomial $h(x)=(1+x)f(x)$ with $f(x)$ given in Lemma~\ref{lem:cycle-f}. The only nonzero sequence having $1+x$ as its characteristic polynomial is $\1$. The cycle structure of $\Omega(h(x))$ follows directly from Lemmas~\ref{lemma1} and~\ref{lem:cycle-f}.
\begin{lemma}\label{lem:cycle-g}
The cycle structure of $\Omega(h(x))$ is
\begin{align*}
&[\0]~\cup~[\1]~\cup~\bigcup_{i=0}^{t_1-1}[\u_i]~\cup~\bigcup_{i=0}^{t_1-1}[\1+\u_i]~\cup~\bigcup_{j=0}^{t_2-1}[\s_j] ~\cup~\bigcup_{j=0}^{t_2-1}[\1+\s_j]~\cup\\
&\left(\bigcup_{i=0}^{t_1-1}~\bigcup_{j=0}^{t_2-1}~\bigcup_{k=0}^{~\gcd(e_1,e_2)-1~}[L^k\u_i+\s_j]\right)~\cup~\left(\bigcup_{i=0}^{t_1-1}~\bigcup_{j=0}^{t_2-1}~\bigcup_{k=0}^{~\gcd(e_1,e_2)-1~}[\1+L^k\u_i+\s_j]\right).
\end{align*}
\end{lemma}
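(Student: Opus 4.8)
The plan is to realize $\Omega(h(x))$ as the sumset $\Omega(1+x)+\Omega(f(x))$ and then read off the cycle structure directly from Lemma~\ref{lem:cycle-f}. First I would note that $1+x$ is distinct from both $p(x)$ and $q(x)$, so $1+x$, $p(x)$, $q(x)$ are pairwise distinct irreducibles and $\lcm(1+x,f(x))=(1+x)\,f(x)=h(x)$. Part~2 of Lemma~\ref{lemma1} then gives $\Omega(h(x))=\Omega(1+x)+\Omega(f(x))$. Since the only sequences with characteristic polynomial $1+x$ are the two constant sequences, $\Omega(1+x)=\{\0,\1\}$, and hence
\[
\Omega(h(x))=\Omega(f(x))\,\cup\,\bigl(\1+\Omega(f(x))\bigr).
\]

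Next I would substitute the explicit structure of $\Omega(f(x))$ from~(\ref{equ:cycle-f}). Because $\1$ is fixed by the shift operator $L$, translation by $\1$ commutes with $L$, so $\1+[\mathbf{v}]=[\1+\mathbf{v}]$ for every sequence $\mathbf{v}$; in particular $\mathbf{v}\mapsto\1+\mathbf{v}$ sends a cycle to a cycle of the same length. Applying this to each of the four families $[\0]$, $[\u_i]$, $[\s_j]$, and $[L^k\u_i+\s_j]$ produces exactly the eight families in the statement: the original four together with their $\1$-translates $[\1]$, $[\1+\u_i]$, $[\1+\s_j]$, and $[\1+L^k\u_i+\s_j]$.

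It then remains to verify that these cycles are pairwise disjoint and exhaust $\Omega(h(x))$. The key observation is that $1+x\nmid f(x)$, so $\1\notin\Omega(f(x))$ by part~1 of Lemma~\ref{lemma1}. This forces $\Omega(f(x))\cap\bigl(\1+\Omega(f(x))\bigr)=\emptyset$: if some $\1+\mathbf{v}$ with $\mathbf{v}\in\Omega(f(x))$ lay in $\Omega(f(x))$, then $\1=(\1+\mathbf{v})+\mathbf{v}$ would lie in $\Omega(f(x))$, a contradiction. Hence the two halves share no state; within the first half the cycles are disjoint by Lemma~\ref{lem:cycle-f}, and within the second half they are disjoint because $\mathbf{v}\mapsto\1+\mathbf{v}$ is a bijection carrying distinct cycles to distinct cycles. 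A cardinality tally confirms completeness: $|\Omega(f(x))|=2^{m+n}$, so the disjoint union has $2^{m+n+1}=2^{\deg h}=|\Omega(h(x))|$ states.

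I expect the only genuine obstacle to be the disjointness check, and it collapses at once to the single fact $\1\notin\Omega(f(x))$; everything else is a mechanical transport of Lemma~\ref{lem:cycle-f} through the cycle-preserving involution $\mathbf{v}\mapsto\1+\mathbf{v}$. In particular, no case analysis comparing one $\1$-translated family against a non-translated member of another family is needed, since the set-level disjointness of the two halves subsumes all such cross comparisons simultaneously.
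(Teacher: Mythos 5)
Your proof is correct and takes essentially the same route as the paper, which gives no written proof at all but simply asserts that the lemma ``follows directly from Lemmas~\ref{lemma1} and~\ref{lem:cycle-f}'' --- precisely the argument you spell out, namely $\Omega(h(x))=\Omega(1+x)+\Omega(f(x))=\Omega(f(x))\cup\bigl(\1+\Omega(f(x))\bigr)$ followed by transporting the cycles of Lemma~\ref{lem:cycle-f} through the shift-commuting translation $\mathbf{v}\mapsto\1+\mathbf{v}$. Your disjointness check via $\1\notin\Omega(f(x))$ and the cardinality tally are a sound filling-in of the details the paper leaves implicit.
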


Any cycle in $\Omega(h(x))$ can be described as $[a_0 \1 + a_1 L^k\u_i + a_2 \s_j]$ with $i,j,k\in\Z$
and $a_0,a_1,a_2\in\F_2$. Lemma~\ref{lem:cycle-g} leads us directly to the next result.
\begin{proposition}\label{prop:pair}
If $[a_0 \1 + a_1 L^k\u_i + a_2 \s_j]$ and $[a'_0 \1 + a'_1 L^{k'} \u_{i'} + a'_2 \s_{j'}]$
share a conjugate pair, then $a_0+a'_0=1$ and, for $i \in \{1,2\}$, $a_i$ and $a'_i$
must never be simultaneously $0$.
\end{proposition}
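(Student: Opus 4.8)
The plan is to pin down the cycle that contains the special state $\S=(1,0,\ldots,0)\in\F_2^{m+n+1}$ and then exploit the unique decomposition of $\Omega(h(x))$ coming from the Chinese Remainder Theorem. First I would argue that $\S$ lies in a cycle whose minimal polynomial is the full product $h(x)=(1+x)p(x)q(x)$. The state $\S$ exhibits $m+n$ consecutive zeros, and a standard LFSR fact says that a nonzero sequence whose minimal polynomial has degree $d$ cannot display $d$ consecutive zeros in any of its states, for otherwise the linear recurrence would force the sequence to vanish identically. Every proper divisor of $h(x)$ has degree at most $m+n$, so the cycle through $\S$ must have minimal polynomial exactly $h(x)$. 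By Lemma~\ref{lem:cycle-g} this cycle is $[\1+L^c\u_a+\s_b]$ for some $a,b,c$; in particular the three coefficients attached, respectively, to $\1$, to the $p$-part, and to the $q$-part are all equal to $1$.

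Next I would invoke the adjacency criterion from Section~\ref{subsec:graph}: two cycles share a conjugate pair precisely when a suitable shift of a representative of one, added to a suitable shift of a representative of the other, equals a shift of the cycle containing $\S$. Writing $C_1=[a_0\1+a_1L^k\u_i+a_2\s_j]$ and $C_2=[a'_0\1+a'_1L^{k'}\u_{i'}+a'_2\s_{j'}]$, their shift-sum therefore has to be a shift of $\1+L^c\u_a+\s_b$, hence a sequence whose minimal polynomial is $h(x)$.

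The final step is a component count. Since $1+x$, $p(x)$, and $q(x)$ are pairwise coprime, Lemma~\ref{lemma1} yields a decomposition of $\Omega(h(x))$ in which every sequence splits uniquely into an $\1$-part, a $p$-part, and a $q$-part, and the minimal polynomial of a sequence is the product of the minimal polynomials of its nonzero parts. The shift-sum of the two representatives splits as $(a_0+a'_0)\1$ together with $a_1L^k\u_i+a'_1L^{k'}\u_{i'}$ (the $p$-part) and $a_2\s_j+a'_2\s_{j'}$ (the $q$-part). For this sum to have minimal polynomial $h(x)$, each of the three parts must be nonzero: the $\1$-part being nonzero forces $a_0+a'_0=1$; the $p$-part being nonzero forces $a_1$ and $a'_1$ to not both vanish; and the $q$-part being nonzero forces $a_2$ and $a'_2$ to not both vanish. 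This is exactly the asserted constraint.

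I expect the only genuine obstacle to be the first step, namely certifying that $\S$ sits in a cycle of full minimal polynomial $h(x)$; the consecutive-zeros argument is standard but is precisely what makes the decomposition argument go through, since it guarantees that the target of the shift-sum has all three components active. The remaining steps are bookkeeping once the splitting of $\Omega(h(x))$ into the coprime pieces is in place.
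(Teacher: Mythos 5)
Your proposal is correct and takes essentially the approach the paper intends: the paper gives no explicit proof, asserting that Proposition~\ref{prop:pair} follows directly from Lemma~\ref{lem:cycle-g}, and your argument---locating $\S$ in a cycle with full minimal polynomial $h(x)$ via the consecutive-zeros fact, then forcing each of the $\1$-, $p$-, and $q$-components of the shift-sum to be nonzero by the coprime decomposition---is exactly the verification the paper leaves implicit (compare the proof of Proposition~\ref{prop:nopair}, which uses the same "cannot contain $\S$" reasoning).
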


Combining the main results in Section~\ref{sec:main} with Propositions~\ref{prop:nopair}
and~\ref{prop:pair}, the adjacency graph of $\Omega(h(x))$ can be constructed.
\begin{proposition}\label{prop:adjg}
$\Omega(h(x))$ has the following properties.
\begin{enumerate}
\item The adjacency graph of $\Omega(h(x))$ contains no loops.
\item The number of conjugate pairs between $[a_1 L^k\u_i+a_2\s_j]$ and $[\1+a'_1 L^{k'}\u_{i'}+a'_2 \s_{j'}]$
is equal to the number of conjugate pairs between $[\1+a_1 L^k\u_i+a_2 \s_j]$ and $[a'_1 L^{k'}\u_{i'}+ a'_2 \s_{j'}]$.
\item Let ${\S} \in [\1+ L^c\u_a + \s_b]$ for some $a,b,c \in \Z$. Then $[\1 + a_1 L^k \u_i + a_2 \s_j]$ and
$[a'_1 L^{k'}\u_{i'} + a'_2 \s_{j'}]$ share a conjugate pair if and only if
$
\1+a_1L^k\u_i+a_2\s_j+L^{\ell}(a'_1L^{k'}\u_{i'}+a'_2\s_{j'})$
is a shift of $\1+L^c\u_a+\s_b$ for some $\ell$ or, equivalently,
$
a_1 L^k\u_i + a_2 \s_j + L^{\ell} (a'_1 L^{k'} \u_{i'} + a'_2 \s_{j'})$
is a shift of $(0,\1) \in L^c \u_a + \s_b$ with $\1$ of length $m+n$.
\item The number of conjugate pairs between any two cycles in $\Omega(h(x))$ can be
determined completely based on Propositions \ref{prop1} to~\ref{prop5} and Algorithm~\ref{algo:b}
after small modifications.
\end{enumerate}
\end{proposition}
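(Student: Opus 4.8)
The plan is to dispatch the four items in order, letting the bit-complementation involution $\v \mapsto \v + \1$ carry the conceptual weight of Items~2 and~3 and treating the final item as an assembly of the earlier results.

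Item~1 is immediate: since $h(x) = (1+x)\,p(x)\,q(x)$ is a product of the pairwise distinct irreducible factors $1+x$, $p(x)$, and $q(x)$, Proposition~\ref{prop:nopair} applies verbatim, so the adjacency graph of $\Omega(h(x))$ contains no loops. For Items~2 and~3 the key observation is that adding the all-ones sequence $\1$ (of period $1$) to every sequence in a cycle is an involution on $\Omega(h(x))$ that interchanges $[a_0\1 + a_1 L^k\u_i + a_2\s_j]$ with $[(a_0+1)\1 + a_1 L^k\u_i + a_2\s_j]$, because $L\1 = \1$ makes complementation commute with the shift operator. First I would record that conjugation and complementation commute at the level of states: writing $\widehat{\v}$ for the conjugate of $\v$, we have $\widehat{\v + \1} = \widehat{\v} + \1$, since complementing all coordinates and then flipping the first has the same net effect as flipping the first and then complementing all. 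Consequently $(\v, \widehat{\v}) \mapsto (\v + \1, \widehat{\v} + \1)$ is an involutive bijection between the conjugate pairs joining $[a_1 L^k\u_i + a_2\s_j]$ to $[\1 + a_1' L^{k'}\u_{i'} + a_2'\s_{j'}]$ and those joining $[\1 + a_1 L^k\u_i + a_2\s_j]$ to $[a_1' L^{k'}\u_{i'} + a_2'\s_{j'}]$, which is exactly Item~2. For Item~3 I would first note, by the linear-complexity argument used for $\S$ in Section~\ref{subsec:graph}, that $\S$ can occur only in a cycle whose minimal polynomial is $h(x)$, hence $\S \in [\1 + L^c\u_a + \s_b]$ for some $a,b,c$; the stated characterization is then the basic rule of Section~\ref{subsec:graph} applied with the target cycle $\1 + L^c\u_a + \s_b$, and the equivalent reformulation follows upon adding $\1$ to both sides and using $\S + \1 = (0,\1)$.

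Finally, Item~4 is the assembly. Proposition~\ref{prop:pair} restricts attention to pairs in which exactly one cycle carries the $\1$ term and in which, on each of the $\u$- and $\s$-components, the two cycles are not simultaneously trivial; Item~2 then lets me always place the $\1$ term on a fixed side without changing the count. After this reduction, Item~3 recasts both the existence and the number of conjugate pairs as a question about how many shifts of a sum of $\u$- and $\s$-type sequences coincide with a shift of $L^c\u_a + \s_b$ carrying the reference state $(0,\1)$. This is precisely the counting problem solved in Propositions~\ref{prop1}--\ref{prop5} and executed by Algorithm~\ref{algo:b} for $\Omega(f(x))$, the only changes being that the special state $\S = (1,0,\ldots,0)$ is replaced by $(0,\1)$ and that the $\1$-bookkeeping supplied by Proposition~\ref{prop:pair} is carried along. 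The main obstacle I anticipate is justifying that this replacement leaves the counts intact despite the state length growing from $m+n$ to $m+n+1$: one must verify that the cyclotomic-number tallies underlying Lemmas~\ref{lem:saa} and~\ref{lem:irr-saa} depend only on the underlying sequences $\u_i, \s_j$ and on which shift of $L^c\u_a + \s_b$ is targeted, and not on the window length, so that enlarging the state by one coordinate merely relabels the reference state and does not change the number of qualifying shifts.
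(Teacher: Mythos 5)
Your proposal is correct and follows the route the paper intends: Item~1 from Proposition~\ref{prop:nopair}, Items~2 and~3 from the complementation symmetry $\v \mapsto \v+\1$ (valid because $(1+x)\mid h(x)$ makes $\Omega(h(x))$ closed under adding $\1$, and $L\1=\1$ makes this commute with shifts and with conjugation) together with the basic rule of Section~\ref{subsec:graph}, and Item~4 by reducing to Propositions~\ref{prop1}--\ref{prop5} and Algorithm~\ref{algo:b} with the reference state $(0,\1)$ in place of $\S$. The paper declares these details ``straightforward'' and omits them, so your writeup simply supplies what is left implicit, including the involution argument for Item~2 and the observation that the counts are sequence-level quantities unaffected by enlarging the state window from $m+n$ to $m+n+1$.
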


Propositions \ref{prop1} to~\ref{prop5} form a good foundation to study and derive the adjacency graph for $\Omega(h(x))$. Since the required modifications mentioned in the last item of Proposition~\ref{prop:adjg} are straightforward, the details are omitted here.

Once we have determined how the conjugate pairs are shared, we can perform steps analogous to those detailed in Section~\ref{sec:main} to determine the states
in a given cycle, to find conjugate pairs between any two cycles, and to estimate
the number of de Bruijn sequences constructed. The non existence of loops in the adjacency graph is an advantage.

In particular, we can construct a $(1+m+n)\times (1+m+n)$ matrix $P'$. Any state belonging to
the cycles in $\Omega(h(x))$ can then be described as $(\v_1,\v_2,\v_3)P'$ with
$\v_1 \in\F_2^1$, $\v_2 \in\F_2^m$, and $\v_3 \in\F_2^n$.
Let $\S=(1,\a_3,\b_3)P'$ and let $(1,\a_1,\b_1)P'$ and $(0,\a_2,\b_2)P'$ be the respective states
of cycles $C_1$ and $C_2$ in $\Omega(h(x))$. Run Algorithm~\ref{algo:b} on $(\a_1,\b_1)$, $(\a_2,\b_2)$, and $(\a_3,\b_3)$. If Algorithm 1 yields a conjugate pair $(\v,\hat{\v})$ with $\v =(\a',\b')P \in\F_2^{m+n}$, then $(\v'=(1,\a',\b')P',\hat{\v'})$ is the conjugate pair between $C_1$ and $C_2$.

To conclude this section, we show some advantages of our more general approach of using product of
irreducible polynomials over that of~\cite{Li16} which is limited to using primitive polynomials.

Let $\I_{2}(n)$ and $\P_{2}(n)$ denote, respectively, the number of irreducible and primitive polynomials of degree $n$ in $\F_2[x]$. We know that $\P_{2}(n)=|\A_n|=\sfrac{\phi(2^n-1)}{n}$. Let $\mu(n)$ be the M{\"o}bius function. From Gauss' general formula~\cite[Theorem 3.25]{LN97}
\[
\I_{2}(n)=\frac{1}{n} \sum_{d \mid n} \mu(d) 2^{\frac{n}{d}}.
\]
Let $N_n=\I_{2}(n)- \P_{2}(n)$. Consulting Sequences A001037 and A011260 in~\cite{OEIS} that list down
$\I_{2}(n)$ and $\P_{2}(n)$, respectively, one gets
\begin{equation*}
\setlength{\arraycolsep}{3pt}
\begin{array}{c| ccccc ccccc ccccc cccc}
n   & ~4 & 5  & 6  & 7 & 8  & 9 & 10 & 11 & 12 & 13  & 14 & 15 & 16 & 17 & 18 & 19 & 20 & 21 \\
N_n & ~1 & 0  & 3  & 0 & 14 & 8 & 39 & 10 & 191 & 0 & 405 & 382 & 2032 & 0 & 6756 & 0 & 28377 & 15186
\end{array}.
\end{equation*}
$N_n=0$ if and only if $2^n-1$ is a (Mersenne) prime. As $n$ grows larger, primes of the form $2^n-1$ appear to grow increasingly sparse. Hence, for most $n$, our method draws polynomials from a larger pool of choices than the one used in~\cite{Li16}.

Another advantage is that we get more de Bruijn sequences by using irreducible but non-primitive polynomial. The three irreducible polynomials of degree $4$ in $\F_2[x]$ are $f_1(x)=x^4+x+1$, $f_2(x)=x^4+x^3+1$, and $f_3(x)=x^4+x^3+x^2+x+1$. The first two are primitive while $f_3(x)$ is not.

Let $\Omega((1+x) f_1(x))=\{[\0],[\1],[\s],[\s+\1]\}$. Then $[\0]$ and $[\s+\1]$ as well as $[\1]$ and
$[\s]$ each share a unique conjugate pair. There are $14$ conjugate pairs shared by $[\s]$ and $[\s+\1]$. The number of sequences constructed is only $14$.

Consider $\Omega((1+x) f_3(x))=\{[\0],[\1],[\s_0],[\s_1],[\s_2],[\s_0+\1],[\s_1+\1],[\s_2+\1]\}$.
Lemma~\ref{lemma:cn} helps us compute the number of conjugate pairs shared by any two cycles. The adjacency graph is shown in Figure~\ref{figu}. By Theorem~\ref{BEST}, the number of de Bruijn sequences constructed is $576$.
\begin{figure}[h]
\vspace{-0.6cm}
\begin{tikzpicture}[auto, node distance=2.3cm, every loop/.style={},
                    thick,main node/.style={draw,font=\sffamily\bfseries}]
  \node[main node] (1) {$[\0]$};
  \node[main node] (2) [below of=1] {$[\1]$};
  \node[main node] (3) [right of=1] {$[\s_1]$};
  \node[main node] (4) [below of=3] {$[\s_1+\1]$};
  \node[main node] (5) [right of=3] {$[\s_0]$};
  \node[main node] (6) [below of=5] {$[\s_0+\1]$};
  \node[main node] (7) [right of=5] {$[\s_2]$};
  \node[main node] (8) [below of=7] {$[\s_2+\1]$};

  \path[every node/.style={font=\sffamily}]
    (1) edge [bend right] node [above] {$1$} (4)
    (2) edge node [right] {$1$} (3)
    (3) edge [bend right] node [right] {$2$} (6)
        edge node [above] {$2$} (8)
    (4) edge [bend left] node [right] {$2$} (5)
        edge node [below] {$2~~$} (7)
    (5) edge [bend right] node [left] {$2$} (6)
        edge [bend left] node [right] {$1$} (8)
    (6) edge node [right] {$1$} (7)
    (7) edge [bend left] node [right] {$2$} (8);
\end{tikzpicture}
\centering
\caption{The adjacency graph of $\Omega((1+x)(x^4+x^3+x^2+x+1))$.}
\label{figu}
\end{figure}
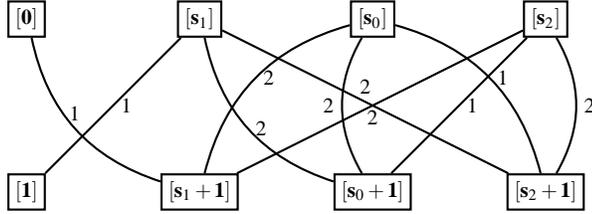
\section{Conclusion and Future Directions}\label{sec:conclu}
This paper constructs new de Bruijn sequences by the cycle joining method using products of two distinct
irreducible polynomials as characteristic polynomials. We present results on the cycle structure, provide the corresponding adjacency graph, and exhibit a connection between relevant cyclotomic numbers and the new de Bruijn sequences. Many of the results naturally extend to the case where  $f(x)= p_1(x)\cdots p_s (x)$, where $p_i(x) \in \F_2[x]$ are pairwise distinct irreducible polynomials for $1 \leq i \leq s$.

Possible applications of de Bruijn sequences merit deeper investigation. The large number of de Bruijn sequences that can be efficiently constructed based on specific choices of polynomials may be beneficial for implementations in spread spectrum, more specifically in the design of control systems for autonomous vehicles. Crucial aspects to look at in this direction are the auto and cross correlation properties of the sequences as discussed in~\cite{SG13}. Various modifications of de Bruijn sequences have been known to result in powerful tools in DNA analysis~\cite{PTW01} and DNA-based data storage systems~\cite{CCEK16}.

\begin{acknowledgements}
The work of Z.~Chang is supported by the Joint Fund of the National Natural Science Foundation of China under Grant U1304604. Research Grants TL-9014101684-01 and MOE2013-T2-1-041 support the research carried out by M.~F.~Ezerman, S.~Ling, and H.~Wang. The collaboration leading to this paper was performed while Z.~Chang was a visiting scholar at the Division of Mathematical Sciences, School of Physical and Mathematical Sciences, Nanyang Technological University.
\end{acknowledgements}

\end{document}